\documentclass[11pt]{article}
\usepackage[fleqn]{amsmath}
\usepackage{amssymb,amstext,mathrsfs}
\usepackage{bm}
\usepackage{rotating}
\usepackage{epstopdf}
\usepackage{booktabs}
\usepackage{amsthm}
\usepackage{float}
\usepackage{caption}
\usepackage{array}
\usepackage{fancyhdr}
\usepackage{graphicx}
\usepackage{tabularx}
\usepackage{multicol}
\usepackage{multirow}
\usepackage{longtable}
\usepackage{threeparttable}
\usepackage{titlesec}
\usepackage{titletoc}
\usepackage{indentfirst}
\usepackage{color}
\usepackage{subcaption}
\usepackage{natbib}
\usepackage{url}
\usepackage{stmaryrd}
\usepackage{verbatim}
\usepackage{accents}
\usepackage{statex}
\usepackage{psfrag,epsf}
\usepackage{enumerate}
\usepackage{enumitem}
\usepackage{setspace}
\usepackage{xcolor}
\usepackage{hyperref}
\usepackage{algorithm}
\usepackage{algorithmic}

\makeatletter
\def\widebar{\accentset{{\cc@style\underline{\mskip10mu}}}}
\def\Widebar{\accentset{{\cc@style\underline{\mskip8mu}}}}
\makeatother

\newtheorem{theorem}{Theorem}[section]

\newtheorem{corollary}{Corollary}[section]
\newtheorem{lemma}{Lemma}[section]
\newtheorem{definition}{Definition}[section]
\newtheorem{remark}{Remark}[section]
\newtheorem{assumption}{Assumption}[section]

\makeatletter

\newcommand{\Rmnum}[1]{\expandafter\@slowromancap\romannumeral #1@}
\makeatother

\hypersetup{CJKbookmarks,
	bookmarksnumbered,
	colorlinks,
	linkcolor=blue,
	citecolor=blue,
	plainpages=false,
	pdfstartview=FitH}

\allowdisplaybreaks[4]

\begin{document}
\title{\bf Testing for Serial Independence via Auto Hilbert-Schmidt Independence Criterion}
\author{Muyi Li$^{1,2},$
Yuqing Xu$^{2}$, Zhou Zhou$^{3}$ \vspace{10pt}\\\\
\small{
$^{1}$ Wang Yanan Institute for Studies in Economics (WISE), Xiamen University}\\
\small{
$^{2}$ Department of Statistics and Data Science, School of Economics, Xiamen University}\\
\small{$^{3}$ Department of Statistical Sciences, University of Toronto}
}
\date{}
\maketitle
\setstretch{1.25}
\begin{abstract}
We develop a Hilbert--Schmidt independence criterion (HSIC)-based framework for testing serial independence in strictly stationary time series. The proposed auto Hilbert--Schmidt independence criterion (AutoHSIC) measures dependence between an observation and its lagged counterpart, providing a kernel-based approach to detecting nonlinear serial dependence. The empirical AutoHSIC statistic is a lagged U-statistic constructed from overlapping observations, and hence inherits temporal dependence even under the i.i.d. null. Its asymptotic analysis therefore differs from standard i.i.d. HSIC theory and must account for degeneracy under the null. We establish the limiting behaviour of the resulting single-lag and portmanteau tests under the null and under fixed alternatives. Since the limiting null distribution is non-pivotal, we develop a wild bootstrap procedure for critical value approximation and prove its asymptotic validity. The framework is further extended to residual-based model diagnostics, where parameter estimation affects the null distribution. Simulations and empirical applications illustrate its ability to detect nonlinear serial dependence in multivariate, functional and matrix time series.
\\
\\
	\noindent
	{\bf Keywords:} Hilbert--Schmidt Independence Criterion; Serial Independence; Residual Diagnostics; U-statistics for Dependent Data; Wild Bootstrap.
\end{abstract}

\newpage
\section{Introduction}\label{sec1}

Testing serial dependence is a fundamental problem in time series analysis. Before modelling, dependence tests help determine whether a stochastic dynamic model is needed. After modelling, they provide diagnostic tools for assessing whether the fitted model has adequately captured the serial structure of the data. Classical diagnostics for serial uncorrelatedness include time-domain
portmanteau tests based on sample autocorrelations, such as the Box--Pierce and Ljung--Box tests \citep{box1970, ljung1978}, and spectral-domain tests for
serial correlation \citep{hong1996}. However, uncorrelation does not characterise general independence. A time series may be serially uncorrelated while still exhibiting nonlinear dependence, as in many ARCH-type processes \citep{engle1982arch, bollerslev1986garch}. This limitation has motivated a large literature on nonlinear dependence testing.

Several approaches have been developed for this purpose. Rank-based measures, such as Spearman's $\rho$ and Kendall's $\tau$, are useful for monotone dependence but are not designed to detect all alternatives. 
Information theoretic approaches offer another route to nonlinear serial dependence testing. Mutual information characterises independence in principle \citep{shannon1948mathematical}, while entropy-based tests have been developed for general linear dependence \citep{giannerini2015entropy}. These methods typically require density estimation, optimisation, or bootstrap calibration.

Generalised spectral methods \citep{hong1999hypothesis} characterise serial dependence through Fourier transforms of generalised covariance functions, although their extension to high-dimensional settings is challenging. Distance correlation \citep{szekely2007} provides a prominent distance-based measure of nonlinear dependence by comparing the joint characteristic function with the product of the marginals. For time series, \citet{zhou2012measuring} introduced the auto distance correlation function, while \citet{davis2018applications} developed a systematic theory for empirical auto- and cross-distance correlation functions of stationary time series under ergodicity and strong mixing. Subsequent work has developed related tests for serial independence, goodness of fit assessment, and object-valued or metric space data \citep{fokianos2017, wan2022goodness, jiang2024testing}.

Kernel methods provide a complementary and highly flexible route to nonlinear dependence testing. The Hilbert--Schmidt independence criterion (HSIC), introduced by \citet{gretton2005}, measures the discrepancy between the embedding of a joint distribution and the tensor product of its marginal embeddings in reproducing kernel Hilbert spaces. With characteristic kernels, HSIC is zero if and only if the two variables are independent \citep{Fukumizu2008characteristic}. This property gives HSIC a clear population interpretation as a dependence measure, while its empirical version has a simple U-statistic form and avoids explicit density estimation or optimisation. The choice of kernel also makes HSIC adaptable to different data structures. For Euclidean data, Gaussian and Laplace kernels provide flexible nonlinear features; for non-Euclidean objects, suitable positive definite kernels allow dependence to be measured without reducing the data to a small number of coordinates. Moreover, bounded kernels can substantially weaken moment requirements, which is useful for heavy-tailed data. The connection between HSIC and distance covariance established by \citet{sejdinovic2013} shows that kernel-based and distance-based dependence measures share a common RKHS foundation, providing a unified framework for constructing dependence measures for complex time series objects.

Despite these advantages, HSIC has not been fully developed as an inferential tool for testing serial independence within a single time series. The original HSIC theory was mainly developed for independent observations \citep{gretton2005, gretton2007}. Existing extensions of kernel dependence methods address related but different problems. For example, \citet{fukumizu2007kernel} developed RKHS-based measures of conditional dependence, and \citet{zhang2011kernel} proposed kernel methods for conditional independence testing. \citet{chwialkowski2014kernel} considered independence testing between two stochastic processes, allowing temporal dependence within each process. \citet{pfister2018kernel} developed kernel tests for joint independence. More recent work has studied conditional mean independence \citep{lai2021kernel}, independence testing between two stationary time series through model-based residuals \citep{wang2021new}, high-dimensional independent data \citep{zhang2023statistical}, and factor models \citep{xu2024hsic}. 
Recently, \citet{Ghoshal2026resampling} developed a resampling-free RKHS embedding approach for several time series inference problems, including goodness-of-fit, change point testing, and independence testing between two time series. These contributions demonstrate the broad usefulness of RKHS methods, but they do not directly address serial independence testing within a single process.

The distinction is important. A natural HSIC-based serial dependence statistic is constructed from lagged pairs $(X_t,X_{t-m})$. These pairs overlap, and the resulting U-statistic inherits temporal dependence even under the i.i.d. null. Moreover, the centred HSIC kernel is degenerate under independence. Hence standard i.i.d. HSIC theory is not directly applicable: valid inference requires asymptotic arguments for degenerate U-statistics under temporal dependence, and the limiting null distribution is non-pivotal.

This paper develops an HSIC-based framework for testing serial independence in strictly stationary time series. For a given lag $m$, we define the auto Hilbert--Schmidt independence criterion (AutoHSIC) to measure dependence between $X_t$ and $X_{t-m}$. Based on this measure, we construct single-lag and portmanteau tests for serial independence. The framework is designed to complement classical autocorrelation-based diagnostics by targeting nonlinear forms of lagged dependence.

The contributions of the paper are fourfold. First, we formulate AutoHSIC as a lag-specific kernel measure of serial dependence for a strictly stationary process, and develop single-lag and portmanteau statistics for testing serial independence.
Second, we establish the asymptotic behaviour of the resulting U-statistics, explicitly accounting for overlapping lagged pairs and degeneracy under the null. Third, since the limiting null distribution is nonstandard and non-pivotal, we develop a wild bootstrap procedure that consistently approximates the null distribution and yields feasible critical values for the proposed tests. Fourth, we extend the framework to residual diagnostics for fitted time series models, and show how parameter estimation affects the asymptotic behaviour of the test statistics.

The rest of the paper is organised as follows. Section~\ref{sec2_def_ahsic} reviews HSIC and introduces AutoHSIC. Section~\ref{sec3_test} develops the single-lag and portmanteau tests and establishes their asymptotic properties. Section~\ref{sec_wildboot} presents the wild bootstrap. Section~\ref{sec4_simulation} reports the simulation study. Section~\ref{sec_modelcheck} develops the residual-based diagnostic extension. Section~\ref{sec_realdata} gives the empirical applications. Section~\ref{sec_conclusion} concludes. Proofs and additional numerical results are given in the supplementary material.

\section{HSIC and AutoHSIC}\label{sec2_def_ahsic}

We use the following notation. Let $\mathbb{R}=(-\infty,\infty)$ and let $\mathbb{Z}^+$ be the set of positive integers. For a vector or matrix $A$, $A^{\top}$ denotes its transpose and $\Vert A\Vert$ its Euclidean or Frobenius norm. Let $I_d$ be the $d\times d$ identity matrix and, for a function $h(x,y)$, let $\partial_xh$ denote the partial derivative with respect to $x$. 
$o_p(1)$ and $O_p(1)$ denote convergence to zero in probability and boundedness in probability, respectively, while $\xrightarrow{\mathrm{p}}$ and $\xrightarrow{\mathrm{d}}$ denote convergence in probability and in distribution. 
Let $\mathcal{H}$ be a Hilbert space of real-valued functions on a metric space $\mathcal{X}$, with inner product $\langle\cdot,\cdot\rangle_{\mathcal H}$ and norm $\Vert \cdot\Vert _{\mathcal H}$. 
Let $L^2[0,1]$ denote the Hilbert space of square-integrable functions on $[0,1]$.

\subsection{The Hilbert--Schmidt Independence Criterion}

We briefly review the Hilbert--Schmidt independence criterion (HSIC), retaining only the elements needed for the construction of AutoHSIC; see \cite{gretton2005} and \cite{sejdinovic2013} for further details.

Let $k$ and $l$ be symmetric positive-definite kernels on metric spaces $\mathcal{X}$ and $\mathcal{Y}$, with associated reproducing kernel Hilbert spaces (RKHSs) $\mathcal{H}_k$ and $\mathcal{H}_l$. For random objects $(X,Y)\sim P_{XY}$ with marginal distributions $P_X$ and $P_Y$, define the product kernel
\( \tilde{k}\big((x,y),(x',y')\big)=k(x,x')l(y,y') \).
The HSIC between $X$ and $Y$ is defined by
\begin{equation}\nonumber
\operatorname{HSIC}(X,Y)
=
\Vert  \mu_{\tilde{k}}(P_{XY}) - \mu_{\tilde{k}}(P_XP_Y) \Vert_{\mathcal{H}_{\tilde{k}}}^{2},
\end{equation}
where $\mu_{\tilde{k}}(\cdot)$ denotes the kernel mean embedding in the RKHS induced by $\tilde{k}$. Equivalently,
\begin{equation}\label{hsic_centered}
\operatorname{HSIC}(X,Y)
=
\mathbb{E}\{d_k(X,X')d_l(Y,Y')\},
\end{equation}
where $(X',Y')$ is an independent copy of $(X,Y)$, and
\begin{align}
d_k(x,x')
&=
k(x,x')
-\mathbb{E}\{k(x,X')\}
-\mathbb{E}\{k(X,x')\}
+\mathbb{E}\{k(X,X')\}, \label{d_k}\\
d_l(y,y')
&=
l(y,y')
-\mathbb{E}\{l(y,Y')\}
-\mathbb{E}\{l(Y,y')\}
+\mathbb{E}\{l(Y,Y')\}. \label{d_l}
\end{align}

For a kernel $k$, write
\begin{equation}\label{Mk_alpha}
\mathcal{M}_k^\alpha(\mathcal{X})
=
\left\{
P:\int k^\alpha(x,x)\, \mathrm{d}P(x)<\infty
\right\}.
\end{equation}
If $P_X\in\mathcal{M}_k^1(\mathcal{X})$, $P_Y\in\mathcal{M}_l^1(\mathcal{Y})$, and the RKHSs $\mathcal{H}_k$ and $\mathcal{H}_l$ are characteristic, then $\operatorname{HSIC}(X,Y)=0$ if and only if $P_{XY}=P_XP_Y$. Thus, under characteristic kernels, HSIC characterises independence.

\begin{remark}[Choice of kernel]\label{remark_kernel}
The choice of kernel affects both the sensitivity of HSIC and the moment conditions for inference. Bounded characteristic kernels, such as the Gaussian, Laplacian and Cauchy kernels, automatically satisfy the kernel moment conditions used below and hence do not require moment assumptions directly on the data distribution.
This is useful for heavy-tailed time series. By contrast, distance-induced kernels, such as the Brownian distance kernel
\(
k(x,x') = \Vert x\Vert _{\mathcal{X}} + \Vert x'\Vert _{\mathcal{X}} - \Vert x-x'\Vert _{\mathcal{X}},
\)
are unbounded and typically require additional moment assumptions. With this Brownian distance kernel, HSIC is equivalent to distance covariance.
\end{remark}

\subsection{AutoHSIC and Its Sample Analogue}

HSIC measures dependence between two random objects. To adapt this idea to serial dependence, we apply HSIC to the lagged pair $(X_t,X_{t-m})$. This gives the auto Hilbert--Schmidt independence criterion (AutoHSIC), analogous to the auto distance correlation function of \cite{zhou2012measuring}.

Consider $\mathcal{X}=\mathcal{Y}$. Let $\{X_t\}$ be a strictly stationary time series taking values in $\mathcal{X}$, with observations $\{X_t\}_{t=1}^T$. To measure serial dependence at lag $m$, we define the AutoHSIC between $X_t$ and $X_{t-m}$ as follows.

\begin{definition}[AutoHSIC]\label{autohsic_def}
For $m\in\mathbb{Z}^{+}$, the AutoHSIC of $\{X_t\}$ at lag $m$ is defined by
\begin{equation}\nonumber
V_m = \operatorname{HSIC}(X_t,X_{t-m}) = \mathbb{E}\{d_k(X_t,X_t')d_l(X_{t-m},X_{t-m}')\},
\end{equation}
where $Z_{t,m}'=(X_t',X_{t-m}')$ is an independent copy of $Z_{t,m}=(X_t,X_{t-m})$, and $d_k(\cdot,\cdot)$ and $d_l(\cdot,\cdot)$ are defined in \eqref{d_k} and \eqref{d_l}, respectively. Moreover, we define $V_0=\operatorname{HSIC}(X_t,X_t)\ge 0$ and set $V_m=V_{-m}$ for $m<0$.
\end{definition}

If the kernels are characteristic and the required moment conditions hold, then, for $m \neq 0$, $V_m=0$ if and only if $X_t$ and $X_{t-m}$ are independent. 
This property motivates the serial independence tests developed in the next section.
Inspired by \cite{szekely2014partial}, we estimate $V_m$ using the $\mathcal{U}$-centring approach. Let \(k_{ij}=k(X_i,X_j)\), for \(m+1\leq i,j\leq T\).
Its $\mathcal{U}$-centred version is
{\small
\begin{equation}\nonumber
a_{i j,m}= \begin{cases}k_{i j}-\dfrac{\sum_{t=m+1}^T k_{i t}}{T-m-2}-\dfrac{\sum_{t=m+1}^T k_{tj}}{T-m-2}+\dfrac{\sum_{m+1 \leq t \neq t^{\prime} \leq T} k_{t t^{\prime}}}{(T-m-1)(T-m-2)}, & i \neq j, \\ 0, & i=j .\end{cases}
\end{equation}}
Define $b_{ij,m}$ analogously for \( l_{ij,m}=l_{i-m,j-m}=l(X_{i-m},X_{j-m})\), \( m+1\leq i,j\leq T\).
The $\mathcal{U}$-centred quantities $a_{ij,m}$ and $b_{ij,m}$ approximate $d_k(X_i,X_j)$ and $d_l(X_{i-m},X_{j-m})$, respectively. We then estimate $V_m$ by
\begin{equation}\label{V_T(m)}
V_{T,m} = \frac{1}{(T-m)(T-m-3)} \sum_{i,j=m+1}^{T}a_{ij,m}b_{ij,m}.
\end{equation}
For the asymptotic analysis, it is useful to write $V_{T,m}$ as a fourth-order U-statistic. Following \citet{gretton2007} and \citet{zhang2018conditional},
\begin{equation}\label{VTm_Ustat}
V_{T,m} = \binom{T-m}{4}^{-1} \sum_{m+1\leq i<j<q<r\leq T}
h(Z_{i,m},Z_{j,m},Z_{q,m},Z_{r,m}),
\end{equation}
where $Z_{t,m}:=(X_t,X_{t-m})$, $t=m+1,\ldots,T$, and the symmetric kernel
$h:(\mathcal{X}\times\mathcal{X})^4\to\mathbb{R}$ is defined by
{\small
\begin{equation}\label{h_kernel}
\begin{aligned}
h(z_i,z_j,z_q,z_r) = \frac{1}{4!} \sum_{(i_1,i_2,i_3,i_4)} k(x_{i_1},x_{i_2}) \Big\{ 
&\,l(y_{i_3},y_{i_4}) +l(y_{i_1},y_{i_2}) -2l(y_{i_1},y_{i_3})
\Big\},
\end{aligned}
\end{equation}}
where $z_t=(x_t,y_t)$ and the summation is over all $4!$ permutations of the index set $\{i,j,q,r\}$.

To establish the consistency of $V_{T,m}$, we impose the following assumptions.

\begin{assumption}\label{ass5.1_stationary}
$\{X_t\}$ is strictly stationary.
\end{assumption}

\begin{assumption}\label{assumption_bound_2+2delta}
The marginal distribution $P$ of $X_t$ satisfies $ P\in\mathcal{M}_k^{2+r}(\mathcal{X})$ and $P\in\mathcal{M}_l^{2+r}(\mathcal{X})$ for some $r>0$, where $\mathcal{M}_k^{2+r}(\cdot)$ and $\mathcal{M}_l^{2+r}(\cdot)$ are defined in \eqref{Mk_alpha}.
\end{assumption}

\begin{assumption}\label{assumption1_beta}
$\{X_t\}$ is a $\beta$-mixing process, with mixing coefficient $\beta(n)$ satisfying
\(
\beta(n)=O\left(n^{-(2+r')/r'}\right),\quad
\text{for some }0<r'<r/2.
\)
\end{assumption}

Assumption~\ref{ass5.1_stationary} is standard in time series analysis and is maintained throughout the paper. Assumption~\ref{assumption_bound_2+2delta} imposes moment conditions on the kernels, which are automatically satisfied by bounded kernels.
Assumption~\ref{assumption1_beta} is used for the asymptotic analysis of U-statistics under weak dependence; see \cite{yoshihara1976limiting}. Conditions for verifying $\beta$-mixing in nonlinear time series models are discussed, for example, in \cite{Pham1985}, \cite{fan2003nonlinear} and \cite{Carrasco2002mixing}.

The following theorem establishes the consistency of the empirical AutoHSIC \(V_{T,m}\) for \(V_m\) at each fixed lag. Since \(V_{T,m}\) is a U-statistic based on the dependent lagged pairs \(Z_{t,m}=(X_t,X_{t-m})\), the proof relies on U-statistic arguments under weak dependence rather than standard i.i.d. HSIC theory.

\begin{theorem}\label{thm1_VTm_consis}
Suppose that Assumptions~\ref{ass5.1_stationary}--\ref{assumption1_beta} hold. Then, for any fixed lag $m\in\mathbb{Z}^{+}$,
\( V_{T,m}\xrightarrow{\mathrm{p}}V_m \), as \(T\to\infty\).
\end{theorem}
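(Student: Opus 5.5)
Our plan is to prove a weak law of large numbers for the U-statistic representation \eqref{VTm_Ustat}, applied to the strictly stationary, $\beta$-mixing sequence of lagged pairs $Z_{t,m}=(X_t,X_{t-m})$. First, $\{Z_{t,m}\}_{t\ge m+1}$ is strictly stationary by Assumption~\ref{ass5.1_stationary}. It is also $\beta$-mixing, with coefficient $\beta_Z(n)\le\beta(n-m)$ for $n>m$, since $\sigma(Z_{s,m}:s\le t)\subset\sigma(X_s:s\le t)$ and $\sigma(Z_{s,m}:s\ge t+n)\subset\sigma(X_s:s\ge t+n-m)$. For fixed $m$ this shift does not affect the polynomial rate in Assumption~\ref{assumption1_beta}. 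Next we identify the target. Let $\theta=\mathbb{E}\,h(Z_1^*,Z_2^*,Z_3^*,Z_4^*)$ with $Z_i^*$ i.i.d.\ copies of $Z_{t,m}$. The standard HSIC identity of \citet{gretton2007} then gives $\theta=\mathbb{E}\,k(X,X')l(Y,Y')+\mathbb{E}k\,\mathbb{E}l-2\mathbb{E}\,k(X,X')l(Y,Y'')=V_m$, by expanding \eqref{hsic_centered} in terms of the marginal laws of $Z_{t,m}$.

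Second, we use the Hoeffding decomposition $h=\theta+\sum_{c=1}^4\binom4c h^{(c)}$, where $h^{(c)}$ are the canonical (completely degenerate) projections computed under the product of the marginal law of $Z$. This gives $V_{T,m}-V_m=\sum_{c=1}^4\binom4c U_T^{(c)}$. It suffices to show $\mathbb{E}(U_T^{(c)})^2\to0$ for each $c$, or at least $\mathbb{E}|U_T^{(c)}|\to0$. For $c=1$, $U_T^{(1)}=(T-m)^{-1}\sum_t h_1(Z_{t,m})$ is a mean-zero stationary average. Because $h_1$ has finite $(2+r)$-th moment, a Davydov-type covariance inequality under the summable mixing rate gives variance $O(T^{-1})$; alternatively, the ergodic theorem applies, since $\beta$-mixing implies ergodicity.

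Third, for $c\ge2$ we bound $\mathbb{E}(U_T^{(c)})^2=\binom{T-m}{4}^{-2}\sum\sum\mathbb{E}\,h^{(c)}(\cdot)h^{(c)}(\cdot)$ following \citet{yoshihara1976limiting}. Order the eight indices, let $d$ be the largest gap between consecutive indices, and split at that gap. Yoshihara's lemma bounds the expectation by its value under the product of the two block laws, plus $4M^{1/(1+\delta)}\beta(d)^{\delta/(1+\delta)}$, where $M$ is a $(2+\delta)$-moment bound. By complete degeneracy, the product-law expectation vanishes unless each side of the split contains no isolated free index. The number of index configurations with maximal gap $d$ is $O(T^{a}d^{b})$ for suitable exponents, and summing $\beta(d)^{\delta/(2+\delta)}$ against these counts under $\beta(n)=O(n^{-(2+r')/r'})$ gives $\mathbb{E}(U_T^{(c)})^2=O(T^{-1})$ or better. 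For consistency alone, $o(1)$ suffices, so we can afford crude counting.

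The main obstacle is the moment step. Yoshihara's inequality requires $\sup\mathbb{E}|h(Z_{i_1},\dots,Z_{i_4})|^{2+\delta}<\infty$ both under the joint law and under products of block laws. Here $h$ consists of products $k(x_a,x_b)l(y_c,y_d)$, and the coordinates are dependent, with $y_t=x_{t-m}$ possibly equal to some $x_s$. We plan to handle this with Cauchy--Schwarz in the RKHS, $|k(x,x')|\le k(x,x)^{1/2}k(x',x')^{1/2}$, followed by Hölder's inequality. This reduces each term to products of at most four factors of the form $k(X_s,X_s)^{1/2}$ or $l(X_s,X_s)^{1/2}$. Under any of the relevant laws these have finite moments of order $(2+r)$ by Assumption~\ref{assumption_bound_2+2delta}, since every single coordinate still has marginal $P$. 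Choosing $\delta$ with $\delta/(2+\delta)$ compatible with $r'<r/2$ makes the mixing sums converge. If this moment bookkeeping proves too tight, we will fall back on a truncation argument: replace $k$ and $l$ by bounded truncations, apply the argument above to the bounded kernel, and control the remainder uniformly in $T$ in $L^1$ by the uniform integrability that Assumption~\ref{assumption_bound_2+2delta} provides.
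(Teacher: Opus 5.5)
\paragraph{Where the main route fails.} Several steps are correct: the $\beta$-mixing of $Z_{t,m}$ with $\beta_Z(n)\le\beta(n-m)$, the identification $\mathbb{E}\,h(Z_1^*,\dots,Z_4^*)=V_m$, and the $c=1$ term (there $h_1\in L^{2+r}$, since it integrates against independent copies). The $L^2$ argument for $c\ge2$, however, does not go through under Assumption~\ref{assumption_bound_2+2delta}. The moment step you flag as the main obstacle is miscounted.

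Each factor $k(X_s,X_s)^{1/2}$ or $l(X_s,X_s)^{1/2}$ lies in $L^{4+2r}$. A product of four such factors at dependent times is, by H\"older, only guaranteed to lie in $L^{1+r/2}$. This is exactly the exponent the paper's own chain produces (any $\delta\le r/2$).

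The overlap makes this bound essentially sharp. The summand $k(X_i,X_{i+m})\,l(X_{i-m},X_i)$ of $h(Z_{i,m},Z_{i+m,m},\cdot,\cdot)$ uses the same $X_i$ in both kernels and three time points within the dependence range, so no factorisation of moments is available under the joint law. Hence for $r\le2$ you cannot guarantee $\mathbb{E}h^2<\infty$, and $\mathbb{E}(U_T^{(c)})^2$ may be infinite. Likewise, Yoshihara's inequality applied to the eight-argument function $h^{(c)}h^{(c)}$ needs a moment of order $1+\delta$ of that product, i.e.\ roughly $h\in L^{2+2\delta}$, and this is unavailable. Tuning $\delta$ against $r'$ cannot fix a missing moment.

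\paragraph{The fallback works; lead with it.} Truncate, for example with $k_K(x,x')=k(x,x')\mathbf{1}\{k(x,x)\le K,\ k(x',x')\le K\}$ (still positive definite and bounded by $K$), or simply with $h\,\mathbf{1}\{|h|\le K\}$. For the bounded kernel, the Hoeffding/Yoshihara argument goes through, and the mixing rate in Assumption~\ref{assumption1_beta} is more than enough. The bound $\sup_{\text{tuples}}\mathbb{E}|h|^{1+r/2}<\infty$, uniform over index configurations by stationarity, gives uniform integrability. Hence $\sup_T\mathbb{E}|V_{T,m}-V_{T,m}^{(K)}|\to0$ and $\mathbb{E}h_K(Z_1^*,\dots,Z_4^*)\to V_m$ as $K\to\infty$.

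\paragraph{Comparison with the paper.} The paper uses no Hoeffding decomposition or second-moment computation. It combines Yoshihara's Lemma~1 for the bias $\mathbb{E}V_{T,m}-V_m=o(1)$ with the law of large numbers for U-statistics of $\beta$-mixing sequences in \citet{arcones1998law} (Theorem~1(iii)). That result requires only $\sup\mathbb{E}|h|(\log^+|h|)^{1+\delta}<\infty$. The paper verifies it through $\sup\mathbb{E}|h|^{1+\delta}<\infty$, using the same Minkowski, H\"older and RKHS Cauchy--Schwarz chain you propose.

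Your repaired argument is in effect a self-contained proof of such a first-moment law of large numbers. It is longer than citing Arcones, but it is elementary and shows clearly that consistency needs only uniform integrability of $h$, not the $(2+\delta)$-th moments the $L^2$ route demands.
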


\section{Test Statistics and Asymptotic Theory}\label{sec3_test}

We now develop AutoHSIC-based tests for serial independence. For a fixed maximum lag $M\in\mathbb{Z}^{+}$, the aim is to detect whether the process exhibits lag-specific dependence at any lag up to $M$. We formulate the hypotheses as
\begin{equation}\nonumber
H_0:\ \{X_t\}\ \text{is i.i.d.},
\qquad
H_{a,M}:\ V_m>0\ \text{for some }m\in\{1,\ldots,M\}.
\end{equation}
The null hypothesis is stated as an i.i.d. condition, while the alternative is expressed through the population AutoHSIC values. Thus the test targets departures from serial independence that are visible through pairwise lagged dependence within the first $M$ lags. This formulation is analogous in spirit to classical portmanteau tests, but replaces autocorrelation by a kernel-based dependence measure.

Based on the empirical AutoHSIC values $V_{T,m}$, define the portmanteau statistic
\begin{equation}\nonumber 
P_{T,M} = \sum_{m=1}^M V_{T,m}.
\end{equation}
Large values of $P_{T,M}$ provide evidence against the null, indicating the presence of nonlinear serial dependence at one or more of the specified lags.

We first derive the null distribution. The following moment condition is imposed in addition to strict stationarity.

\begin{assumption}\label{assumption2_H0}
The marginal distribution $P$ of $X_t$ satisfies
$P\in\mathcal{M}_k^4(\mathcal{X})$
and $P\in\mathcal{M}_l^4(\mathcal{X})$,
where $\mathcal{M}_k^4(\cdot)$ and $\mathcal{M}_l^4(\cdot)$ are defined in
\eqref{Mk_alpha}.
\end{assumption}

Assumption~\ref{assumption2_H0} controls the fourth-order kernel moments that enter the degenerate U-statistic limit under the null. It is automatically satisfied when bounded kernels are used. This is one reason why bounded characteristic kernels are particularly convenient for heavy-tailed or non-Euclidean time series.

The asymptotic analysis under \(H_0\) differs from standard i.i.d. HSIC theory because the lagged pairs \( \{ Z_{t,m}=(X_t,X_{t-m}) \}_{t=m+1}^T \) overlap and are \(m\)-dependent, even when \(\{X_t\}\) is i.i.d.  Hence \(V_{T,m}\) is a degenerate U-statistic based on dependent lagged pairs, rather than independent observations.

\begin{theorem}\label{thm2_nulldis}
Under $H_0$, suppose that Assumptions~\ref{ass5.1_stationary} and \ref{assumption2_H0} hold. Then, for any fixed $M\in\mathbb{Z}^{+}$,
\begin{equation}\nonumber
T\left(V_{T,1},V_{T,2},\ldots,V_{T,M}\right)^\top
\xrightarrow{\mathrm{d}}
(\xi_1,\xi_2,\ldots,\xi_M)^\top,
\qquad T\to\infty,
\end{equation}
where
\(
\xi_m \stackrel{\mathrm{d}}{=} \sum_{\ell=1}^{\infty}\lambda_\ell\left(G_{\ell,m}^2-1\right)
\), for \(m=1,\ldots,M\).
Here $\{G_{\ell,m}:\ell\ge1,\ m=1,\ldots,M\}$ is a sequence of i.i.d. standard normal random variables, and $\{\lambda_\ell\}_{\ell\ge1}$ and $\{\Phi_\ell\}_{\ell\ge1}$ are the non-zero eigenvalues and orthonormal eigenfunctions satisfying
\begin{equation}\nonumber
\mathbb{E}\{\mathcal{K}(z,Z_{t,m})\Phi_\ell(Z_{t,m})\}
=
\lambda_\ell\Phi_\ell(z),
\end{equation}
with
\(
\mathcal{K}(z_1,z_2)
=
d_k(x_1,x_2)d_l(y_1,y_2)\),
\(z_i=(x_i,y_i)\in\mathcal{X}\times\mathcal{X}\), \(i=1,2\).
\end{theorem}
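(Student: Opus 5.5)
Under $H_0$ the plan is to reduce $TV_{T,m}$ to a normalised double sum in the centred kernel $\mathcal{K}$, expand $\mathcal{K}$ spectrally, and apply a martingale CLT jointly over lags. Write $n=T-m$. Since $X_t$ and $X_{t-m}$ are independent with common law $P$, the law of $Z_{t,m}$ is $P\otimes P$. The Hoeffding decomposition of the symmetric kernel $h$ in \eqref{h_kernel} is computed with respect to $P\otimes P$, exactly as in the i.i.d. HSIC case of \citet{gretton2007}: $h_1\equiv 0$ and $h_2(z_1,z_2)=\tfrac16\mathcal{K}(z_1,z_2)$. Note that $\mathbb{E}\mathcal{K}(z,Z)=0$ because $d_k$ and $d_l$ are doubly centred. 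This gives
\[
V_{T,m}=\frac{1}{n(n-1)}\sum_{i\neq j}\mathcal{K}(Z_{i,m},Z_{j,m})+R_{T,m},
\]
where $R_{T,m}$ collects the third- and fourth-order Hoeffding projections.

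The first step is to show $TR_{T,m}=o_p(1)$, even though the $Z_{t,m}$ are $m$-dependent. Each degenerate projection $h_c$ with $c\ge 3$ has zero conditional mean given any $c-1$ of its arguments. In $\mathbb{E}R_{T,m}^2$, a product $h_c(Z_{i_1},\dots)h_c(Z_{j_1},\dots)$ has nonzero mean only if every index lies within distance $m$ of some other index in the combined collection; otherwise that index can be integrated out. Counting such configurations gives $O(n^{c})$ nonvanishing terms out of normalisation $n^{2c}$. So $\mathbb{E}R^2=O(n^{-3})$, using Assumption~\ref{assumption2_H0} (fourth kernel moments) to bound $\mathbb{E}h_c^2$. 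The same configuration argument applied to the diagonal-adjacent pairs $|i-j|\le m$ in the main sum shows that they contribute only $O_p(1/n)$. Hence we may restrict to $|i-j|>m$.

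Next, $\mathcal{K}$ is a positive semidefinite, square-integrable kernel on $L^2(P\otimes P)$, so Mercer/Hilbert--Schmidt theory gives $\mathcal{K}(z_1,z_2)=\sum_\ell\lambda_\ell\Phi_\ell(z_1)\Phi_\ell(z_2)$ in $L^2$, with $\mathbb{E}\Phi_\ell(Z)=0$ and $\sum_\ell\lambda_\ell=\mathbb{E}\mathcal{K}(Z,Z)<\infty$. Truncating at $L$ terms costs $O(\sum_{\ell>L}\lambda_\ell^2)$ in second moment, uniformly in $n$; this uses the same $m$-dependence counting. For the truncated statistic,
\[
\frac{T}{n(n-1)}\sum_{i\ne j}\sum_{\ell\le L}\lambda_\ell\Phi_\ell(Z_i)\Phi_\ell(Z_j)
\approx\sum_{\ell\le L}\lambda_\ell\Big\{\Big(n^{-1/2}\sum_i\Phi_\ell(Z_{i,m})\Big)^2-n^{-1}\sum_i\Phi_\ell(Z_{i,m})^2\Big\}.
\]
The second term tends to $\lambda_\ell$ by the ergodic theorem. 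It remains to show that the vector $S_{\ell,m}=n^{-1/2}\sum_t\Phi_\ell(X_t,X_{t-m})$, for $\ell\le L$ and $m\le M$, is asymptotically $N(0,I)$.

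This joint CLT is the step I expect to be the main obstacle, because the limit must have independent components across lags. The key observation is that the map $(\Phi_\ell,m)\mapsto\Phi_\ell(X_t,X_{t-m})$ defines a martingale difference sequence with respect to $\mathcal{F}_t=\sigma(X_s,s\le t)$. Indeed, $\mathbb{E}\{\mathcal{K}(z,(x,Y))\mid x\}=0$ for each fixed $x$, because $d_l$ is centred. Hence $\lambda_\ell\mathbb{E}\{\Phi_\ell(X,Y)\mid Y\}=\mathbb{E}[\mathbb{E}\{\mathcal{K}(Z,(X,Y))\mid Y\}]=0$, and so $\mathbb{E}\{\Phi_\ell(X_t,X_{t-m})\mid\mathcal{F}_{t-1}\}=0$. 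The martingale CLT, combined with the Cram\'er--Wold device, therefore applies. The conditional covariances converge by the ergodic theorem to
\[
\mathbb{E}\{\Phi_\ell(X_t,X_{t-m})\Phi_{\ell'}(X_t,X_{t-m'})\}.
\]
For $m=m'$ this equals $\delta_{\ell\ell'}$. For $m\ne m'$, conditioning on $X_t$ and the other lagged variable, and using $\mathbb{E}\{\Phi_\ell(x,Y)\}=0$ by the symmetric argument on the $d_k$ side, shows that the expectation vanishes. The Lindeberg condition follows from the finite second moments. Finally, the continuous mapping theorem and the truncation limit $L\to\infty$ (Billingsley's approximation theorem) give joint convergence to $(\xi_1,\dots,\xi_M)$ with independent $G_{\ell,m}$.
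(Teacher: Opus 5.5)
Your proposal is correct and follows essentially the paper's route: Hoeffding reduction to the degenerate kernel $\mathcal{K}/6$ with the $c=3,4$ projections shown to be $o_p(T^{-1})$ under $m$-dependence, Mercer truncation, the square-minus-diagonal identity, and a joint martingale-difference CLT with cross-lag orthogonality. The differences are that you prove directly the counting and martingale facts the paper imports from \citet{jiang2024testing}, and that your truncation bound $\sum_{\ell>L}\lambda_\ell^2$, obtained from exact orthogonality for independent pairs, is cleaner than the paper's H\"older step.

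One slip: index-level counting bounds the near-diagonal pairs $0<|i-j|\le m$ only by $O_p(T^{-1})$, the same order as $V_{T,m}$ itself, so it does not justify discarding them. What is needed is the finer variable-level fact that any product of $\mathcal{K}$-terms has mean zero as soon as some $X_s$ enters it only once (the same centring behind your martingale-difference property); this leaves $O(T)$ nonzero cross terms and gives $O_p(T^{-3/2})$.
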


Theorem~\ref{thm2_nulldis} shows that the null limit of the sample AutoHSIC is a weighted sum of centred chi-squared variables.  This is the characteristic limit of a first-order degenerate U-statistic, but here it arises in the presence of temporal dependence induced by the overlapping lag construction.
The distribution is non-pivotal because the eigenvalues \(\{\lambda_\ell\}\) depend on the unknown marginal distribution and the chosen kernels, which motivates the bootstrap calibration in the next section. With the Brownian distance kernel, the result reduces to the corresponding limit for distance covariance; see, for example, Theorem~1 of \cite{jiang2024testing}.

The following corollary follows from the continuous mapping theorem.

\begin{corollary}\label{corollary_singledis}
Under $H_0$, suppose that Assumptions~\ref{ass5.1_stationary} and
\ref{assumption2_H0} hold. 
Then,
\( T V_{T,m} \xrightarrow{\mathrm{d}} \xi_m \), as \(T \to \infty\), for any fixed $m \in \{1,\ldots,M\}$; 
and \( T P_{T,M} \xrightarrow{\mathrm{d}} \sum_{m=1}^M \xi_m\), as \(T \to \infty\), 
where $\xi_m$ is defined in Theorem~\ref{thm2_nulldis}.
\end{corollary}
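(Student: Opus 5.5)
Both claims follow from Theorem~\ref{thm2_nulldis} by the continuous mapping theorem, so the plan is to apply it to two continuous maps on $\mathbb{R}^M$.

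\textbf{Step 1: the vector limit.} Under $H_0$ and Assumptions~\ref{ass5.1_stationary} and \ref{assumption2_H0}, Theorem~\ref{thm2_nulldis} gives, for fixed $M$,
\[
W_T := T(V_{T,1},\ldots,V_{T,M})^\top \xrightarrow{\mathrm{d}} \xi := (\xi_1,\ldots,\xi_M)^\top .
\]
Here the $\xi_m$ are jointly defined through the common i.i.d. array $\{G_{\ell,m}\}$.

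\textbf{Step 2: the single-lag statistic.} Fix $m\in\{1,\ldots,M\}$ and let $\pi_m:\mathbb{R}^M\to\mathbb{R}$ be the coordinate projection. It is linear, hence continuous. The continuous mapping theorem gives $TV_{T,m}=\pi_m(W_T)\xrightarrow{\mathrm{d}}\pi_m(\xi)=\xi_m$.

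Since any fixed $m$ lies in $\{1,\ldots,M\}$ for $M=m$, this holds for every fixed lag. The marginal law $\xi_m \stackrel{\mathrm{d}}{=} \sum_\ell \lambda_\ell(G_{\ell,m}^2-1)$ is a well-defined element of $L^2$. This needs $\sum_\ell\lambda_\ell^2<\infty$, which holds because $\mathbb{E}\mathcal{K}^2(Z,Z')<\infty$ under Assumption~\ref{assumption2_H0}; that fact is already part of the statement of Theorem~\ref{thm2_nulldis}.

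\textbf{Step 3: the portmanteau statistic.} Let $s:\mathbb{R}^M\to\mathbb{R}$, $s(w)=\sum_{m=1}^M w_m$, which is linear and continuous. Then $TP_{T,M}=\sum_{m=1}^M TV_{T,m}=s(W_T)$. The continuous mapping theorem gives $TP_{T,M}\xrightarrow{\mathrm{d}} s(\xi)=\sum_{m=1}^M\xi_m$.

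Because $\{G_{\ell,m}\}$ is i.i.d. across both $\ell$ and $m$, the $\xi_m$ are independent copies of one another. Hence the limit can equally be written as $\sum_{\ell}\lambda_\ell(\chi^2_{M,\ell}-M)$ with independent $\chi^2_M$ variables $\chi^2_{M,\ell}$. This reformulation is not required for the corollary.

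\textbf{Main point to check.} The only delicate issue is that Step~3 needs the \emph{joint} convergence of the vector. Marginal convergence of each $TV_{T,m}$ would not suffice to identify the law of the sum. Theorem~\ref{thm2_nulldis} supplies exactly this joint convergence, including the asymptotic independence across lags, so no further argument is needed.
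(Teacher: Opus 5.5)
Your proposal is correct and matches the paper's argument: the paper obtains the corollary from the joint convergence in Theorem~\ref{thm2_nulldis} by the continuous mapping theorem, and you apply it to the coordinate projection and to the summation map. You are also right that joint, not merely marginal, convergence is what identifies the law of the portmanteau limit $\sum_{m=1}^M \xi_m$.
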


Corollary~\ref{corollary_singledis} provides the limiting null distributions of both the single-lag and portmanteau statistics. These limits justify rejecting the null for large values of $T V_{T,m}$ or $T P_{T,M}$. Since the limiting distributions depend on unknown eigenvalues, feasible inference requires a data-driven approximation.

We next study the behaviour of the statistic under fixed alternatives. The following stronger moment condition is used to derive the asymptotic distribution of $V_{T,m}$ when the corresponding lag dependence is present.

\begin{assumption}\label{assumption_bound_4+q}
The marginal distribution $P$ of $X_t$ satisfies
$P\in\mathcal{M}_k^{4+r}(\mathcal{X})$
and $P\in\mathcal{M}_l^{4+r}(\mathcal{X})$
for some $r>0$, where $\mathcal{M}_k^{4+r}(\cdot)$ and
$\mathcal{M}_l^{4+r}(\cdot)$ are defined in \eqref{Mk_alpha}.
\end{assumption}

Assumption~\ref{assumption_bound_4+q} strengthens the null moment condition to ensure a well-defined long-run variance for the first-order projection under alternatives. For bounded kernels, this condition is again automatically satisfied.

\begin{theorem}\label{thm_dis_Ha}
Under $H_{a,M}$, suppose that Assumptions~\ref{ass5.1_stationary}, \ref{assumption1_beta} and \ref{assumption_bound_4+q} hold. Then, for any lag $m\in\{1,\ldots,M\}$ such that $V_m>0$,
\begin{equation}\nonumber
\sqrt{T}(V_{T,m}-V_m) \xrightarrow{\mathrm{d}} N(0,16\sigma^2), \qquad T\to\infty,
\end{equation}
where \( \sigma^2 = \sigma_1^2+2\sum_{i=1}^{\infty}\sigma_{1,i}^2 \),
\( \sigma_1^2=\operatorname{Var}\{h_1(Z_{t,m})\} \),
\( \sigma_{1,i}^2=\operatorname{Cov}\{h_1(Z_{t,m}),h_1(Z_{t+i,m})\} \),
and
\( h_1(z) = \mathbb{E}\{h(z,Z,Z^\prime,Z^{\prime \prime})\} \)
with $Z,Z^\prime,Z^{\prime \prime}$ being i.i.d. copies of $Z_{t,m}$.
\end{theorem}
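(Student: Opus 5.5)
The plan is to apply the Hoeffding decomposition to the fourth-order U-statistic representation \eqref{VTm_Ustat} and use limit theory for U-statistics of absolutely regular processes in the spirit of \cite{yoshihara1976limiting}.

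\textbf{Setting up the decomposition.} The lagged pairs $\{Z_{t,m}\}_{t=m+1}^T$ are a measurable function of $(X_t,X_{t-m})$. Hence they form a strictly stationary $\beta$-mixing sequence whose coefficients satisfy $\beta_Z(n)\le\beta(n-m)$, so Assumption~\ref{assumption1_beta} transfers to $\{Z_{t,m}\}$ with the same polynomial rate. Let $\theta=\mathbb{E}\{h(Z,Z',Z'',Z''')\}$ with i.i.d.\ copies, so that $\theta=V_m$. Write $n=T-m$ and decompose
\[
V_{T,m}-V_m=\frac{4}{n}\sum_{t=m+1}^{T}\{h_1(Z_{t,m})-V_m\}+\sum_{c=2}^{4}\binom{4}{c}U_n^{(c)},
\]
where $U_n^{(c)}$ is the U-statistic built from the $c$-th order canonical (degenerate) Hoeffding projection $h^{(c)}$. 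The factor $4$ explains the $16\sigma^2$ in the limit. By Cauchy--Schwarz and H\"older, Assumption~\ref{assumption_bound_4+q} gives $\mathbb{E}|h|^{2+\delta}<\infty$ uniformly over all joint and product laws of four $Z$'s, with $\delta=r/2$. Each $h$ is a sum of products $k(x_a,x_b)l(y_c,y_d)$ with $|k(x,x')|\le k(x,x)^{1/2}k(x',x')^{1/2}$, so products of at most four factors $k(\cdot,\cdot)^{1/2}$ and $l(\cdot,\cdot)^{1/2}$ appear, and $(4+r)$-th moments control them. This is precisely the moment condition in Yoshihara's theorem relative to the mixing exponent $(2+r')/r'$ with $r'<r/2$.

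\textbf{Linear term.} The first step is a CLT for $n^{-1/2}\sum_t\{h_1(Z_{t,m})-V_m\}$. Since $\mathbb{E}|h_1|^{2+\delta}<\infty$ and $\sum_n\beta(n)^{\delta/(2+\delta)}<\infty$ under the stated rate, the CLT for strongly mixing sequences (Ibragimov) gives convergence to $N(0,\sigma^2)$. Here $\sigma^2=\sigma_1^2+2\sum_{i\ge1}\sigma_{1,i}^2$, and the covariance series converges absolutely by Davydov's inequality. Multiplying by $4$ gives $N(0,16\sigma^2)$.

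\textbf{Remainder terms.} The second step shows $\sqrt n\,U_n^{(c)}=o_p(1)$ for $c=2,3,4$. I would bound $\mathbb{E}\{(U_n^{(c)})^2\}=O(n^{-2})$, or at least $o(n^{-1})$, following Yoshihara's Lemma. Expand the second moment as a sum over index tuples $i_1<\dots<i_c$ and $j_1<\dots<j_c$. Order the $2c$ indices and take the largest gap. If that gap separates off a block whose index is ``free'' with respect to the degenerate kernel, replace the joint law by the product of the block laws at a cost of $C\beta(\text{gap})^{\delta/(2+\delta)}$, using the Yoshihara coupling inequality for functions with $(2+\delta)$ moments under mixed laws. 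The resulting product-measure expectation vanishes by degeneracy. Counting tuples with a given maximal gap $g$ gives terms of order $n^{-2c}\sum_g n^{c}g^{c-1}\beta(g)^{\delta/(2+\delta)}$, which is $O(n^{-2})$ given the polynomial decay. A cruder $o(n^{-1})$ bound suffices.

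\textbf{Main obstacle.} This counting and coupling step is the technical heart of the proof. Two features make it delicate. First, $h$ is not bounded, so uniform $(2+\delta)$-moment bounds under every mixture of joint and independent laws of up to eight $Z$'s are needed; these are supplied by Assumption~\ref{assumption_bound_4+q}. Second, the $m$-overlap of the pairs means that small gaps are not informative. For gaps at most $m$ I would simply bound such terms crudely, since there are only $O(n^{2c-1})$ of them, which is still $o(n^{2c-1})\cdot n^{-2c}\cdot n$ after normalisation.

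Combining the two steps with Slutsky's lemma, and noting $\sqrt{T}/\sqrt{n}\to1$, yields $\sqrt T(V_{T,m}-V_m)\xrightarrow{\mathrm d}N(0,16\sigma^2)$. The condition $V_m>0$ is used only to interpret the result: it ensures the linear term is generically non-degenerate, so $\sigma^2>0$.
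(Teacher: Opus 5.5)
Your proposal is correct and is essentially the paper's argument: the paper simply invokes Theorem~1 of \cite{yoshihara1976limiting}, and the proof of that theorem is your Hoeffding decomposition, mixing CLT for the linear projection and coupling bound on the degenerate parts, with the exponents matched as you do ($\delta=r/2$, $\delta'=r'$). Yoshihara's argument also settles three points in your remainder sketch that need correcting: the stated polynomial rate gives only $\mathbb{E}\{(U_n^{(c)})^2\}=O(n^{-1-\gamma})$ for some $\gamma>0$, not $O(n^{-2})$; configurations where the largest gap does not isolate a free index (e.g.\ $\{i_1,i_2\}$ versus $\{j_1,j_2\}$ for $c=2$) need a second coupling inside the blocks, since $\mathbb{E}\{h^{(2)}(Z_{i_1,m},Z_{i_2,m})\}\neq 0$ under dependence; and the $m$-overlap needs no separate treatment beyond $\beta_Z(g)\le\beta(g-m)$, because only the $O(n)$ tuples whose largest gap is at most $m$ escape the coupling, whereas crudely bounding all $O(n^{2c-1})$ tuples with some small gap would give only $O(n^{-1})$, which is not enough.
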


Theorem~\ref{thm_dis_Ha} gives the root-\(T\) limit of \(V_{T,m}\) under fixed alternatives, where the first-order projection is non-degenerate and its long-run variance determines the limiting variance.
It also implies consistency. Under \(H_{a,M}\), there exists some \(m \leq M\) such that \(V_m>0\). Since \(V_{T,m}\xrightarrow{\mathrm p}V_m\), it follows that \(T V_{T,m}\to\infty\) and \(T P_{T,M}\to\infty\) in probability. Thus the single-lag and portmanteau tests have asymptotic power one. 

\section{Wild Bootstrap}\label{sec_wildboot}

The limiting null distributions of \(V_{T,m}\) and \(P_{T,M}\) are non-pivotal, so we use a wild bootstrap to approximate the null distributions.

The bootstrap mimics the degenerate U-statistic structure of \(V_{T,m}\) under the null by keeping the centred kernel quantities \(a_{ij,m}\) and \(b_{ij,m}\) fixed and applying random weights to both indices. This double weighting reproduces the leading quadratic component without estimating the eigenvalues in Theorem~\ref{thm2_nulldis}. The procedure is summarised in Algorithm~\ref{algo1_boot}.

\begin{algorithm}
\caption{Wild bootstrap}\label{algo1_boot}
\small
\begin{algorithmic}

\STATE {\bf Step~1}: Given the maximum lag $M$, the original sample $\{X_t\}_{t=1}^T$, and the number
of bootstrap replications $B$.

\STATE {\bf Step~2}: Compute $V_{T,m}$, $m=1,\ldots,M$, and $P_{T,M}$ from $\{X_t\}_{t=1}^T$.

\STATE {\bf Step~3}:
\FOR{$b=1,\ldots,B$}

\STATE \parbox[t]{0.95\linewidth}{%
{\bf Step~3.1}: For each lag $m=1,\ldots,M$, generate an i.i.d. weight sequence
    $\{w_{t,m}^{(b)}\}_{t=1}^T$ satisfying
    $\mathbb{E}w_{t,m}^{(b)}=0$,
    $\mathbb{E}\{(w_{t,m}^{(b)})^2\}=1$, and
    $\mathbb{E}\{(w_{t,m}^{(b)})^4\}<\infty$.
For distinct lags $m\ne m'$, the weight vectors
    $\mathbf{w}_m^{(b)}=(w_{1,m}^{(b)},\ldots,w_{T,m}^{(b)})^\top$ and
    $\mathbf{w}_{m'}^{(b)}$ are generated independently. }
\STATE \parbox[t]{0.95\linewidth}{%
{\bf Step~3.2}: For $m=1,\ldots,M$, compute the single-lag statistic
    \[
    V_{T,m}^{*(b)}
    =
    \frac{1}{(T-m)(T-m-3)}
    \sum_{m+1\leq i\ne j\leq T}
    w_{i,m}^{(b)}a_{ij,m}b_{ij,m}w_{j,m}^{(b)},
    \]
and the bootstrap portmanteau statistic
    \[
    P_{T,M}^{*(b)}
    =
    \sum_{m=1}^M V_{T,m}^{*(b)} .
    \] }

\ENDFOR

\STATE {\bf Step~4}: Let $V_{T,m;1-\alpha}^*$ and $P_{T,M;1-\alpha}^*$ denote the empirical 
$1-\alpha$ quantiles of
$\{V_{T,m}^{*(b)}:b=1,\ldots,B\}$ and
$\{P_{T,M}^{*(b)}:b=1,\ldots,B\}$, respectively.
Reject $H_0$ in the single-lag test at lag $m$ if
$TV_{T,m}> TV_{T,m;1-\alpha}^*$, and reject $H_0$ in the portmanteau test if
$TP_{T,M}> TP_{T,M;1-\alpha}^*$.

\end{algorithmic}
\end{algorithm}

Let $V_{T,m}^*$ and $P_{T,M}^*$ denote generic bootstrap statistics generated by Algorithm~\ref{algo1_boot}, conditional on the original sample $\{X_t\}_{t=1}^T$. To state the bootstrap validity result, we use convergence in distribution in probability in the sense of \cite{li2003consistent}. Specifically, conditional on the original sample, we write ``$\xrightarrow{\mathrm{d}^*}$ in probability'' for convergence in distribution in probability, and write ``$O_p^*(1)$ in probability'' for boundedness in probability under the bootstrap law.

The following theorem establishes the validity of the wild bootstrap.

\begin{theorem}\label{thm_boot_null}
Suppose that Assumptions~\ref{ass5.1_stationary} and \ref{assumption2_H0} hold.
Assume that the bootstrap weights satisfy
\(
\mathbb{E}w_t=0,\;
\mathbb{E}w_t^2=1,\;
\mathbb{E}w_t^4<\infty
\).
Then the following statements hold.

\noindent $(a)$ Under $H_0$, for any fixed maximum lag $M\in\mathbb{Z}^{+}$,
\[
T\left(V_{T,1}^*,V_{T,2}^*,\ldots,V_{T,M}^*\right)^\top
\xrightarrow{\mathrm{d}^*}
(\xi_1,\xi_2,\ldots,\xi_M)^\top
\quad\text{in probability},
\qquad T\to\infty,
\]
where $(\xi_1,\ldots,\xi_M)^\top$ is the limiting vector in Theorem~\ref{thm2_nulldis}.

\noindent $(b)$ Under $H_{a,M}$ and Assumption~\ref{assumption1_beta},
\(
T V_{T,m}^*=O_p^*(1)
\) in probability,
for each fixed $m\in\{1,\ldots,M\}$, 
and
\(
T P_{T,M}^*=O_p^*(1)
\) in probability.

\end{theorem}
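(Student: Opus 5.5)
Conditional on the sample, each $TV_{T,m}^*$ is a quadratic form $\mathbf{w}_m^\top A_m \mathbf{w}_m$ in the i.i.d. weights. Here $A_m$ has entries $T a_{ij,m}b_{ij,m}/\{(T-m)(T-m-3)\}$ for $i\neq j$ and zero diagonal. Since the weight vectors for different lags are independent, the components of the bootstrap vector are conditionally independent. The limit $(\xi_1,\ldots,\xi_M)^\top$ also has independent components, because the $G_{\ell,m}$ are i.i.d. across $m$. It therefore suffices to handle each lag separately and combine through conditional characteristic functions.

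\textbf{Step 1 (a spectral representation of $A_m$).} Write $\hat{\mathcal K}_{ij}=a_{ij,m}b_{ij,m}$ and compare it with the population kernel $\mathcal{K}(Z_{i,m},Z_{j,m})=d_k(X_i,X_j)d_l(X_{i-m},X_{j-m})$. Using the $\mathcal U$-centring, I would show
\[
T^{-2}\sum_{i\neq j}(\hat{\mathcal K}_{ij}-\mathcal K_{ij})^2=o_p(1).
\]
The differences consist of empirical-mean corrections, and these are controlled by the fourth kernel moments of Assumption~\ref{assumption2_H0}. Then $\mathbf{w}^\top A_m\mathbf{w}$ and $T^{-1}\sum_{i\neq j}w_i\mathcal K_{ij}w_j$ differ by $o_p^*(1)$ in probability, because the conditional second moment of the difference equals $2\sum_{i\ne j}(A_{m,ij}-\mathcal K_{ij}/T)^2$. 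Next expand $\mathcal K(z,z')=\sum_\ell\lambda_\ell\Phi_\ell(z)\Phi_\ell(z')$ in $L^2$. Truncate at $L$ terms; the remainder is uniformly small in conditional $L^2$ as $L\to\infty$ by the same bound, since $\sum_\ell\lambda_\ell^2<\infty$. The truncated part is
\[
\sum_{\ell\le L}\lambda_\ell\Big\{\Big(T^{-1/2}\sum_i w_i\Phi_\ell(Z_{i,m})\Big)^2-T^{-1}\sum_i w_i^2\Phi_\ell(Z_{i,m})^2\Big\}.
\]

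\textbf{Step 2 (conditional CLT).} Conditionally on the data, the vector $\big(T^{-1/2}\sum_i w_i\Phi_\ell(Z_{i,m})\big)_{\ell\le L}$ is a sum of independent mean-zero terms. Its covariance is $T^{-1}\sum_i\Phi_\ell\Phi_{\ell'}(Z_{i,m})$. Under $H_0$ the sequence $Z_{i,m}$ is $m$-dependent and stationary, so the ergodic theorem gives convergence of this covariance to $\delta_{\ell\ell'}$ in probability. A Lindeberg condition holds in probability because $\max_i\Phi_\ell(Z_{i,m})^2/T\to0$ and $\mathbb E w^4<\infty$ (in fact finite second moments suffice here). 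Hence the vector converges to $(G_{\ell,m})_{\ell\le L}$. The diagonal term $T^{-1}\sum_i w_i^2\Phi_\ell^2$ converges to $1$ in conditional probability, by a weak law that uses $\mathbb E w^4<\infty$ and the ergodic theorem. Letting $L\to\infty$ then yields $\xi_m$ (a standard Billingsley-type approximation argument), and independence across $m$ gives (a).

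\textbf{Step 3 (part (b)).} Under alternatives it suffices to bound the conditional second moment:
\[
\mathbb E^*(TV_{T,m}^*)^2=2\sum_{i\neq j}A_{m,ij}^2\lesssim T^{-2}\sum_{i\ne j}a_{ij,m}^2b_{ij,m}^2 .
\]
This quantity is $O_p(1)$ because its expectation is bounded by fourth kernel moments; stationarity and Assumption~\ref{assumption1_beta} ensure the averages of the centring terms stay bounded. Chebyshev's inequality then gives $O_p^*(1)$ in probability, and summing over $m$ gives the portmanteau claim.

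\textbf{Main obstacle.} I expect the hardest part to be the kernel approximation in Step~1: showing $\sum_{i\ne j}(A_{m,ij}-\mathcal K_{ij}/T)^2=o_p(1)$ with the overlapping lagged pairs. The $\mathcal U$-centring terms involve row means of $k$ and $l$ over dependent pairs. I would attack this by expanding the differences into products of the form $\{k_{i\cdot}/(T-m-2)-\mathbb E k(X_i,X')\}l_{ij}$ and bounding their second moments via $m$-dependence, so that only $O(T)$ index quadruples have nonzero covariance, together with the fourth moments.
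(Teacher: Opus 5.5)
Your proposal is correct and follows essentially the same route as the paper. Both arguments replace $a_{ij,m}b_{ij,m}$ by $\mathcal{K}(Z_{i,m},Z_{j,m})$ through a conditional second-moment bound that reduces to empirical-mean centring errors controlled by fourth kernel moments, truncate the Mercer expansion using $\sum_{\ell>L}\lambda_\ell^2\to 0$, apply a conditional CLT to $(T-m)^{-1/2}\sum_i w_{i,m}\Phi_\ell(Z_{i,m})$ together with a weak law for the diagonal term, and for part (b) bound $\mathbb{E}\{\mathbb{E}^*(TV_{T,m}^*)^2\}$ by fourth moments of $a_{ij,m}$ and $b_{ij,m}$. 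The only differences are cosmetic: you combine lags through conditional independence of the weight vectors rather than the paper's Cram\'er--Wold cross-covariance computation, and you use a Lindeberg condition where the paper uses Lyapunov.
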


Theorem~\ref{thm_boot_null} shows that the wild bootstrap consistently approximates the limiting null distributions of \(T V_{T,m}\) and \(T P_{T,M}\), yielding feasible critical values. Under \(H_{a,M}\), the original statistics diverge while their bootstrap counterparts remain bounded, so the bootstrap tests have asymptotic power one against the alternatives. 

\section{Simulations}\label{sec4_simulation}

This section examines the finite-sample performance of the proposed AutoHSIC-based tests for high-dimensional, functional and matrix-valued time series. 
The simulations assess size under serial independence and power against nonlinear serial dependence not captured by linear autocorrelation.

Throughout the simulations, we use the Gaussian kernel (GK), the Laplacian kernel (LK) and the Brownian distance kernel (BDK).
The Gaussian and Laplacian kernels are defined as follows:

    \noindent $(a)$ Gaussian kernel:
    \(k(x,x')=\exp \left(-\Vert x-x'\Vert _{\mathcal{X}}^2/(2\gamma^2)\right)\),
    for some $\gamma>0$;

    \noindent $(b)$ Laplacian kernel:
    \(k(x,x')=\exp\left(-\Vert x-x'\Vert _{\mathcal{X}}/\gamma\right)\),
    for some $\gamma>0$.\\
Following \cite{zhang2023statistical}, we set the bandwidth \(\gamma\) to the median of \(\{\Vert X_i-X_j\Vert _{\mathcal X}:1\leq i<j\leq T\}\). 
Empirical rejection rates are based on 1000 replications, with critical values computed from 500 bootstrap replications using Rademacher weights,
\(\Pr(w_{t,m}^{(b)}=1)=\Pr(w_{t,m}^{(b)}=-1)=1/2\).
We consider sample sizes \(T=100,200\), single-lag tests with \(m=1,3\), and portmanteau tests with maximum lag \(M=3,6\).

\subsection{Multivariate and High-Dimensional Time Series}\label{subsec_simulation}

We first consider multivariate and high-dimensional time series with \(\mathcal X=\mathbb R^d\) and the Euclidean norm, where \(d=1,5,10,20,40,80,160\). Let \(\odot\) denote the Hadamard product, and let
\(\Sigma=(\Sigma_{ij})_{1\leq i,j\leq d}\) denote the matrix with \(\Sigma_{ij}=0.5^{|i-j|}\). 
We write \(N_d(\mathbf 0,\Sigma)\) for the \(d\)-dimensional normal distribution with mean zero and covariance matrix \(\Sigma\), and \(t_\nu(\Sigma)\) for the \(d\)-dimensional Student \(t\) distribution with degrees of freedom \(\nu\) and scale matrix \(\Sigma\).

We first evaluate size under three serially independent data-generating processes:

\noindent DGP~1: $X_t\stackrel{i.i.d.}{\sim}N_d(\mathbf 0,\Sigma)$; \quad
DGP~2: $X_t\stackrel{i.i.d.}{\sim}t_2(\Sigma)$;\quad
DGP~3: $X_t\stackrel{i.i.d.}{\sim}t_1(\Sigma)$.

DGP~1 provides a Gaussian benchmark, whereas DGPs~2 and 3 examine the robustness of the tests under heavy-tailed marginal distributions.

To evaluate power, we consider nonlinear moving-average processes of the form:

\noindent DGP~4: $X_t=\eta_t\odot\eta_{t-1}\odot\eta_{t-2}$, $\eta_t\stackrel{i.i.d.}{\sim}N_d(\mathbf 0,\Sigma)$.

\noindent DGP~5: $X_t=\eta_t\odot\eta_{t-1}\odot\eta_{t-2}$, $\eta_t\stackrel{i.i.d.}{\sim}t_2(\Sigma)$.

These processes are serially uncorrelated but not serially independent, and thus provide alternatives under which autocorrelation-based tests may have limited power while nonlinear serial dependence remains present.

The proposed single-lag and portmanteau tests are denoted by $V_{T,m}$ and $P_{T,M}$, respectively. We also report results for tests based on auto distance covariance (ADCV), introduced by \cite{zhou2012measuring} and further developed by \cite{jiang2024testing}. 
As discussed in Remark~\ref{remark_kernel}, ADCV corresponds to AutoHSIC with the Brownian distance kernel, up to the usual centring and normalisation conventions. 
Accordingly, the ADCV-based tests are reported as the BDK versions of \(V_{T,m}\) and \(P_{T,M}\).

Figures~\ref{figure_rejections_size} and~\ref{figure_rejections_power} report the empirical rejection rates at the $5\%$ nominal level for DGPs~1--3 and DGPs~4--5, respectively.
For the null DGPs~1--3, most procedures have rejection rates close to the nominal level as $T$ increases, indicating satisfactory size control. An exception is observed for the ADCV-based tests, such as $V_{T,1}$ with the BDK, which tend to overreject under DGP~3 when $d \geq 10$. This suggests that the ADCV-based tests may be sensitive to heavy tails.

For DGP~4 with Gaussian innovations, empirical power decreases monotonically as \(d\) increases. 
This pattern accords with  the high-dimensional behaviour of distance covariance and HSIC-type statistics \citep{zhu2020distance,
zhang2023statistical}, which become mainly sensitive to componentwise linear dependence in finite-moment regimes. 
Under DGP~4, however, the process is serially uncorrelated and the lagged dependence is purely nonlinear. 
As \(d\) grows, Euclidean distances and kernel values become increasingly concentrated, which in turn attenuates the contrast in the kernel matrices and renders the nonlinear lagged signal progressively harder to detect.

The behaviour under DGP~5 is different. Here the innovations follow a heavy-tailed \(t_2\) distribution, so the finite-moment high-dimensional theory does not directly apply.
The multiplicative structure
can make extreme innovations generate large neighbouring lagged products, creating strong nonlinear tail dependence despite vanishing linear autocorrelations. 
As \(d\) increases, extreme coordinates become more likely, potentially strengthening the detectable tail-driven signal. 
We do not claim that heavy tails universally enhance high-dimensional power; the phenomenon depends jointly on the tail index, the dependence structure, the choice of kernel, and the bandwidth selection rule. 
Nonetheless, the simulations indicate that bounded kernels can be particularly effective in heavy-tailed settings, as they capture tail-induced nonlinear dependence while mitigating the size distortions observed with the Brownian distance kernel.

Additional results, including those for the Laplacian kernel and other data-generating processes, are reported in Tables~6--10 of the supplementary material. These include results for a vector autoregressive process (Table~10), where both the AutoHSIC-based and ADCV-based tests show high power, confirming their ability to detect linear serial dependence.

\begin{figure}[!htbp]
\centering
\includegraphics[width=0.68\linewidth]{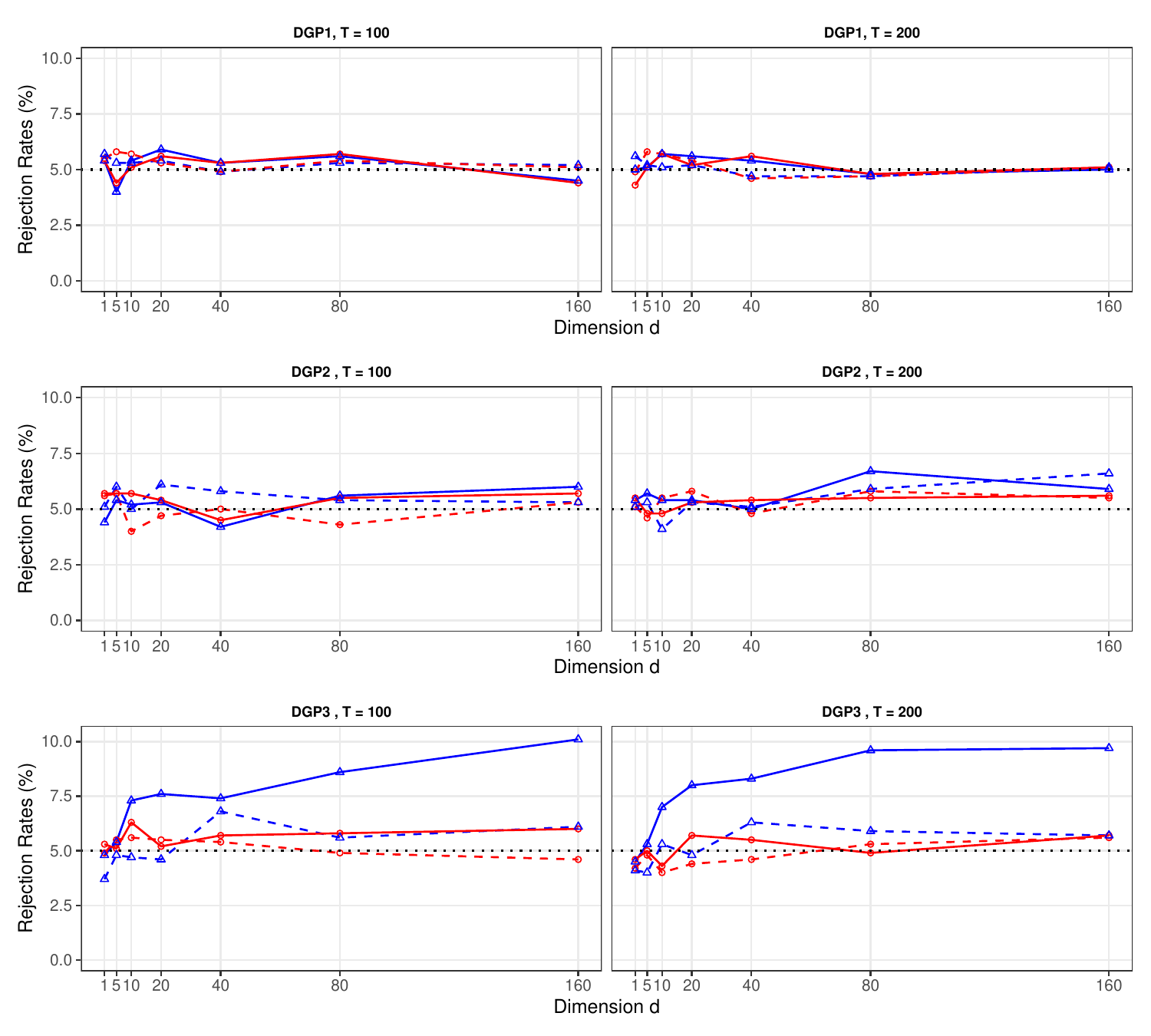}

\caption{\footnotesize Empirical sizes for DGPs~1--3 at the $5\%$ nominal level, plotted against the dimension $d$. Red lines denote tests based on the Gaussian kernel and blue lines denote tests based on the Brownian distance kernel; solid and dashed lines correspond to $V_{T,1}$, $P_{T,3}$, respectively.}\label{figure_rejections_size}
\end{figure}

\begin{figure}[!htbp]
\centering
\includegraphics[width=0.68\linewidth]{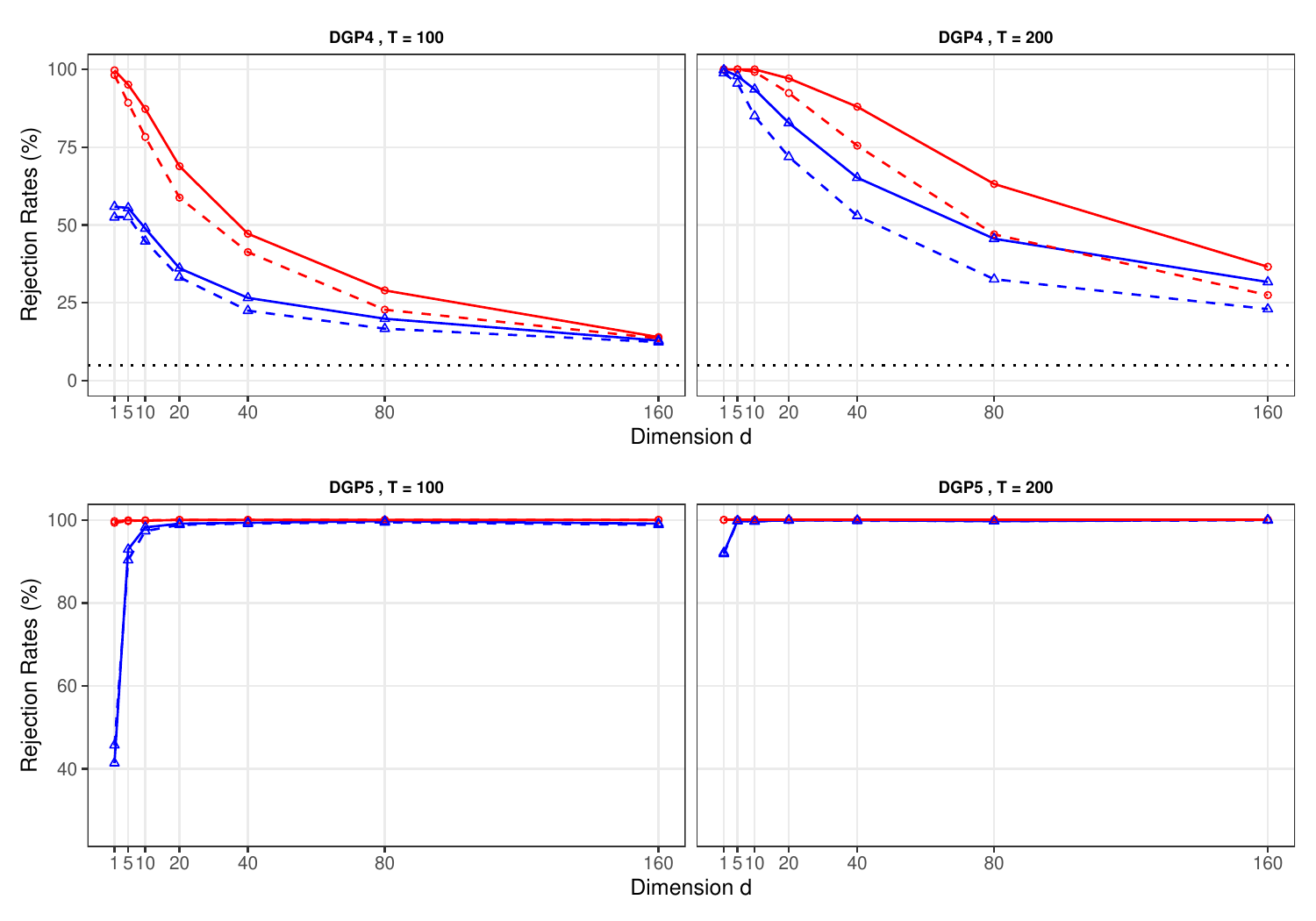}

\caption{\footnotesize Empirical powers for DGPs~4--5 at the $5\%$ nominal level, plotted against the dimension $d$. Red lines denote tests based on the Gaussian kernel and blue lines denote tests based on the Brownian distance kernel; solid and dashed lines correspond to $V_{T,1}$, $P_{T,3}$, respectively.}\label{figure_rejections_power}
\end{figure}

\subsection{Functional Time Series}

We next consider functional time series with observations in \(\mathcal X=L_2[0,1]\), equipped with \( \Vert x\Vert _{\mathcal X}= \{\int_0^1 |x(\tau)|^2\,{\rm d}\tau \}^{1/2} \).
This illustrates the use of AutoHSIC beyond finite-dimensional Euclidean settings.

Let $\mathcal{B}=\{B(\tau):\tau\in[0,1]\}$ denote standard Brownian motion, and let \( \eta_t(\tau) \stackrel{i.i.d.}{\sim} \mathcal{B} \). Following the simulation design of \cite{jiang2024testing}, we generate the functional time series $\{X_t(\tau):\tau\in[0,1]\}_{t=1}^T$ from the following data-generating processes.

\noindent DGP~6: $X_t(\tau)=\eta_t(\tau)$.

\noindent DGP~7:
$ X_t(\tau)=\sigma_t(\tau)\eta_t(\tau),
\quad \sigma_t^2(\tau) = \tau + \int_0^1 0.6\exp\{(\tau^2+\tau_1^2)/2\} X_{t-1}^2(\tau_1)\,{\rm d}\tau_1$.

\noindent DGP~8: $X_t(\tau)=\eta_t(\tau)\eta_{t-1}(\tau)$.

DGP~6 is an i.i.d. Brownian motion benchmark for size evaluation. DGP~7 is a functional autoregressive conditional heteroscedasticity process with conditional second-order serial dependence. DGP~8 is a functional nonlinear moving-average process with nonlinear lag dependence.

Each functional observation is first generated on a grid of 1000 equally spaced
points on $[0,1]$. The discretised curves are then smoothed using cubic B-splines,
with 20 basis functions, before applying the tests.

Table~\ref{table_functional_0.05} reports empirical rejection rates for DGPs~6--8 at the \(5\%\) level. The tests have reasonable size under the i.i.d. Brownian motion benchmark (DGP~6). 
Under DGPs~7 and 8, the bounded-kernel AutoHSIC tests outperform the ADCV-based test, indicating good power against functional dependence arising from volatility or multiplicative lag structures.

\begin{table*}[h]
\caption{Empirical rejection rates (\%) for DGPs~6--8 at the $5\%$ nominal level.} 
\label{table_functional_0.05}
\footnotesize
\tabcolsep=0pt
\begin{tabular*}{\textwidth}{@{\extracolsep{\fill}}ll rrrr r rrrr r rrrr@{\extracolsep{\fill}}}
\toprule%
& & \multicolumn{4}{c}{LK} && \multicolumn{4}{c}{GK} && \multicolumn{4}{c}{BDK} \\

\cline{3-6} \cline{8-11} \cline{12-16}

&$T$ & $V_{T,1}$ & $V_{T,3}$ & $P_{T,3}$ & $P_{T,6}$ && $V_{T,1}$ & $V_{T,3}$ & $P_{T,3}$ & $P_{T,6}$ && $V_{T,1}$ & $V_{T,3}$ & $P_{T,3}$ & $P_{T,6}$ \\

\cline{3-6} \cline{8-11} \cline{12-16}

DGP6 &100 &5.0 &4.9 &5.7 &6.0 && 5.0 & 4.2 & 6.3 &5.3  &&6.5 &4.7 &5.9 &5.0   \\
&200 &6.1 &4.2 & 5.1 & 5.6 & &5.5&4.1&4.9&5.8 &&6.1 &4.5 &4.7 &5.9   \\

\vspace{-0.3cm}\\

DGP7 &100 &45.3 &5.8 &33.8 &27.2 &&35.9&5.5&29.3&23.3&&10.1 &5.4 &9.9 &9.4   \\
&200 &84.3 &5.4 &65.6 &53.3 & &73.2&5.9&57.5 &46.1&&20.5 &4.4 &18.1 &15.8   \\

\vspace{-0.3cm}\\

DGP8 &100 &96.6 &9.8 &93.2 &84.8 &&85.1&7.6& 73.9 & 62.5 &&20.0 &5.6 &17.3 &17.8   \\
&200 &99.9 &10.8 &100.0 &99.7 &&99.7&7.7 &99.1&95.5 &&63.5 &5.2 &48.1 &39.3  \\

\midrule
\end{tabular*}
\end{table*}

\subsection{Matrix Time Series}\label{subsec_matrix_ts}

Finally, we consider matrix-valued observations \(X_t\in\mathbb R^{d\times d}\), with \(\mathcal X=\mathbb R^{d\times d}\) equipped with the Frobenius norm. 
The data are generated from the matrix generalised autoregressive conditional heteroscedasticity
(matrix GARCH) model of \citet{yu2025matrix},
\( X_t=U_t^{1/2}\eta_t V_t^{1/2} \),
where
{\small
\begin{align}
U_t &=\frac{S_{1t}}{\operatorname{tr}(S_{1t})}y_t,
\quad
V_t=\frac{S_{2t}}{\operatorname{tr}(S_{2t})}, \quad
y_t = \omega+\alpha\operatorname{tr}(X_{t-1}X_{t-1}^\top)+\beta y_{t-1},
\nonumber\\
S_{1t} &= A_0A_0^\top+A_1X_{t-1}X_{t-1}^\top A_1^\top
       +A_2S_{1,t-1}A_2^\top, \nonumber\\
S_{2t} &= B_0B_0^\top+B_1X_{t-1}^\top X_{t-1}B_1^\top
       +B_2S_{2,t-1}B_2^\top. \nonumber
\end{align}}
We set $\omega=0.4$, $\alpha=c$, $\beta=0.6$, $A_1=B_1=cI_d$, $A_2=B_2=0.6I_d$.  
The matrices $A_0$ and $B_0$ are equal to the lower triangular matrix with first diagonal element one and all other non-zero lower-triangular entries equal to $0.4$.
The innovations $\{\eta_t\}$ are i.i.d. random matrices with independent standard
normal entries.

Table~\ref{table_matrix_0.05} reports empirical rejection rates for matrix time series at the \(5\%\) level with \(T=200\). The case \(c=0\) assesses size, while \(c=0.2\) and \(c=0.3\) assess power. 
All tests have rejection rates close to \(5\%\) when \(c=0\). 
For \(c>0\), the process is linearly uncorrelated but exhibits nonlinear serial dependence through its conditional row and column covariance matrices.
In this setting, the AutoHSIC tests with bounded kernels show higher empirical power than the ADCV-based tests. The difference is particularly visible for the smaller signal $c=0.2$, suggesting that bounded kernels can be more sensitive to weak nonlinear dependence in matrix time series.

\begin{table*}[t]
\caption{Empirical sizes and powers ($\%$) for matrix time series when $T=200$.}
\label{table_matrix_0.05}
\footnotesize
\tabcolsep=0pt
\begin{tabular*}{\textwidth}{@{\extracolsep{\fill}}ll rrrr r rrrr r rrrr@{\extracolsep{\fill}}}

\toprule
         &  & \multicolumn{4}{c}{LK} && \multicolumn{4}{c}{GK} && \multicolumn{4}{c}{BDK} \\

         \cline{3-6} \cline{8-11} \cline{13-16}

         $c$ & $d$ & $V_{T,1}$ & $V_{T,3}$ & $P_{T,3}$ & $P_{T,6}$
         && $V_{T,1}$ & $V_{T,3}$ & $P_{T,3}$ & $P_{T,6}$ && $V_{T,1}$ & $V_{T,3}$ & $P_{T,3}$ & $P_{T,6}$ \\

        \cline{3-6} \cline{8-11} \cline{13-16} 

         0 & 2 & 4.2 &	5.4 &	4.7 &	5.6 &&	4.2 &	5.3 &	4.6 &	5.9 &&	4.3 &	4.9 &	5.2 &	5.4 \\
         & 5 & 4.8 &	4.6 &	5.3 &	5.9 &&	5.3 &	4.9 &	5.7 &	6.1 &&	5.0 &	5.3 &	5.9 &	5.9 \\
         & 8 & 3.8 &	6.1 &	4.0 &	4.9 &&	4.4 &	5.6 &	3.7 &	4.2 &&	4.9 &	5.1 &	3.5 &	4.0 \\
         
         \vspace{-0.3cm}\\

         0.2 & 2 & 18.3 &	9.3 &	23.1 &	22.7 &&	16.4 &	8.0 &	20.4 &	19.7 &&	5.8 &	5.2 &	7.8& 	8.5 \\
         & 5 & 20.4 &	9.6 &	27.9 &	26.6 &&	14.1 &	7.5 &	19.6 &	17.9 &&	7.2 &	5.5 &	8.9 &	8.7 \\
         & 8 & 40.0 &	19.4 &	51.2 &	51.1 &&	31.8 &	16.4 &	42.7 &	41.1 &&	10.7 &	7.2 &	14.4 &	16.2 \\
         \vspace{-0.3cm}\\

         0.3 & 2 & 55.5 &	34.5 &	66.1 &	66.5 &&	49.4 &	29.1 &	61.1 &	61.7 &&	14.4 &	11.7 &	25.4& 	27.9 \\
         & 5 & 63.6 &	43.3 &	75.5 &	76.2 &&	44.6 &	29.5 &	57.6 &	58.3 &&	14.5 &	12.6 &	25.2 &	27.3 \\
         & 8 & 88.6 	&68.8 &	94.1 &	93.2 &&	74.8 &	55.2 &	84.9 &	84.6 &&	30.5 &	20.9 &	48.6 &	51.9 \\

\midrule
\end{tabular*}
\begin{tablenotes}
\footnotesize
\item Note: Results for $T=100$ are reported in Table~11 of the supplementary material.
\end{tablenotes}
\end{table*}

\section{An Extension to Model Diagnostic Checking}\label{sec_modelcheck}

We next consider model diagnostic checking for multivariate time series. The aim is to assess whether a fitted parametric model has adequately removed serial dependence from the data.
In this setting, AutoHSIC is applied to residuals rather than observations, so parameter estimation may contribute non-negligibly to the limiting null distribution of test statistic.

Let $\mathcal{X}=\mathbb{R}^d$. Following \cite{wang2021new} and
\cite{wan2022goodness}, consider a causal parametric model
\begin{equation}\label{model_f}
X_t=f(I_{t-1},\theta_0,\eta_t),
\end{equation}
where $I_t=(X_t^\top,X_{t-1}^\top,\ldots)^\top$ is the information set, $\theta_0\in\mathbb{R}^p$ is the parameter vector, and $\{\eta_t\in\mathbb{R}^d\}$ is an i.i.d. innovation sequence independent of $\mathcal{F}_{t-1}$, with $\mathcal{F}_t=\sigma(I_t)$. The function $f:\mathbb{R}^{\infty}\times\mathbb{R}^p\times\mathbb{R}^d\to\mathbb{R}^d$ is assumed to be known. This formulation covers a broad class of causal nonlinear time series models, including multivariate volatility models such as BEKK, dynamic conditional correlation \citep{engle1995multivariate,tse2002multivariate}, as well as matrix GARCH models \citep{yu2025matrix}.

Suppose that the model admits the invertible representation
\( \eta_t=g(X_t,I_{t-1},\theta_0) \),
where $g:\mathbb{R}^d\times\mathbb{R}^{\infty}\times\mathbb{R}^p\to\mathbb{R}^d$
is measurable. Let
\( \widehat{\eta}_t = g(X_t,\widehat I_{t-1},\widehat\theta_T) \)
denote the residual, where $\widehat\theta_T$ is an estimator of $\theta_0$ and $\widehat I_{t-1}$ is the observed information set up to time $t-1$, including the initial values.

The diagnostic null hypothesis is
\begin{equation}\nonumber
H_0:\ \{\eta_t\}\ \text{is i.i.d.},
\quad
H_{a,M}:\ V_m^\eta>0\ \text{for some }m\in\{1,\ldots,M\},
\end{equation}
where \( V_m^\eta=\operatorname{HSIC}(\eta_t,\eta_{t-m}) \).
Under correct specification, the innovations should be serially independent.
Since \(\{\eta_t\}\) is unobserved, the tests are constructed from the residuals \(\{\widehat{\eta}_t\}\). Let \(\widehat V_{T,m}\) denote the residual-based AutoHSIC statistic obtained from \eqref{V_T(m)} after replacing \(X_t\) by \(\widehat{\eta}_t\).
Similarly, define the portmanteau statistic
\[
\widehat P_{T,M} =
\sum_{m=1}^M \widehat V_{T,m}.
\]
These statistics target remaining pairwise serial dependence in the innovation sequence up to lag $M$.


\subsection{Asymptotics of $\widehat V_{T,m}$ and $\widehat P_{T,M}$}

We first study the effect of parameter estimation on the null distribution of the residual-based statistics. 
The following assumptions control the smoothness of the innovation map, the asymptotic linearity of the estimator, the effect of initial values, and the regularity of the kernels.

\begin{assumption}\label{ass5.2_g}
Let $g_t(\theta)=g(X_t,I_{t-1},\theta)$. For any
$i,j,q\in\{1,\ldots,p\}$,
{\small
\[
\mathbb{E}\left\{\sup_{\theta\in\Theta}
\Vert \frac{\partial g_t(\theta)}{\partial\theta_i}\Vert\right\}^2<\infty,
\quad
\mathbb{E}\left\{\sup_{\theta\in\Theta}
\Vert \frac{\partial^2 g_t(\theta)}
{\partial\theta_i\partial\theta_j}\Vert \right\}^2<\infty,
\quad
\mathbb{E}\left\{\sup_{\theta\in\Theta}
\Vert \frac{\partial^3 g_t(\theta)}
{\partial\theta_i\partial\theta_j\partial\theta_q}\Vert \right\}^2<\infty,
\]}
where $\Theta\subset\mathbb{R}^p$ is compact.
\end{assumption}

\begin{assumption}\label{ass5.3_esti}
The estimator $\widehat\theta_T$ satisfies the asymptotic linear expansion
\begin{equation}\nonumber
\sqrt{T}(\widehat\theta_T-\theta_0)
=
\frac{1}{\sqrt{T}}\sum_{t=1}^T \pi(X_t,I_{t-1},\theta_0)+o_p(1)
=:
\frac{1}{\sqrt{T}}\sum_{t=1}^T \pi_t+o_p(1),
\end{equation}
where $\pi:\mathbb{R}^d\times\mathbb{R}^{\infty}\times\mathbb{R}^p
\to\mathbb{R}^p$ is measurable,
$\mathbb{E}(\pi_t\mid\mathcal{F}_{t-1})=0$,
$\mathbb{E}\Vert \pi_t \Vert^2<\infty$, and
$T^{-1/2}\sum_{t=1}^T\pi_t\xrightarrow{\mathrm d}\mathcal W$.
\end{assumption}

\begin{assumption}\label{ass5.4_trucation}
Let
\(
\widehat R_t(\theta)=\widehat g_t(\theta)-g_t(\theta) \)
and
\( \widehat g_t(\theta)=g(X_t,\widehat I_{t-1},\theta) \).
Then
\(
\sum_{t=1}^T \sup_{\theta\in\Theta}\Vert \widehat R_t(\theta)\Vert ^3=O_p(1).
\)
\end{assumption}

\begin{assumption}\label{ass5.5_kernel}
The kernels $k$ and $l$, together with their partial derivatives up to second order, are uniformly bounded and Lipschitz continuous. That is, for each
\[
p\in\{k,k_x,k_y,k_{xx},k_{xy},k_{yy},
l,l_x,l_y,l_{xx},l_{xy},l_{yy}\},
\]
there exists a constant $C<\infty$ such that
\(\sup_{x,y}\Vert p(x,y)\Vert \leq C \), 
and \( \Vert p(x_1,y_1)-p(x_2,y_2)\Vert  \leq C\Vert (x_1,y_1)-(x_2,y_2)\Vert 
\).
\end{assumption}

Assumption~\ref{ass5.2_g} is a smoothness and moment condition on the innovation map. 
Assumption~\ref{ass5.3_esti} is the usual asymptotic linearity condition for the estimator and is satisfied by many standard estimators, including maximum likelihood, quasi-maximum likelihood, least squares and nonlinear least squares estimators under regularity conditions. 
Assumption~\ref{ass5.4_trucation} controls the effect of replacing the infinite information set by its observed, initialised version. 
Assumption~\ref{ass5.5_kernel} ensures that Taylor expansions of the residual-based kernel terms are valid. It is satisfied by commonly used smooth bounded kernels such as the Gaussian kernel. 

The following theorem gives the residual-based null distribution. Compared with Theorem~\ref{thm2_nulldis}, the limit contains an additional term due to parameter estimation. 

\begin{theorem}\label{thm5.1_nulldis_esti}
Under $H_0$, suppose that Assumptions~\ref{ass5.1_stationary} and \ref{ass5.2_g}--\ref{ass5.5_kernel} hold. Then, for any fixed maximum lag $M\in\mathbb{Z}^+$,
\begin{equation}\nonumber
T\left(\widehat V_{T,1},\widehat V_{T,2},\ldots,\widehat V_{T,M}\right)^\top
\xrightarrow{\mathrm d}
\left(\xi^{\rm res}_1,\xi^{\rm res}_2,\ldots,\xi^{\rm res}_M\right)^\top,
\qquad T\to\infty,
\end{equation}
where
\(
\xi^{\rm res}_m =
\xi_m+\mathcal W^\top\Lambda^{(23)}\mathcal W, \quad m=1,\ldots,M.
\)
Here $\xi_m$ is defined as in Theorem~\ref{thm2_nulldis}, with $X_t$ replaced by $\eta_t$, $\mathcal W$ is defined in Assumption~\ref{ass5.3_esti}, and $\Lambda^{(23)}$ is a $p\times p$ constant matrix determined by the parameter estimation effect. Its explicit form is given in the supplementary material.
\end{theorem}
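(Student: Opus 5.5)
The argument is a second-order Taylor expansion of the residual-based statistic around $\theta_0$. It splits $T\widehat V_{T,m}$ into four pieces: the infeasible statistic $TV^\eta_{T,m}$ built from the true innovations, a cross term linear in $\sqrt{T}(\widehat\theta_T-\theta_0)$, a quadratic term, and remainders that are $o_p(1)$.

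\textbf{Step 1: drop the initialisation error.} First I would replace $\widehat\eta_t=\widehat g_t(\widehat\theta_T)$ by $\tilde\eta_t=g_t(\widehat\theta_T)$. Since the kernels are bounded and Lipschitz (Assumption~\ref{ass5.5_kernel}), each $a_{ij,m}b_{ij,m}$ changes by at most $C(\Vert\widehat R_i\Vert+\Vert\widehat R_j\Vert+\Vert\widehat R_{i-m}\Vert+\Vert\widehat R_{j-m}\Vert)$. A cruder bound only gives $O_p(1)$ for the total change, so that is not enough. Instead, I would expand to first order in $\widehat R_t$ and use the U-centring: it averages out the $j$-sum, so the change in $T\widehat V_{T,m}$ is $o_p(1)$. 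This uses Assumption~\ref{ass5.4_trucation} together with H\"older's inequality.

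\textbf{Step 2: expand in $\theta$.} Next I would write $V_{T,m}(\theta)$ for the statistic built from $g_t(\theta)$, in its U-statistic form \eqref{VTm_Ustat} with kernel $h$, and expand around $\theta_0$:
\[
V_{T,m}(\widehat\theta_T)=V_{T,m}(\theta_0)+\dot V_{T,m}(\theta_0)^\top(\widehat\theta_T-\theta_0)+\tfrac12(\widehat\theta_T-\theta_0)^\top\ddot V_{T,m}(\bar\theta)(\widehat\theta_T-\theta_0).
\]
The zeroth-order term satisfies $TV_{T,m}(\theta_0)\xrightarrow{\mathrm d}\xi_m$ by Theorem~\ref{thm2_nulldis} applied to the i.i.d. sequence $\{\eta_t\}$. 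The moment condition needed there holds because the kernels are bounded.

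\textbf{Step 3: the gradient term.} The gradient $\dot V_{T,m}(\theta_0)$ is itself a U-statistic whose kernel involves $\partial_x k$ and $\partial_x l$ multiplied by $\partial g_t/\partial\theta$. Under $H_0$, $\eta_t$ is independent of $\mathcal F_{t-1}$, while $\partial g_t/\partial\theta$ depends on the past. I would show that its mean is zero: the outer centring of $d_k$ at the non-differentiated arguments, together with the product structure under independence, kills the expectation. I would then show that its Hoeffding projection is degenerate up to a term of order $O_p(T^{-1})$, not merely $T^{-1/2}$. That gives $T\dot V^\top(\widehat\theta_T-\theta_0)=\sqrt{T}\dot V^\top\cdot O_p(1)$, and I need $\sqrt T\dot V=o_p(1)$ for this to vanish. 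I would try to verify that this holds exactly by working through the Hoeffding decomposition, using Assumption~\ref{ass5.2_g} for the moments.

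If instead a nondegenerate projection survives, there is a fallback. That projection is a martingale-difference-type sum, and it pairs with $T^{-1/2}\sum\pi_t$ through Assumption~\ref{ass5.3_esti}. The resulting contributions would still be absorbed into a quadratic form in $\mathcal W$, since they are of the form $\sqrt{T}\,\dot V\cdot\sqrt T(\widehat\theta_T-\theta_0)$.

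\textbf{Step 4: the Hessian term.} For the quadratic term, I would apply a uniform law of large numbers to $\ddot V_{T,m}(\theta)$ over a shrinking neighbourhood of $\theta_0$. This uses the third derivatives in Assumption~\ref{ass5.2_g} to control $\ddot V(\bar\theta)-\ddot V(\theta_0)$, and the Lipschitz property of the second kernel derivatives. The Hessian then converges in probability to a constant matrix, $2\Lambda^{(23)}$. The superscript suggests this matrix collects the non-vanishing blocks of the Hoeffding expansion of order 2 and 3 in the derivative terms. The limit is a lag-free constant matrix: by stationarity, the structure of $d_k d_l$ involves $\mathbb E[\partial g\,\partial g^\top]$ factors that do not depend on $m$.

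\textbf{Step 5: joint convergence.} Combining the steps gives
\[
T\widehat V_{T,m}=TV_{T,m}(\theta_0)+\sqrt T(\widehat\theta_T-\theta_0)^\top\Lambda^{(23)}\sqrt T(\widehat\theta_T-\theta_0)+o_p(1).
\]
Joint convergence of the vector over $m\le M$ together with $T^{-1/2}\sum\pi_t$ would follow from the continuous mapping theorem. The statement only requires the marginal form $\xi_m+\mathcal W^\top\Lambda^{(23)}\mathcal W$, and the same $\mathcal W$ appears across all $m$.

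The main obstacle is Step 3: establishing that the first-order cross term is negligible, or else characterising it exactly. This requires a careful Hoeffding decomposition of the derivative U-statistic with dependent $\partial g_t$. I would attack it by conditioning on $\mathcal F_{t-1}$ and exploiting the independence of $\eta_t$ from $\mathcal F_{t-1}$, which is what forces the centred kernel projections to vanish.
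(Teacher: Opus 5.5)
You follow the paper's architecture, but your Step~3 is a genuine gap, and your fallback for it is incorrect. The paper's expansion lemma is
$\widehat V_{T,m}=V_{T,m}+\zeta_T^\top(V^{(11)}_{T,m}+V^{(12)}_{T,m})+\tfrac12\zeta_T^\top(V^{(21)}_{T,m}+V^{(22)}_{T,m})\zeta_T+\zeta_T^\top V^{(23)}_{T,m}\zeta_T+R_{T,m}$.
So your gradient is $V^{(11)}_{T,m}+V^{(12)}_{T,m}$ and your Hessian is $V^{(21)}_{T,m}+V^{(22)}_{T,m}+2V^{(23)}_{T,m}$. The paper closes Step~3 exactly as you hope, by asserting (part (b) of its degeneracy lemma, imported from Wang et al.) that every $V^{(ab)}_{T,m}$ with $a,b\in\{1,2\}$ is first-order degenerate, so $T\zeta_T^\top V^{(ab)}_{T,m}=o_p(1)$.

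You do not establish Step~3. If the projection of $\dot V_{T,m}(\theta_0)$ is non-degenerate, then $\sqrt T\,\dot V_{T,m}(\theta_0)$ converges to a Gaussian $N_m$, jointly Gaussian with $(G_{\ell,m})_\ell$ and $\mathcal W$. Then $T\zeta_T^\top\dot V_{T,m}(\theta_0)\to\mathcal W^\top N_m$. This is a bilinear cross term, not a quadratic form in $\mathcal W$, and it changes the limit. It is the analogue of the term behind the degrees-of-freedom correction for Box--Pierce tests on ARMA residuals.

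Moreover, the degeneracy is not just unverified: it fails in general for lagged pairs of a single series. Conditioning on $\mathcal F_{t-1}$ does kill the projection when the derivative falls on the lagged coordinate ($V^{(12)}_{T,m}$), since $\eta_t$ is independent of $(\eta_{t-m},\partial g_{t-m}(\theta_0)/\partial\theta)$. It does not work when the derivative falls on the current coordinate ($V^{(11)}_{T,m}$). If $\partial g_t(\theta_0)/\partial\theta$ is $\mathcal F_{t-1}$-measurable, a direct computation gives
$\mathbb E\{h^{(11)}(s_1,\varsigma_{t,m},\varsigma'_{t,m},\varsigma''_{t,m})\}=\tfrac12\,\mathbb E\{\tfrac{\partial g_t(\theta_0)}{\partial\theta}\,d_l(y_1,\eta_{t-m})\}\{\mathbb E k_y(e_1,\eta)-\mathbb E k_y(\eta',\eta)\}$.
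Here $e_1,y_1$ are the innovation coordinates of $s_1$, and $\eta,\eta'$ are independent copies of $\eta_t$. The first factor does not vanish, because $\partial g_t/\partial\theta$ depends on $\eta_{t-m}$.

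A concrete example is $X_t=\theta X_{t-1}+\eta_t$, where $\partial g_t/\partial\theta=-X_{t-1}$. With linear kernels, the leading part of the residual statistic is $\{\hat\gamma_\eta(m)-\theta^{m-1}\sigma^2\zeta_T\}^2$, with $\hat\gamma_\eta(m)=T^{-1}\sum_t\eta_t\eta_{t-m}$ and $\sigma^2=\operatorname{Var}(\eta_t)$. Its cross term $-2\theta^{m-1}\sigma^2\zeta_T\hat\gamma_\eta(m)$ is of exact order $T^{-1}$. Gaussian kernels with Gaussian innovations, which do satisfy the paper's kernel assumptions, also give a non-zero projection above.

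The lemma the paper cites was proved for two mutually independent series. There the derivative process of one series is independent of the other series' innovations, and that independence is what is missing here. To finish, you need either extra conditions that make this projection vanish, or to carry $\mathcal W^\top N_m$ into the limit. In the latter case the limit becomes a quadratic form in the jointly Gaussian $(G_{\ell,m},\mathcal W)$, not $\xi_m+\mathcal W^\top\Lambda^{(23)}\mathcal W$.

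A smaller point on Step~4: the Hessian limit is not lag-free. It depends on $m$ through the joint law of $(\partial g_t/\partial\theta,\eta_{t-m})$; in the AR(1) example it scales like $\theta^{2(m-1)}$.
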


Theorem~\ref{thm5.1_nulldis_esti} shows that the residual-based statistic has a different null distribution from the infeasible statistic based on the true innovations. The additional quadratic term reflects the interaction between the estimation error in $\widehat\theta_T$ and the non-degenerate kernel structure of AutoHSIC. Even in the simple AR model $X_t=\theta X_{t-1}+\eta_t$, this effect is generally non-zero; for example, the corresponding \( \Lambda^{(23)}\) reduces to a term,
\( (1-\theta)^2 \mathbb{E}\{\eta_1 k_x(\eta_2,\eta_1)\} \mathbb{E}\{\eta_1 l_x(\eta_2,\eta_1)\} \).
Thus the residual-based limiting null distribution contains nuisance quantities from both the kernel operator and parameter estimation. This motivates the residual bootstrap introduced below.

We next consider fixed alternatives. The following result shows that the diagnostic statistics diverge when the fitted model leaves serial dependence in the innovation sequence.

\begin{theorem}\label{thm5.2_power}
Under $H_{a,M}$, suppose that Assumptions~\ref{ass5.1_stationary}, \ref{assumption1_beta} and \ref{ass5.2_g}--\ref{ass5.5_kernel} hold. Then, for any lag $m$ such that the corresponding residual AutoHSIC population quantity is positive,
\(
T\widehat V_{T,m}\xrightarrow{\mathrm p}\infty
\), as \(T \to \infty\).
Consequently, 
\(
T\widehat P_{T,M}\xrightarrow{\mathrm p}\infty
\), as \(T \to \infty\).
\end{theorem}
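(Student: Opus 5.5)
The plan is to show that $\widehat V_{T,m}$ converges in probability to the positive constant $V_m^\eta$. Once that holds, $T\widehat V_{T,m}\ge T V_m^\eta/2$ with probability tending to one, so $T\widehat V_{T,m}\xrightarrow{\mathrm p}\infty$.

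\textbf{Step 1: the portmanteau statistic.} For $T\widehat P_{T,M}$, note that under $H_{a,M}$ every other lag $m'$ has $V_{m'}^\eta\ge 0$. Hence each $T\widehat V_{T,m'}$ is bounded below by $T\{V_{m'}^\eta+o_p(1)\}$, and the total diverges because of the lag with $V_m^\eta>0$. More concretely, $\widehat P_{T,M}\xrightarrow{\mathrm p}\sum_{m'} V_{m'}^\eta>0$ once consistency holds at each lag. So everything reduces to lag-wise consistency of the residual statistic.

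\textbf{Step 2: decomposition.} Write $\widehat V_{T,m}=V_{T,m}^\eta+(\widehat V_{T,m}-V_{T,m}^\eta)$, where $V_{T,m}^\eta$ is the infeasible statistic computed from the true $\eta_t=g_t(\theta_0)$. The process $\{\eta_t\}$ is a measurable function of the stationary $\beta$-mixing $\{X_t\}$, through $X_t$ and $I_{t-1}$. I would take it, as the paper implicitly does, to be strictly stationary and $\beta$-mixing with the rate in Assumption~\ref{assumption1_beta}. Since the kernels are bounded by Assumption~\ref{ass5.5_kernel}, Assumption~\ref{assumption_bound_2+2delta} holds trivially. Theorem~\ref{thm1_VTm_consis} then gives $V_{T,m}^\eta\xrightarrow{\mathrm p}V_m^\eta$. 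If $\{\eta_t\}$ only inherits mixing approximately, through infinitely many lags, I would add a near-epoch approximation. For this sketch I would state it as part of the maintained mixing assumption.

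\textbf{Step 3: the estimation error.} It remains to show $\widehat V_{T,m}-V_{T,m}^\eta=o_p(1)$; this is far weaker than what the null analysis needs. Replace $\widehat\eta_t$ first by $g_t(\widehat\theta_T)$, then by $g_t(\theta_0)$. The statistic is a finite linear combination of normalized sums of products of at most two kernel evaluations, $T^{-c}\sum k(\cdot,\cdot)l(\cdot,\cdot)$ and similar, with normalizations matching the number of free indices. Because $k$ and $l$ are bounded and Lipschitz, each product changes by at most $C(\Vert\widehat\eta_i-\eta_i\Vert+\Vert\widehat\eta_j-\eta_j\Vert+\cdots)$. Each normalized sum therefore changes by at most $C\,T^{-1}\sum_t\Vert\widehat\eta_t-\eta_t\Vert$.

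\textbf{Step 4: bounding the residual error.} Split $\widehat\eta_t-\eta_t=\widehat R_t(\widehat\theta_T)+\{g_t(\widehat\theta_T)-g_t(\theta_0)\}$. For the second term, the mean value theorem gives the bound $\Vert\widehat\theta_T-\theta_0\Vert\sup_\theta\Vert\partial g_t/\partial\theta\Vert$. Its average is $O_p(T^{-1/2})\cdot O_p(1)$, using Assumption~\ref{ass5.3_esti} together with Assumption~\ref{ass5.2_g} and stationarity. Note that Assumption~\ref{ass5.3_esti} under the alternative is read as $\widehat\theta_T$ converging to a pseudo-true value $\theta_0$ at root-$T$ rate. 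For the first term, Hölder's inequality gives $T^{-1}\sum_t\Vert\widehat R_t\Vert\le T^{-1/3}(\sum_t\sup_\theta\Vert\widehat R_t(\theta)\Vert^3)^{1/3}=O_p(T^{-1/3})$ by Assumption~\ref{ass5.4_trucation}. Thus the difference is $o_p(1)$.

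The main obstacle is the justification in Step 2 that the innovation sequence (or pseudo-innovations under misspecification) satisfies the stationarity and mixing conditions needed to invoke Theorem~\ref{thm1_VTm_consis}. The perturbation bound in Steps 3 and 4 is routine given bounded Lipschitz kernels.
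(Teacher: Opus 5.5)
Your argument is correct, but it takes a different route from the paper.

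\textbf{The paper's route.} The paper defers to Theorem~3.2 of \cite{wang2021new}. It starts from the second-order expansion of Lemma~\ref{lemma_expansion},
$\widehat V_{T,m}=V_{T,m}+\zeta_T^\top(V_{T,m}^{(11)}+V_{T,m}^{(12)})+\tfrac12\zeta_T^\top(V_{T,m}^{(21)}+V_{T,m}^{(22)})\zeta_T+\zeta_T^\top V_{T,m}^{(23)}\zeta_T+R_{T,m}$.
It then uses three facts:
\begin{enumerate}
\item each $V_{T,m}^{(ab)}=O_p(1)$, by a law of large numbers for U-statistics of mixing sequences;
\item $\zeta_T=O_p(T^{-1/2})$;
\item the remainder bound $\sqrt T\Vert R_{T,m}\Vert=o_p(1)$ of Lemma~\ref{lemma_a.2}.
\end{enumerate}
This gives $\widehat V_{T,m}=V_{T,m}+O_p(T^{-1/2})$, and Theorem~\ref{thm1_VTm_consis} then supplies the positive limit $V_m^\eta$.

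\textbf{Your route.} You bypass the expansion entirely. A first-order Lipschitz perturbation bound on the fourth-order U-statistic form gives
$|\widehat V_{T,m}-V_{T,m}|\le CT^{-1}\sum_t\Vert\widehat\eta_t-\eta_t\Vert=O_p(T^{-1/2})+O_p(T^{-1/3})$.
This is more elementary and uses less:
\begin{itemize}
\item only boundedness and Lipschitz continuity of $k$ and $l$ themselves, not of their derivatives;
\item only the first-derivative part of Assumption~\ref{ass5.2_g};
\item only consistency $\widehat\theta_T\xrightarrow{\mathrm p}\theta_0$ (to a pseudo-true value), since $\Vert\widehat\theta_T-\theta_0\Vert$ is merely multiplied by an $O_p(1)$ average.
\end{itemize}
The last point matters: you do not need the martingale-difference asymptotic linear representation of Assumption~\ref{ass5.3_esti}. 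That representation is awkward to justify under the alternative, where the estimating equations need not have martingale-difference structure.

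\textbf{Trade-off.} Your bound is a coarser $O_p(T^{-1/3})$ approximation, driven by the initial-value term through Assumption~\ref{ass5.4_trucation}. The paper's expansion yields a $\sqrt T$-accurate approximation, which would also support, for example, a central limit theorem for $\widehat V_{T,m}$ under fixed alternatives. For the divergence claim, consistency is all that is needed.

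\textbf{Mixing of $\{\eta_t\}$.} The issue you flag as the main obstacle is shared by the paper. To invoke Theorem~\ref{thm1_VTm_consis} one needs mixing of $\{\eta_t\}$, and the paper implicitly reads Assumption~\ref{assumption1_beta} as applying to the innovation sequence. Treating this as a maintained assumption is therefore consistent with the paper.

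\textbf{Portmanteau statistic.} In Step~1, the version based on $\widehat P_{T,M}\xrightarrow{\mathrm p}\sum_{m'}V_{m'}^\eta>0$ is the one that actually closes the argument. The lower bound $T\{V_{m'}^\eta+o_p(1)\}$ alone does not control lags with $V_{m'}^\eta=0$.
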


\subsection{Residual Bootstrap}\label{subsec_boot}

The null limit in Theorem~\ref{thm5.1_nulldis_esti} is non-pivotal, depending on the estimation effect as well as unknown features of the innovation distribution and kernels. 
We therefore use a residual bootstrap to approximate critical values.
The bootstrap refits the model in each replication, so that the variability due to estimating $\theta_0$ is reflected in the bootstrap residuals.
Such methods are widely used in time series model checking \citep{politis2003impact, wang2021new}. The procedure is summarised in Algorithm~\ref{algo2_res}.

\begin{algorithm}[!t] \caption{Residual bootstrap.}\label{algo2_res} 
\small
\begin{algorithmic}

\STATE {\bf Step~1}: Given the maximum lag $M$, the original sample $\{X_t\}_{t=1}^T$, and the number of bootstrap replications $B$. 

\STATE {\bf Step~2}: Estimate model \eqref{model_f}, obtain the estimator $\widehat{\theta}_T$ and residuals $\{\widehat{\eta}_t\}_{t=1}^T$ based on $\{X_t\}_{t=1}^T$.

\STATE {\bf Step~3}: 
\FOR{$b=1,\ldots,B$}
\STATE \parbox[t]{0.95\linewidth}{%
{\bf Step~3.1}: Generate bootstrap innovations $\{\widehat{\eta}_t^{*(b)}\}_{t=1}^T$ (after standardisation) by resampling with replacement from $\{\widehat{\eta}_t\}_{t=1}^T$. Then generate bootstrap data $\{X_t^{*(b)}\}_{t=1}^T$ according to \eqref{model_f}, based on $\widehat{\theta}_T$ and $\{\widehat{\eta}_t^{*(b)}\}_{t=1}^T$. }

\STATE \parbox[t]{0.95\linewidth}{%
{\bf Step~3.2}: Compute $\widehat{\theta}_T^{*(b)}$ based on $\{X_t^{*(b)}\}_{t=1}^T$, and then calculate the bootstrap residuals $\{ \widehat{\eta}_t^{**(b)}\}_{t=1}^T$ with $\widehat{\eta}_t^{* *(b)} =g (X_t^{*(b)}, \widehat{I}_{t-1}^{*(b)}, \widehat{\theta}_T^{*(b)} )$, where $\widehat{I}_{t-1}^{*(b)}$ is the bootstrap information set up to time $t-1$. }

\STATE \parbox[t]{0.95\linewidth}{%
{\bf Step~3.3}: Calculate the bootstrap test statistics $\widehat{V}_{T,m}^{*(b)}$ in the same way as $\widehat{V}_{T,m}$ and $\widehat{P}_{T,M}^{*(b)} =\sum_{m=1}^M \widehat{V}_{T,m}^{*(b)}$, where $\widehat{\eta}_t^{* *(b)}$ replaces $\widehat{\eta}_t$. }
\ENDFOR

\STATE {\bf Step~4}:
Let $\widehat{V}_{T,m;1-\alpha}^*$ and $\widehat{P}_{T,M;1-\alpha}^*$ denote the empirical 
$1-\alpha$ quantiles of
$\{\widehat{V}_{T,m}^{*(b)}:b=1,\ldots,B\}$ and
$\{\widehat{P}_{T,M}^{*(b)}:b=1,\ldots,B\}$, respectively.
Reject $H_0$ in the single-lag test at lag $m$ if
$T\widehat{V}_{T,m}> T\widehat{V}_{T,m;1-\alpha}^*$, and reject $H_0$ in the portmanteau test if
$T\widehat{P}_{T,M}> T\widehat{P}_{T,M;1-\alpha}^*$.
\end{algorithmic} \end{algorithm}

To state the bootstrap validity result, let $\mathbb{E}^*$ denote conditional expectation given $\{X_t\}_{t=1}^T$. We impose the following conditions similar to Assumption~A7 of \cite{escanciano2006}, which require the bootstrap to reproduce the covariance structure of the estimator and its interaction with the residual AutoHSIC kernel.

\begin{assumption}\label{ass_boot1} 
    The bootstrap estimator $\widehat{\theta}_T^*$ satisfies that
    \begin{equation}\nonumber
\sqrt{T} (\widehat{\theta}_T^*-\widehat{\theta}_T ) 
 =\frac{1}{\sqrt{T}} \sum_{t=1}^T \pi (X_t^*, \widehat{I}_{t-1}^*, \widehat{\theta}_T )+o_p^*(1) =: \frac{1}{\sqrt{T}} \sum_{t=1}^T \pi_t^*+o_p^*(1),
\end{equation}
where $\pi$ is defined in Assumption~7 and $\mathbb{E}^*(\pi_t^* \mid \widehat{I}_{t-1}^*)=0$.
\end{assumption}

\begin{assumption}\label{ass_boot2}
The following convergence results hold:

\noindent $(a)$ \( \frac{1}{T}\sum_{t=1}^{T}
\mathbb{E}^*(\pi_t^*\pi_t^{*\top}) \xrightarrow{\mathrm p}
\mathbb{E}(\pi_t\pi_t^\top) \);

\noindent $(b)$ \( \frac{1}{T-m}\sum_{t=m+1}^{T}
\mathbb{E}^*\{\Phi_\ell^*(\widehat\eta_{t,m}^*)\pi_t^*\} \xrightarrow{\mathrm p} \mathbb{E}\{\Phi_\ell(\eta_{t,m})\pi_t\} \),\\
for each \(\ell \geq 1\) and \( m= 1,\ldots, M\).
Here
\( \eta_{t,m} = (\eta_t,\eta_{t-m}) \), \( \widehat\eta_{t,m}^* = (\widehat\eta_t^*,\widehat\eta_{t-m}^*)\),
and $\{\Phi_\ell^*(\cdot)\}_{\ell=1}^{\infty}$ are the orthonormal eigenfunctions of
\( h_2^{(0*)}(z_1,z_2) = \mathbb{E}^* \{h(z_1,z_2,\widehat\eta_{t,m}^*,\widehat\eta_{t,m}^{*\prime})\},
\)
where $\widehat\eta_{t,m}^{*\prime}$ is an independent copy of $\widehat\eta_{t,m}^*$.
\end{assumption}

The following theorem establishes bootstrap validity. Under the null, the residual bootstrap reproduces the limiting distribution of the residual-based statistics, including the contribution from parameter estimation. Under fixed alternatives, the bootstrap statistics remain bounded at the null scale, whereas the original residual-based statistics diverge.

\begin{theorem}\label{thm5.3_boot}
Suppose that Assumptions~\ref{ass5.1_stationary} and \ref{ass5.2_g}--\ref{ass_boot2} hold. Then the following statements hold.

\noindent $(a)$ Under $H_0$,
\(
T\left(\widehat V_{T,1}^*,\ldots,\widehat V_{T,M}^*\right)^\top
\xrightarrow{\mathrm d^*}
\left(\xi^{\rm res}_1,\ldots,\xi^{\rm res}_M\right)^\top
\) in probability, as $T \to \infty$,
where $\left(\xi^{\rm res}_1,\ldots,\xi^{\rm res}_M\right)^\top$ is defined in
Theorem~\ref{thm5.1_nulldis_esti}.

\noindent $(b)$ Under $H_{a,M}$ and Assumption~\ref{assumption1_beta},
\(
T\widehat V_{T,m}^*=O_p^*(1)
\) in probability
for each fixed $m\in\{1,\ldots,M\}$, and
\(
T\widehat P_{T,M}^*=O_p^*(1)
\) in probability.
\end{theorem}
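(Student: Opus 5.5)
The idea is to treat the bootstrap world as a second, conditionally defined instance of the residual setting of Theorem~\ref{thm5.1_nulldis_esti}, and to redo that proof conditionally on $\{X_t\}_{t=1}^T$. In the bootstrap world the data $\{X_t^*\}$ are generated from model \eqref{model_f} with true parameter $\widehat\theta_T$ and innovations $\widehat\eta_t^*$ that are i.i.d. draws from the standardised empirical distribution $\widehat F_T$ of the residuals. Hence the bootstrap innovation sequence satisfies the null hypothesis \emph{exactly}, and this holds under both $H_0$ and $H_{a,M}$, because the resampling destroys any serial dependence.

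The first step is the Taylor expansion. Using Assumption~\ref{ass5.5_kernel}, we expand $\widehat V_{T,m}^*$ around the infeasible statistic built from the true bootstrap innovations $\widehat\eta_t^*$, in powers of $\widehat\theta_T^*-\widehat\theta_T$, exactly as in the proof of Theorem~\ref{thm5.1_nulldis_esti}. Three conditions control the expansion:
\begin{itemize}
\item Assumption~\ref{ass_boot1} supplies the linear representation $T^{-1/2}\sum_t\pi_t^*$.
\item Assumptions~\ref{ass5.2_g} and \ref{ass5.4_trucation}, applied under $\mathbb{E}^*$ and transferred from the sample by the consistency of $\widehat\theta_T$, control the remainders and the initial-value effect.
\end{itemize}
This yields
\[
T\widehat V_{T,m}^*=T V_{T,m}^{*,\eta}+2\,T^{-1/2}\sum_{t}\pi_t^{*\top}\,\Lambda^{*}_{m}\,T^{1/2}V^{*,\eta}_{\mathrm{cross}}+\mathcal W^{*\top}\Lambda^{(23)*}\mathcal W^*+o_p^*(1),
\]
in probability. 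Here $\Lambda^{(23)*}$ is the bootstrap analogue of $\Lambda^{(23)}$, computed from $\widehat F_T$ and $\widehat\theta_T$. It converges in probability to $\Lambda^{(23)}$ by the Lipschitz and boundedness properties of the kernels together with $\widehat F_T\Rightarrow F_\eta$.

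The second step is the joint limit of the leading terms. The term $TV_{T,m}^{*,\eta}$ is a degenerate U-statistic in the $m$-dependent lagged pairs $\widehat\eta^*_{t,m}$. Following the spectral argument of Theorem~\ref{thm2_nulldis}, we write $h_2^{(0*)}=\sum_\ell\lambda_\ell^*\Phi_\ell^*\otimes\Phi_\ell^*$. Then
\[
TV^{*,\eta}_{T,m}\approx\sum_\ell\lambda^*_\ell\Bigl\{\bigl(T^{-1/2}\textstyle\sum_t\Phi^*_\ell(\widehat\eta^*_{t,m})\bigr)^2-1\Bigr\}.
\]
We truncate at $L$ terms and apply a conditional martingale CLT (Lindeberg holds via the fourth moments of the bounded kernels) to the vector
\[
\Bigl(T^{-1/2}\textstyle\sum_t\Phi^*_\ell(\widehat\eta^*_{t,m}),\;T^{-1/2}\textstyle\sum_t\pi^*_t\Bigr)_{\ell\le L,\,m\le M}.
\]
Its conditional covariance converges to the population one for three reasons:
\begin{itemize}
\item Assumption~\ref{ass_boot2}(a) handles the $\pi^*$ block.
\item Assumption~\ref{ass_boot2}(b) handles the cross covariances.
\item The $\Phi^*$ block is orthonormal by construction, and distinct lags are uncorrelated under i.i.d. innovations.
\end{itemize}
Eigenvalue convergence $\lambda^*_\ell\to\lambda_\ell$ follows from Hilbert–Schmidt convergence of $h_2^{(0*)}$ to $h_2^{(0)}$, since $\widehat F_T\Rightarrow F_\eta$ and the kernels are bounded and continuous. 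The truncation tail is made uniformly small through $\sum_{\ell>L}\lambda_\ell^{*2}$. Combining these with the continuous mapping theorem gives part (a), in the sense of \cite{li2003consistent}, matching the representation $\xi_m+\mathcal W^\top\Lambda^{(23)}\mathcal W$ obtained in Theorem~\ref{thm5.1_nulldis_esti}.

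For part (b), none of the above steps used $H_0$ for the original data beyond requiring that $\widehat\theta_T$ has a probability limit $\theta^\dagger$ and that the residuals have a limiting marginal law. Under $H_{a,M}$, Assumption~\ref{assumption1_beta} gives these: with the $\beta$-mixing, ergodic averages and the residual empirical law converge, so the bootstrap innovations are still i.i.d. from a law converging to a fixed limit. The same expansion then gives $T\widehat V^*_{T,m}$ converging conditionally to a finite mixture-type law, which implies $O^*_p(1)$ in probability. Summing over $m$ gives the result for $T\widehat P^*_{T,M}$.

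I expect the main obstacle to be the uniform control of the Taylor remainders and of the infinite spectral truncation under the conditional law. These require $\mathbb{E}^*$ moment bounds that hold in probability, derived from sample averages of the $\sup_\theta$ derivative envelopes. I would try to handle them by Markov's inequality applied to $\mathbb{E}\,\mathbb{E}^*(\cdot)$.
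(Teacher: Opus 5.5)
Your architecture is the paper's. You replay the proof of Theorem~\ref{thm5.1_nulldis_esti} conditionally on the data, using that the resampled innovations are conditionally i.i.d. You expand $\widehat V^*_{T,m}$ as in Lemma~\ref{lemma_expansion} and write the leading degenerate term through the eigenpairs $(\lambda^*_\ell,\Phi^*_\ell)$ of $h_2^{(0*)}$. You then take a joint conditional CLT for $(\Phi^*_\ell(\widehat\eta^*_{t,m}),\pi^*_t)$ from Assumptions~\ref{ass_boot1}--\ref{ass_boot2}, and finish with $\lambda^*_\ell\to\lambda_\ell$, $\Lambda^{(23*)}\to\Lambda^{(23)}$ and continuous mapping.

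\textbf{The main gap is in your displayed expansion.} It contains a term $2\,T^{-1/2}\sum_t\pi_t^{*\top}\Lambda^*_m\,T^{1/2}V^{*,\eta}_{\mathrm{cross}}$ that you never define or dispose of. Your final limit $\xi_m+\mathcal W^\top\Lambda^{(23)}\mathcal W$ has no counterpart to it, so if this term were $O_p^*(1)$ your stated limit would be wrong. The only terms linear in $\zeta_T^*$ are $\zeta_T^{*\top}V^{(11*)}_{T,m}$ and $\zeta_T^{*\top}V^{(12*)}_{T,m}$. What kills them is the bootstrap analogue of Lemma~\ref{lemma_degenerate}, which you never invoke. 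Because the bootstrap world satisfies the null exactly (your own first-paragraph observation), $h^{(ab)}$ for $a,b\in\{1,2\}$ has vanishing first-order projection under $\mathbb{E}^*$. Hence $T\zeta_T^{*\top}V^{(1b*)}_{T,m}=(\sqrt T\zeta_T^*)^\top\sqrt T\,V^{(1b*)}_{T,m}=O_p^*(1)\,O_p^*(T^{-1/2})$. Likewise $(\sqrt T\zeta_T^*)^\top V^{(2b*)}_{T,m}(\sqrt T\zeta_T^*)=o_p^*(1)$ for $b=1,2$, and your expansion drops these terms without comment. By contrast, $h^{(23)}$ has a \emph{constant}, generally nonzero, projection. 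That is exactly why only $(\sqrt T\zeta_T^*)^\top\Lambda^{(23*)}(\sqrt T\zeta_T^*)$ survives, and it is how the paper passes from \eqref{expansion boot} to \eqref{VT_hat star}. Part (b) reuses the same expansion and so needs the same lemma.

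\textbf{Convergence of $\Lambda^{(23*)}$ and remainder control.} $\Lambda^{(23*)}\to\Lambda^{(23)}$ does not follow from $\widehat F_T\Rightarrow F_\eta$ and bounded Lipschitz kernels alone. $h^{(23)}$ contains products of two derivative factors $\partial g_t/\partial\theta$. These are unbounded, and their bootstrap law is generated by the model recursion at $\widehat\theta_T$; in the AR(1) example, $\partial g_t/\partial\theta=-X_{t-1}=-\sum_{j\ge0}\theta^j\eta_{t-1-j}$. You need joint weak convergence of the bootstrap law of $\widehat\varsigma^*_{t,m}$ plus uniform integrability of $\Vert\partial g^*_t/\partial\theta\Vert^2$. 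That in turn requires second-moment convergence of the standardised residual law and continuity in $\theta$; the paper imports this via Lemma~\ref{lemma a.4 in wang}(b). The same issue affects your remainder control. The envelopes $\sup_\theta\Vert\partial g^*_t(\theta)/\partial\theta\Vert$ are functionals of the bootstrap path, not sample averages of the original envelopes. So Markov's inequality on $\mathbb{E}\,\mathbb{E}^*$ needs moment bounds for the bootstrap process itself.

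\textbf{Eigenvalue step.} This is a reasonable alternative to the paper's route (uniform convergence on compacts plus Dunford--Schwartz), and can be made cleaner. $h_2^{(0*)}$ and $h_2$ act on different spaces, $L^2(\widehat F_T^{\otimes2})$ and $L^2(F_\eta^{\otimes2})$. State the Hilbert--Schmidt convergence instead for the centred covariance operators $C^{(k)}\otimes C^{(l)}$ on the fixed space $\mathcal H_k\otimes\mathcal H_l$, which share their nonzero spectra with the integral operators. With bounded continuous kernels, weak convergence then gives $\lambda^*_\ell\to\lambda_\ell$ and uniform control of $\sum_{\ell>L}\lambda^{*2}_\ell$.

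\textbf{Part (b).} A conditional limit law under $H_{a,M}$ is more than you need. Use $\sqrt T\zeta^*_T=O_p^*(1)$ from Assumptions~\ref{ass_boot1}--\ref{ass_boot2}(a), and note that $\mathbb{E}^*(TV^{(0*)}_{T,m})^2=O_p(1)$ by boundedness of the kernels. The degeneracy lemma then handles the remaining terms.
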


Theorem~\ref{thm5.3_boot} justifies the use of the residual bootstrap for diagnostic testing. It provides data-driven critical values for the non-pivotal residual-based null distribution. Together with Theorem~\ref{thm5.2_power}, it also shows that the bootstrap tests have asymptotic power equal to one against fixed alternatives.

\section{Applications}\label{sec_realdata}

We illustrate the proposed tests in two empirical settings. The first application concerns functional time series of cumulative intraday returns. The second application considers a $3\times 3$ matrix-valued time series of credit default swap returns. In both cases, empirical $p$-values are computed using 1000 bootstrap replications.

\subsection{Cumulative Intraday Returns}

Functional data methods provide a natural framework for analysing intraday financial returns over a trading day; see, for example, \cite{Hormann2013farch} and \cite{cerovecki2019functional}. We use this application to illustrate two roles of AutoHSIC: testing for serial independence before modelling, and checking whether a fitted functional volatility model has captured the serial dependence adequately.

Following \cite{gabrys2010tests}, the cumulative intraday return (CIDR) curve on day \(t\) is defined as
\(
X_t(s)=100\log\{P_t(s)/P_t(0)\},\ s\in[0,1] \),
where \(P_t(s)\) is the stock price at rescaled trading time \(s\) and \(P_t(0)\) is the opening price. 
We analyse CIDR curves for the China Securities Index 300 (CSI 300) and Meituan in 2024, using one-minute frequency data from Wind (\url{https://www.wind.com.cn/}). 
The CSI 300 sample contains \(T=241\) trading
days, with 242 observations per day over 9:30--11:30 and 13:00--15:00. 
The Meituan sample contains \(T=245\) trading days, with 332 observations per day over 9:30--11:30 and 13:00--16:00 Hong Kong time. 
The two series provide complementary examples of a broad market index and an individual stock.
The discrete observations are smoothed using 20 cubic B-spline basis functions implemented in the \textsf{R} package \texttt{fda}.

We first test serial independence of the CIDR curves. 
We apply AutoHSIC with the Gaussian kernel (GK) and the Brownian distance kernel (BDK), the latter corresponding to ADCV. For comparison, we also report the Cramér--von Mises test based on ADCV of \cite{jiang2024testing}, denoted by \(\operatorname{CvM}_{DC}\), 
and the martingale difference hypothesis test based on martingale difference divergence of \cite{hong2026mdh}, denoted by MDH.

Table~\ref{table_pvalue_cidr_hsi} reports the empirical \(p\)-values. 
At the \(5\%\) level, the MDH test does not reject for either series, whereas the GK-based AutoHSIC tests provide evidence of serial dependence in the CIDR curves.
This suggests dependence beyond the conditional mean.
The ADCV-based procedures, including the BDK version of AutoHSIC and \(\operatorname{CvM}_{DC}\),
yield larger \(p\)-values in this example.

Motivated by this evidence, we fit the univariate functional ARCH(1), or FARCH(1), model of \cite{Hormann2013farch} separately to the CSI 300 and Meituan CIDR curves:
\begin{equation}\nonumber
X_t=\eta_t\sigma_t,
\qquad
\sigma_t^2=\delta+\beta(X_{t-1}^2),
\end{equation}
where $\{\eta_t\}$ is a sequence of i.i.d. random functions in $\mathcal{X}=L_2[0,1]$, $\delta\in\mathcal{X}^+$, and $\beta:\mathcal{X}^+\to\mathcal{X}^+$ is an operator. Here $\mathcal{X}^+$ denotes the set of nonnegative functions in $\mathcal{X}$. The model is estimated using the functional Yule--Walker approach, yielding residuals \(\{ \widehat\eta_t=X_t/\widehat\sigma_t \}\).

We then apply the proposed tests to the residual curves.
Table~\ref{table_pvalue_farch_hsi} reports the corresponding empirical $p$-values.
For Meituan, the relatively large $p$-values provide no evidence of remaining serial dependence, suggesting that the FARCH(1) model captures the main serial dependence in the CIDR curves. 
For the CSI 300, however, some residual dependence remains. This indicates that a first-order functional volatility specification may be insufficient for the index series, and that more flexible dynamics or additional higher-order dependence structures may be needed.

\begin{table*}
\caption{Empirical $p$-values for serial independence tests for CIDR curves in 2024. }
\label{table_pvalue_cidr_hsi}
\footnotesize
\tabcolsep=0pt
\begin{tabular*}{\textwidth}{@{\extracolsep{\fill}}l cccc c cccc cc c cc c@{\extracolsep{\fill}}}
\toprule%
&  \multicolumn{4}{c}{GK} && \multicolumn{4}{c}{BDK} \\
\cline{2-5} \cline{7-10} 
 & $V_{T,1}$ & $V_{T,2}$ & $P_{T,2}$ & $P_{T,4}$ 
&& $V_{T,1}$ & $V_{T,2}$ & $P_{T,2}$ & $P_{T,4}$ 
&&& $\operatorname{CvM}_{DC}$ &&& MDH  \\
\cline{2-5} \cline{7-10} \cline{12-13} \cline{15-16} 

CSI 300 & 0.018 &0.010 & 0.001 & 0.002 &&0.021  &0.011  &0.001 &0.001 &&& 0.433 &&&0.539 \\

Meituan & 0.039  &0.685 &0.112 &0.323 && 0.082  &0.733  &0.230 &0.353 &&& 0.090 &&& 0.206   \\
\midrule

\end{tabular*}
\end{table*}

\begin{table*}
\caption{Empirical $p$-values for residual diagnostic checks of the FARCH(1) model.}
\label{table_pvalue_farch_hsi}
\footnotesize
\tabcolsep=0pt
\begin{tabular*}{\textwidth}{@{\extracolsep{\fill}}l cccc c cccc@{\extracolsep{\fill}}}
\toprule%
&  \multicolumn{4}{c}{GK} && \multicolumn{4}{c}{BDK} \\
\cline{2-5} \cline{7-10} \\[-1em]
 & $\widehat{V}_{T,1}$ & $\widehat{V}_{T,2}$ & $\widehat{P}_{T,2}$ & $\widehat{P}_{T,4}$ 
&& $\widehat{V}_{T,1}$ & $\widehat{V}_{T,2}$ & $\widehat{P}_{T,2}$ & $\widehat{P}_{T,4}$ \\
\cline{2-5} \cline{7-10} 

CSI 300 &0.006 &0.140 &0.004 &0.023 &&0.044 &0.120 &0.046 &0.114   \\

Meituan &0.057 &0.445 &0.211 &0.520 &&0.139 &0.430 &0.259 &0.296 \\
\midrule

\end{tabular*}
\end{table*}

\subsection{Credit Default Swap Returns}

Recently, \cite{yu2025matrix} introduced a matrix GARCH model for modelling conditional heteroskedasticity in matrix-valued time series. In the empirical analysis, they studied a $3\times 3$ matrix-valued series of credit default swap (CDS) returns and used the proposed matrix GARCH model to characterise the conditional covariance structures across tenors and financial institutions.

Our analysis is complementary to theirs. 
For the CDS return matrices, the entrywise sample autocorrelations are weak, suggesting that a dynamic conditional mean model is not the main modelling issue. However, the absence of linear autocorrelation does not imply serial independence.
The proposed AutoHSIC-based tests provide a direct tool for this purpose. Unlike entrywise autocorrelation checks, our tests examine serial independence of the matrix-valued observations themselves and are therefore able to detect nonlinear or higher-order temporal dependence. 
Applied to the CDS return matrices, the tests reject serial independence at the \(5\%\) level, indicating dynamic dependence despite weak autocorrelations. 
Together with the matrix-valued volatility patterns documented in \cite{yu2025matrix}, this provides a natural motivation for fitting a matrix GARCH-type model.

The dataset\footnote{The data are available at \url{https://www.tandfonline.com/doi/full/10.1080/01621459.2024.2415719\#supplemental-material-section}.} consists of a $3 \times 3$ matrix time series of daily log-returns of CDS for three financial institutions: Barclays Bank, Deutsche Bank, and Bank of America (columns), and three tenors of 3, 5, and 7 years (rows). The sample used in \cite{yu2025matrix} spans February 2015 to October 2018, yielding $T=962$ observations.

The empirical $p$-values for testing serial independence of the CDS return series are below 0.05, providing evidence of serial dependence beyond linear autocorrelation. This finding motivates the use of matrix GARCH-type models, as in \cite{yu2025matrix}.

We then fit the matrix GARCH model in Section~\ref{subsec_matrix_ts} by the quasi-maximum likelihood method of \cite{yu2025matrix}, and apply the proposed tests to the residual matrices. 
The corresponding $p$-values are reported in Table~\ref{table_pvalue_matrix_residual}.
Here, \(\widehat V_{T,m}\) and \(\widehat P_{T,M}\)
are residual AutoHSIC tests, while \(Q_{T,M}\) is the autocorrelation-based portmanteau test applied to the squared Frobenius norms of residual matrices up to lag \(M\), as in Section~4 of \cite{yu2025matrix}.

The results highlight the distinction between checking for remaining second-order dependence and independence in the residuals. 
The portmanteau statistics \(Q_{T,2}\) and \(Q_{T,4}\) do not reject the null at the \(5\%\) level, suggesting that the fitted matrix GARCH model removes autocorrelation in the squared Frobenius-norm residuals. 
By contrast, the small \(p\)-values of the AutoHSIC-based tests indicate remaining nonlinear or higher-order temporal dependence in the standardised residual matrices. 
In this sense, the proposed method complements the matrix GARCH framework by detecting forms of residual temporal dependence that may not be visible through autocorrelation-based portmanteau diagnostics.

\begin{table*}
\caption{Empirical $p$-values for residual diagnostic checks of the matrix GARCH.}
\label{table_pvalue_matrix_residual}
\footnotesize
\tabcolsep=0pt
\begin{tabular*}{\textwidth}{@{\extracolsep{\fill}}l cccc c cccc c cc@{\extracolsep{\fill}}}
\toprule%
&  \multicolumn{4}{c}{GK} && \multicolumn{4}{c}{BDK} \\
\cline{2-5} \cline{7-10} \\[-1em]
 & $\widehat{V}_{T,1}$ & $\widehat{V}_{T,2}$ & $\widehat{P}_{T,2}$ & $\widehat{P}_{T,4}$ 
&& $\widehat{V}_{T,1}$ & $\widehat{V}_{T,2}$ & $\widehat{P}_{T,2}$ & $\widehat{P}_{T,4}$
&& $Q_{T,2}$ & $Q_{T,4}$ \\
\cline{2-5} \cline{7-10} \cline{12-13} 
$p$-value & 0.000 & 0.008 & 0.002 & 0.003 && 0.003 & 0.012 & 0.003 & 0.006 && 0.253 & 0.281 \\
[-0.2em]
\midrule

\end{tabular*}
\end{table*}

\section{Conclusion}\label{sec_conclusion}

This paper develops an HSIC-based framework for testing serial independence in a single stationary time series. The proposed AutoHSIC statistic measures lag-specific dependence and can detect nonlinear serial dependence beyond autocorrelation.
The main inferential challenge is that the statistic is constructed from overlapping
lagged pairs and therefore has a U-statistic structure under temporal dependence, with degeneracy playing a central role under the null.

We establish the asymptotic behaviour of the single-lag and portmanteau statistics under the null and alternatives, and develop bootstrap procedures for the non-pivotal null distributions. 
We also extend the framework to residual-based diagnostics and show how parameter estimation affects the null limit.
Simulations and applications show that AutoHSIC can complement classical autocorrelation-based diagnostics by detecting nonlinear forms of remaining serial dependence.

Several questions remain for future work. 
First, the fixed-lag portmanteau statistic may have limited sensitivity to dependence beyond the chosen lag range; frequency-domain or spectral approaches may offer a broader alternative \citep{hong1999hypothesis, jiang2024testing}. 
Second, our treatment of parameter estimation in residual diagnostics is developed for Euclidean-valued residuals.
Extending it to functional, matrix-valued or other non-Euclidean time series would require additional theory for estimated objects in structured spaces.
Third, the RKHS embedding perspective suggests further possibilities for distributional inference in dependent data; recent developments in RKHS-based inference for time series provide one possible
direction \citep{Ghoshal2026resampling}.

\newpage
\appendix
\section*{Supplementary Material} 

The supplementary material is organised as follows.
Section~\ref{sm_simulation} presents additional simulation results.
Section~\ref{sm_application} provides additional empirical results, including an application to a 30-dimensional realised variance time series and supplementary results for CDS
returns.
Section~\ref{sm_proof} contains the proofs of the theoretical results.

\section{Additional Simulation Results}\label{sm_simulation}

\subsection{Additional Simulation Results in Section~5}

Tables~\ref{additional_table_dgp1}, \ref{additional_table_dgp2-3} and \ref{additional_table_dgp4-5} report the full results for DGPs~1--5 in Section~5.1, based on the Laplacian kernel (LK), the Gaussian kernel (GK), and the Brownian distance kernel (BDK).

Additionally, we examine empirical power under two other DGPs at the \(5\%\) significance level: a multivariate GARCH process in Table~\ref{additional_table_garch_0.05} and a vector autoregressive (VAR) process in Table~\ref{additional_table_dgp_var}. 
The former further confirms that the AutoHSIC-based tests have high power against nonlinear serial dependence, whereas the latter shows that the proposed tests are also sensitive to linear serial dependence.

Moreover, Table~\ref{add_matrix_100} presents the simulation results for matrix time series when $T = 100$.

\begin{table*}
\caption{ Empirical sizes ($\%$) with DGP~1: $X_t \stackrel{i.i.d.}{\sim} N_d (\mathbf{0}, \Sigma)$, $\Sigma_{ij} = 0.5^{|i-j|} $, at the significance level $5\%$. }
\label{additional_table_dgp1}
\footnotesize
\tabcolsep=0pt
\begin{tabular*}{\textwidth}{@{\extracolsep{\fill}}l cccc c cccc c cccc@{\extracolsep{\fill}}}
\toprule%
& \multicolumn{4}{c}{LK} && \multicolumn{4}{c}{GK} && \multicolumn{4}{c}{BDK} \\

         \cline{2-5} \cline{7-10} \cline{12-15}

          $d$ & $V_{T,1}$ & $V_{T,3}$ & $P_{T,3}$ & $P_{T,6}$
         && $V_{T,1}$ & $V_{T,3}$ & $P_{T,3}$ & $P_{T,6}$ && $V_{T,1}$ & $V_{T,3}$ & $P_{T,3}$ & $P_{T,6}$ \\

        \cline{2-5} \cline{7-10} \cline{12-15}

        & \multicolumn{14}{c}{$T=100$} \\
        1 & 6.1 & 3.6 & 4.8 & 4.9 && 5.4 & 3.8 & 5.5 & 5.2  && 5.4 & 5.3 & 5.7 & 5.3 \\
        5 & 5.1 & 5.2 & 4.7 & 5.0 && 4.4 & 4.6 & 5.8 & 5.7  && 4.0 & 5.4 & 5.3 & 5.3 \\
        10 & 4.9 & 5.2 & 5.6 & 5.3 && 5.1 & 5.5 & 5.7 & 5.3  && 5.4 & 4.9 & 5.3 & 5.4 \\
        20 & 4.8 & 5.4 & 5.1 & 5.6 && 5.6 & 6.0 & 5.3 & 5.2  && 5.9 & 5.9 & 5.4 & 5.2 \\
        40 & 5.1 & 5.7 & 4.5 & 5.0 && 5.3 & 5.2 & 4.9 & 4.6  && 5.3 & 5.5 & 4.9 & 4.5 \\
        80 & 5.8 & 5.4 & 5.2 & 5.0 && 5.7 & 5.8 & 5.4 & 5.2  && 5.6 & 5.9 & 5.3 & 5.1 \\
        160 & 5.1 & 5.4 & 5.1 & 5.3 && 4.4 & 4.3 & 5.1 &5.5   && 4.5 & 4.3 & 5.2 & 5.4 \\
         & \\
         & \multicolumn{14}{c}{$T=200$} \\
         
         1 & 6.4 & 4.8 & 5.2 & 5.1 && 4.3 & 5.2 & 4.9 & 4.5  && 5.0 & 4.8 & 5.6 & 4.2 \\
         5 & 5.0 & 5.3 & 5.4 & 5.3 && 5.1 & 4.8 & 5.8 & 5.4  && 5.1 & 5.3 & 5.2 & 5.1 \\
10 & 5.9 & 5.5 & 5.7 & 5.8 && 5.7 & 4.8 & 5.7 &  5.9 && 5.7 & 5.0 & 5.1 &5.5  \\
20 & 5.4 & 5.3 & 5.3 & 5.6 && 5.2 & 5.2 & 5.4 & 5.5  && 5.6 & 4.9 & 5.2 & 5.3 \\
40 & 5.4 & 5.2 & 4.8 & 5.3 && 5.6 & 5.0 & 4.6 & 4.8  && 5.4 & 5.2 & 4.7 & 4.7 \\
80 & 5.4 & 4.9 & 4.9 & 5.0 && 4.8 & 5.2 & 4.7 & 5.1  && 4.8 & 5.3 & 4.7 & 5.2 \\
160 & 5.1 & 4.8 &5.0  & 6.1 && 5.1 & 4.7 & 5.1 & 5.6  && 5.0 & 4.7 & 5.1 & 5.6 \\
\midrule
\end{tabular*}
\end{table*}

\begin{table*}
\caption{ Empirical sizes ($\%$) with DGPs~2--3: $X_t \stackrel{i.i.d.}{\sim} t_\nu (\Sigma)$, at the significance level $5\%$. }
\label{additional_table_dgp2-3}
\footnotesize
\tabcolsep=0pt
\begin{tabular*}{\textwidth}{@{\extracolsep{\fill}}ll cccc c cccc c cccc@{\extracolsep{\fill}}}
\toprule%
&& \multicolumn{4}{c}{LK} && \multicolumn{4}{c}{GK} && \multicolumn{4}{c}{BDK} \\

         \cline{3-6} \cline{8-11} \cline{13-16}

        $\nu$ & $d$ & $V_{T,1}$ & $V_{T,3}$ & $P_{T,3}$ & $P_{T,6}$
         && $V_{T,1}$ & $V_{T,3}$ & $P_{T,3}$ & $P_{T,6}$ && $V_{T,1}$ & $V_{T,3}$ & $P_{T,3}$ & $P_{T,6}$ \\

        \cline{3-6} \cline{8-11} \cline{13-16}

        && \multicolumn{14}{c}{$T=100$} \\
        
        2 & 1 & 4.5 & 5.2 & 5.1 & 5.7 && 5.6 & 5.1 & 5.7 & 5.4 && 4.4 & 5.7 & 5.1 &5.2  \\
&5 & 5.6 & 6.0 & 5.1 & 5.7 && 5.7 & 6.1 & 5.7 & 5.5 && 5.4 &5.5  & 6.0 & 5.1 \\
&10& 5.3 & 5.2 & 3.5 & 5.1 && 5.7 & 6.1 & 4.0 & 5.2 && 5.2 & 6.0 & 5.0 & 5.2 \\
&20& 6.0 & 5.7 & 5.8 & 4.8 && 5.4 & 4.7 & 4.7 & 4.9 && 5.3 & 5.3 & 6.1 & 5.0 \\
&40& 5.1 & 6.5 & 4.4 & 5.0 && 4.5 & 6.0 & 5.0 & 5.0 && 4.2 & 5.9 & 5.8 & 5.4 \\
&80& 4.6 & 5.8 & 5.4 & 5.6 && 5.5 & 6.8 & 4.3 & 5.0 && 5.6 & 5.1 & 5.4 & 5.7 \\
&160& 5.3 & 5.2 & 3.5 & 5.1 && 5.7 & 6.1 & 5.3 & 5.2 && 6.0 & 6.4 & 5.3 & 5.5 \\

& \\

1 & 1 & 5.8 & 4.5 & 5.7 & 4.8 && 5.3 & 4.4 & 4.9 & 5.2 && 4.8 & 4.9 & 3.7 & 3.8 \\
&5 & 5.7 & 5.6 & 5.5 & 6.7 && 5.1 & 6.3 & 5.5 & 5.5 && 5.4 & 6.5 & 4.8 & 4.9 \\
&10& 5.8 & 5.8 & 6.1 & 7.4 && 6.3 & 5.3 & 5.6 & 5.3 && 7.3 & 8.0 & 4.7 & 5.0 \\
&20& 4.3 & 4.8 & 5.2 & 5.3 && 5.2 & 5.6 & 5.5 & 5.3 && 7.6 & 6.6 & 4.6 & 4.7 \\
&40& 6.0 & 6.3 & 6.2 & 6.5 && 5.7 & 6.0 & 5.4 & 6.2 && 7.4 & 8.8 & 6.8 & 5.4 \\
&80& 5.6 & 5.4 & 4.4 & 4.2 && 5.8 &5.8  & 4.9 & 4.9 && 8.6 & 9.6 & 5.6 & 5.3 \\
&160& 5.6 & 5.8 & 5.5 & 5.5 && 6.0 & 5.4 & 4.6 & 4.9 && 10.1 & 8.9 & 6.1 & 6.3 \\

         & \\
         && \multicolumn{14}{c}{$T=200$} \\
         
2 & 1 & 4.8 & 5.5 & 4.5 & 5.0 && 5.5 & 5.1 & 5.1 & 5.5 && 5.4 & 4.5 & 5.1 & 6.2 \\
&5 & 5.1 & 4.3 & 5.3 & 6.2 && 4.8 & 4.0 & 4.6 &5.5  && 5.7 & 4.4 & 5.3 & 5.7 \\
&10& 4.7 & 5.3 & 5.6 & 5.1 && 4.8 & 5.6 & 5.5 & 4.8 && 5.4 & 4.6 & 4.1 & 5.6 \\
&20& 5.8 & 5.2 & 6.3 & 5.9 && 5.3 & 5.4 & 5.8 & 5.6 && 5.4 &4.7  & 5.3 & 5.3 \\
&40& 4.5 & 6.3 & 5.4 & 6.4 && 5.4 & 5.6 & 4.8 & 5.9 && 5.0 & 5.6 & 5.1 & 5.3 \\
&80& 5.1 & 6.3 & 6.6 &6.0  && 5.5 & 5.9 & 5.8 & 5.9 && 6.7 & 5.6 & 5.9 & 5.3 \\
&160& 6.0 & 4.2 & 6.1 & 5.6 && 5.6 & 5.8 & 5.5 & 5.6 && 5.9 & 7.0 & 6.6 &5.1  \\
& \\

1 & 1 & 5.7 & 5.1 & 5.5 & 4.7 && 4.6 & 5.1 & 4.2 & 4.6 && 4.5 & 4.2 & 4.1 & 4.5 \\
&5 & 5.1 & 4.8 & 4.8 & 6.0 && 5.0 & 4.6 & 4.8 & 5.7 && 5.3 & 6.1 & 4.0 &  5.2\\
&10& 4.0 & 5.2 &4.1  & 5.0 && 4.3 & 4.2 & 4.0 & 4.9 && 7.0 & 7.6 & 5.3 & 3.8 \\
&20& 5.3 & 5.8 & 5.1 & 4.3 && 5.7 & 5.3 & 4.4 & 5.2 && 8.0 & 7.1 & 4.8 & 5.1 \\
&40& 6.4 & 5.6 & 5.8 & 5.2 && 5.5 & 5.1 & 4.6 & 4.8 && 8.3 & 10.8 & 6.3 & 6.0 \\
&80& 6.0 & 4.5 & 6.1 & 5.3 && 4.9 & 5.8 & 5.3 &4.5  &&9.6  & 10.8 & 5.9 &4.7  \\
&160& 4.9 & 5.9 & 5.9 & 5.3 && 5.7 & 6.3 & 5.6 & 5.3 && 9.7 & 10.8 & 5.7 & 4.5 \\
         
\midrule
\end{tabular*}
\end{table*}

\begin{table*}
\caption{ Empirical powers ($\%$) with DGPs~4--5: $X_t=\eta_t \odot \eta_{t-1} \odot \eta_{t-2}$, at the significance level $5\%$. }
\label{additional_table_dgp4-5}
\footnotesize
\tabcolsep=0pt
\begin{tabular*}{\textwidth}{@{\extracolsep{\fill}}ll cccc c cccc c cccc@{\extracolsep{\fill}}}
\toprule%
&& \multicolumn{4}{c}{LK} && \multicolumn{4}{c}{GK} && \multicolumn{4}{c}{BDK} \\

         \cline{3-6} \cline{8-11} \cline{13-16}

        $\eta_t$& $d$ & $V_{T,1}$ & $V_{T,3}$ & $P_{T,3}$ & $P_{T,6}$
         && $V_{T,1}$ & $V_{T,3}$ & $P_{T,3}$ & $P_{T,6}$ && $V_{T,1}$ & $V_{T,3}$ & $P_{T,3}$ & $P_{T,6}$ \\

        \cline{3-6} \cline{8-11} \cline{13-16}

         && \multicolumn{14}{c}{$T=100$} \\
        
$N(\mathbf{0},\Sigma)$ & 1 & 100.0 & 12.7 & 99.9 & 99.1 && 99.7 & 10.0 & 98.2  & 94.5  &&55.9  & 7.4 & 52.5 & 49.4 \\
& 5 & 99.5 & 15.0 & 98.8 & 96.3 && 95.1 & 11.7 & 89.3  & 83.5  && 55.5 & 7.6 & 52.6 &  47.6\\
& 10 & 98.5 & 15.1 & 97.0 & 94.8 && 87.3 & 10.9 & 78.3  & 74.0  && 48.9 & 7.8 & 44.8 & 41.0 \\
& 20 & 95.7 & 13.1 & 90.3 & 86.6 && 68.9 & 9.2 &  58.8 & 54.2  && 36.1 & 6.5 & 33.2 & 30.9 \\
& 40 & 87.4 & 13.8 & 80.2 & 76.9 && 47.2 & 8.7 & 41.3  & 36.1  && 26.6 & 7.7 & 22.5 & 21.4 \\
& 80 & 80.1 &10.4  & 65.3 &60.3  && 29.0 & 6.6 &  22.8 & 21.6  && 19.9 & 7.0 & 16.7 & 16.1 \\
& 160 & 65.1 & 7.7 & 48.3 &40.1  && 14.0 & 6.6 &  13.6 & 13.1  && 12.9 &7.0  & 12.4 & 12.3 \\

& \\

$t_2(\Sigma)$ & 1 & 99.9 & 14.9 & 99.9 & 99.5 && 99.7 & 11.9 & 99.3  &  96.5 && 41.4 & 10.3 & 45.7 & 43.2 \\
& 5 & 100.0 & 18.9 & 100.0 & 99.9 && 99.9 & 16.7 & 99.7  & 98.6  && 92.9 & 14.9 & 90.3 & 86.4 \\
& 10 & 100.0 & 20.4 & 100.0 & 100.0 && 99.8 & 18.7 & 99.9  & 99.5  && 98.2 & 18.5 & 97.3 & 95.8 \\
& 20 & 100.0 & 18.3 & 100.0 & 100.0 && 100.0 & 16.5 & 100.0  & 99.8  && 99.1 & 17.8 & 98.8 & 98.7 \\
& 40 & 100.0 & 19.3 & 100.0 & 100.0 && 100.0 & 19.0 & 100.0  & 100.0  &&99.3  & 20.8 & 99.1 & 99.2 \\
& 80 & 100.0 & 20.5 & 100.0 & 100.0 && 100.0 & 21.1 & 100.0  & 100.0  && 99.7 & 23.0 & 99.4 & 99.5 \\
& 160 & 100.0 &  21.1& 100.0 &100.0  && 100.0 &18.6  & 100.0  & 100.0  && 99.1 & 22.1 & 98.8 & 99.2 \\
        
         & \\
         && \multicolumn{14}{c}{$T=200$} \\
         
$N(\mathbf{0},\Sigma)$ & 1 & 100.0 & 14.9 & 100.0 & 100.0 && 100.0 & 13.6 & 100.0  & 100.0  && 99.8 & 6.9 & 98.8 & 94.7 \\
& 5 & 100.0 & 17.2 & 100.0 & 100.0 && 100.0 & 13.3 & 100.0  &  99.8 && 97.9 & 9.3 & 95.5 & 89.8 \\
& 10 & 100.0 & 15.4 & 100.0 & 100.0 && 100.0 & 10.3 & 99.2  &  97.3 && 93.6 & 8.5 & 85.0 &77.0  \\
& 20 &100.0  & 13.1 & 99.9 &99.5  && 97.1 & 8.6 &  92.4 & 86.5  && 82.8 &6.8  & 71.9 & 62.5 \\
& 40 & 99.3 &11.9  & 98.7 & 96.7 &&88.0  & 7.5 & 75.5  & 68.0  && 65.2 & 7.4 & 53.0 & 43.2 \\
& 80 & 98.1 & 9.6 & 93.9 & 87.9 && 63.1 & 7.1 & 47.0  &  38.0 && 45.6 & 7.0 & 32.6 & 26.5 \\
& 160 & 93.7 & 8.0 & 81.2 & 73.8 && 36.6 & 5.5 &  27.5 & 23.2  && 31.7 &5.6  & 23.0 & 19.1 \\
& \\

$t_2(\Sigma)$ & 1  & 100.0 & 13.1 & 100.0 & 100.0 && 100.0 & 11.6 & 100.0  & 100.0  && 92.1 & 10.1 & 91.8 & 89.3 \\
& 5 & 100.0 & 19.7 & 100.0 &100.0  && 100.0 & 18.6 &  100.0 & 100.0  && 99.8 & 17.2 & 99.7 & 99.7 \\
& 10 & 100.0 & 20.6 & 100.0 & 100.0 && 100.0 & 17.6 &  100.0 &  100.0 && 99.7 & 18.5 & 99.6 & 99.7 \\
& 20 & 100.0 & 16.2 & 100.0 & 100.0 && 100.0 & 15.5 & 100.0  &  100.0 && 99.9 & 19.1 & 99.9 & 99.9 \\
& 40 &  100.0& 19.2 & 100.0 &100.0  && 100.0 &17.4  & 100.0  &  100.0 && 99.9 & 21.2 & 99.8 & 99.9 \\
& 80 & 100.0 & 21.0 & 100.0 & 100.0 && 100.0 & 17.4 & 100.0  & 100.0  && 99.7 & 21.8 & 99.7 & 99.7 \\
& 160 & 100.0 & 18.1 & 100.0 & 100.0 && 100.0 & 16.6 & 100.0  & 100.0  && 100.0 & 22.8 & 99.9 & 100.0 \\
      
\midrule
\end{tabular*}
\end{table*}

\begin{table*}
\caption{ Empirical powers ($\%$) with DGP (GARCH): $X_t=\eta_t \odot \sigma_t$, $\sigma_t \odot \sigma_t=0.1+ 0.6 X_{t-1} \odot X_{t-1}+0.2 \sigma_{t-1} \odot \sigma_{t-1}$, at the significance level $5\%$. }
\label{additional_table_garch_0.05}
\footnotesize
\tabcolsep=0pt
\begin{tabular*}{\textwidth}{@{\extracolsep{\fill}}ll cccc c cccc c cccc@{\extracolsep{\fill}}}
\toprule%
&& \multicolumn{4}{c}{LK} && \multicolumn{4}{c}{GK} && \multicolumn{4}{c}{BDK} \\

         \cline{3-6} \cline{8-11} \cline{13-16}

        $\eta_t$ & $d$ & $V_{T,1}$ & $V_{T,3}$ & $P_{T,3}$ & $P_{T,6}$
         && $V_{T,1}$ & $V_{T,3}$ & $P_{T,3}$ & $P_{T,6}$ && $V_{T,1}$ & $V_{T,3}$ & $P_{T,3}$ & $P_{T,6}$ \\

        \cline{3-6} \cline{8-11} \cline{13-16}

        && \multicolumn{14}{c}{$T=100$} \\
        
$N(\mathbf{0},\Sigma)$ & 1  &45.9 &13.3 &41.9 &37.8 &&50.0&14.1&46.7&41.3  &&12.2 &6.4 &18.6 &18.0\\
& 5  &62.7 &22.6 &61.0 &56.8 &&48.8&19.9 &50.6&46.7  &&14.9 &7.2 &20.3 &20.6\\
& 10  &60.1 &24.9 &60.7 &57.2 &&43.0&20.7 &47.5&44.3  &&11.8 &8.6 &18.0 &19.5\\
& 20  &54.7 &24.3 &59.6 &56.8 &&36.8&18.8 &40.4&40.4  &&12.5 &8.8 &16.6 &18.5\\
& 40  &48.6 &21.9 &53.1 &51.8 &&27.8&14.5&33.2&34.0  &&8.6 &6.6 &13.1 &14.5\\
& 80  &37.5 &18.8 &44.7 &44.5 &&21.5&12.1&27.0&27.9  &&9.0 &6.7 &12.0 &14.5\\
& 160  &30.1 &13.5 &35.2 &36.3 &&15.1&8.8&18.7&20.1  &&6.7 &5.4 &8.2 &9.4\\

& \\
$t_2(\Sigma)$ & 1  &97.4 &83.0 &97.3 &95.8 &&97.0&83.1&97.0
&95.9  &&65.5 &41.8 &83.6 &85.0\\
& 5  &100.0 &98.1 &100.0 &99.9 &&99.9&95.8&99.9 &99.9  &&91.3 &74.2 &96.6 &96.9\\
& 10  &100.0 &99.2 &100.0 &100.0 &&99.9&97.9&99.9
&99.8  &&92.7 &80.4 &98.4 &98.7\\
& 20  &100.0 &99.9 &100.0 &100.0 &&100.0&98.9&100.0
&100.0  &&95.4 &86.6 &99.3 &99.5\\
& 40  &100.0 &99.7 &100.0 &100.0 &&99.9&99.0&100.0
&100.0  &&95.3 &87.9 &99.1 &99.4\\
& 80  &100.0 &100.0 &100.0 &100.0 &&100.0&99.4&100.0
&100.0 &&96.2 &90.3 &99.3 &99.7\\
& 160  &100.0 &100.0 &100.0 &100.0 &&100.0&99.5&100.0 &100.0  &&97.2 &92.7 &99.7 &99.8\\
        
         & \\
         && \multicolumn{14}{c}{$T=200$} \\
         
$N(\mathbf{0},\Sigma)$ & 1 &79.8 &20.4 &74.4 &64.7 & &83.7&22.9&79.7&71.3 &&36.2 &9.5 &41.2 &36.4\\
& 5 &93.3 &41.2 &91.6 &86.3 &&85.1&37.4&84.2&77.9  &&36.0 &12.7 &43.5 &43.5\\
& 10 &93.2 &45.4 &92.7 &88.9 &&79.9&36.6&81.4&76.2  &&32.0 &13.2 &42.8 &41.8\\
& 20 &90.4 &46.5 &90.3 &86.6 & &70.0&34.5&73.7&68.2  &&27.1 &12.4 &37.7 &38.0\\
& 40 &85.1 &42.6 &87.3 &82.8 & &58.3&28.9&62.1&61.3 &&18.2 &9.6 &29.6 &31.4\\
& 80 &78.2 &38.0 &80.7 &77.5 & &47.7&23.3&53.0&51.3 &&15.2 &7.6 &21.7 &24.0\\
& 160 &68.7 &33.2 &72.5 &69.7 & &34.7 &19.3&43.2&43.1  &&11.1 &8.4 &17.2 &19.2\\

& \\

$t_2(\Sigma)$ & 1 &100.0 &98.8 &100.0 &100.0 &&100.0&99.0&100.0&100.0  &&98.0 &89.6 &99.6 &99.7\\
& 5 &100.0 &100.0 &100.0 &100.0 &&100.0&100.0&100.0&100.0  &&99.6 &99.1 &100.0 &100.0\\
& 10 &100.0 &100.0 &100.0 &100.0 &&100.0&100.0&100.0&100.0  &&99.6 &99.5 &100.0 &100.0\\
& 20 &100.0 &100.0 &100.0 &100.0 &&100.0&100.0 &100.0&100.0   &&99.5 &99.3 &99.9 &100.0\\
& 40 &100.0 &100.0 &100.0 &100.0 &&100.0&100.0&100.0&100.0   &&99.6 &99.2 &99.9 &100.0\\
& 80 &100.0 &100.0 &100.0 &100.0 &&100.0&100.0&100.0&100.0 &&99.8 &99.2 &100.0 &100.0\\
& 160 &100.0 &100.0 &100.0 &100.0 &&100.0&100.0&100.0 &100.0  &&99.6 &99.3 &100.0 &100.0\\
      
\midrule
\end{tabular*}
\end{table*}

\begin{table*}
\caption{Empirical powers ($\%$) with DGP (VAR): $X_t = 0.3 X_{t-1} + \eta_t $ , at the significance level $5\%$.}
\label{additional_table_dgp_var}
\footnotesize
\tabcolsep=0pt
\begin{tabular*}{\textwidth}{@{\extracolsep{\fill}}ll cccc c cccc c cccc@{\extracolsep{\fill}}}
\toprule%
&& \multicolumn{4}{c}{LK} && \multicolumn{4}{c}{GK} && \multicolumn{4}{c}{BDK} \\

         \cline{3-6} \cline{8-11} \cline{13-16}

        $\eta_t$ &$d$ & $V_{T,1}$ & $V_{T,3}$ & $P_{T,3}$ & $P_{T,6}$
         && $V_{T,1}$ & $V_{T,3}$ & $P_{T,3}$ & $P_{T,6}$ && $V_{T,1}$ & $V_{T,3}$ & $P_{T,3}$ & $P_{T,6}$ \\

        \cline{3-6} \cline{8-11} \cline{13-16}

         && \multicolumn{14}{c}{$T=100$} \\
        
$N(\mathbf{0},\Sigma)$ & 1   & 46.2 & 6.4  & 32.8 & 27.0 && 56.7 & 6.4  & 41.7 & 34.2 && 75.0 & 6.6  & 59.6 & 50.6 \\
& 5   & 89.0 & 11.5 & 76.6 & 71.9 && 93.1 & 12.0 & 82.6 & 77.8 && 94.2 & 12.0 & 84.9 & 80.6 \\
& 10  & 96.8 & 25.1 & 94.4 & 94.4 && 98.0 & 23.7 & 95.6 & 94.6 && 98.1 & 23.7 & 95.9 & 94.9 \\
& 20  & 99.8 & 58.5 & 99.7 & 99.9 && 99.9 & 55.7 & 99.6 & 99.8 && 99.9 & 52.9 & 99.6 & 99.8 \\
& 40  & 100.0 & 97.3 & 100.0 & 100.0 && 100.0 & 94.3 & 100.0 & 100.0 && 100.0 & 94.5 & 100.0 & 100.0 \\
& 80  & 100.0 & 100.0 & 100.0 & 100.0 && 100.0 & 100.0 & 100.0 & 100.0 && 100.0 & 100.0 & 100.0 & 100.0 \\
& 160 & 100.0 & 100.0 & 100.0 & 100.0 && 100.0 & 100.0 & 100.0 & 100.0 && 100.0 & 100.0 & 100.0 & 100.0 \\

& \\
$t_2(\Sigma)$ & 1   & 67.5 & 6.1  & 48.3 & 38.6 && 73.8 & 6.2  & 57.9 & 47.2 && 91.0 & 8.5  & 81.8 & 73.5 \\
& 5   & 97.6 & 18.6 & 91.5 & 88.9 && 97.6 & 17.4 & 91.0 & 89.1 && 99.2 & 22.8 & 97.6 & 97.0 \\
& 10  & 99.5 & 32.3 & 98.2 & 98.7 && 99.3 & 29.0 & 97.8 & 98.1 && 99.7 & 40.1 & 99.8 & 99.8 \\
& 20  & 99.8 & 54.4 & 99.9 & 100.0 && 99.6 & 47.8 & 99.7 & 99.9 && 99.2 & 71.3 & 100.0 & 100.0 \\
& 40  & 99.9 & 81.2 & 100.0 & 100.0 && 99.8 & 71.9 & 99.9 & 100.0 && 99.0 & 95.7 & 99.9 & 100.0 \\
& 80  & 100.0 & 92.5 & 100.0 & 100.0 && 99.6 & 86.6 & 100.0 & 100.0 && 98.8 & 99.0 & 99.9 & 100.0 \\
& 160 & 100.0 & 96.2 & 100.0 & 100.0 && 99.4 & 90.2 & 99.9 & 100.0 && 98.6 & 99.7 & 100.0 & 100.0 \\

         & \\
         && \multicolumn{14}{c}{$T=200$} \\
         
$N(\mathbf{0},\Sigma)$ & 1   & 81.3 & 6.3  & 64.4 & 51.8 && 88.2 & 6.1  & 75.4 & 65.1 && 97.8 & 6.8  & 90.9 & 83.9 \\
& 5   & 99.9 & 13.5 & 98.3 & 96.0 && 99.9 & 13.1 & 99.2 & 97.5 && 100.0 & 13.7 & 99.5 & 98.0 \\
& 10  & 100.0 & 26.7 & 99.9 & 99.9 && 100.0 & 25.7 & 99.9 & 99.9 && 100.0 & 25.3 & 99.9 & 100.0 \\
& 20  & 100.0 & 64.8 & 100.0 & 100.0 && 100.0 & 58.9 & 100.0 & 100.0 && 100.0 & 58.8 & 100.0 & 100.0 \\
& 40  & 100.0 & 98.9 & 100.0 & 100.0 && 100.0 & 96.6 & 100.0 & 100.0 && 100.0 & 96.8 & 100.0 & 100.0 \\
& 80  & 100.0 & 100.0 & 100.0 & 100.0 && 100.0 & 100.0 & 100.0 & 100.0 && 100.0 & 100.0 & 100.0 & 100.0 \\
& 160 & 100.0 & 100.0 & 100.0 & 100.0 && 100.0 & 100.0 & 100.0 & 100.0 && 100.0 & 100.0 & 100.0 & 100.0 \\

& \\
$t_2(\Sigma)$& 1   & 96.1 & 6.1  & 85.4 & 74.6 && 97.9 & 7.1  & 91.4 & 83.2 && 99.7 & 9.6  & 98.8 & 97.4 \\
& 5   & 100.0 & 18.6 & 100.0 & 99.8 && 100.0 & 16.0 & 99.9 & 99.7 && 100.0 & 23.6 & 100.0 & 100.0 \\
& 10  & 100.0 & 33.2 & 100.0 & 100.0 && 100.0 & 30.0 & 100.0 & 100.0 && 99.9 & 46.7 & 100.0 & 100.0 \\
& 20  & 100.0 & 59.3 & 100.0 & 100.0 && 100.0 & 51.8 & 100.0 & 100.0 && 100.0 & 80.6 & 100.0 & 100.0 \\
& 40  & 100.0 & 83.8 & 100.0 & 100.0 && 100.0 & 75.8 & 100.0 & 100.0 && 99.9 & 98.0 & 100.0 & 100.0 \\
& 80  & 100.0 & 95.8 & 100.0 & 100.0 && 100.0 & 88.7 & 100.0 & 100.0 && 99.9 & 99.9 & 100.0 & 100.0 \\
& 160 & 100.0 & 98.4 & 100.0 & 100.0 && 100.0 & 94.2 & 100.0 & 100.0 && 99.6 & 99.8 & 100.0 & 100.0 \\

\midrule
\end{tabular*}
\end{table*}

\begin{table*}[t]
\caption{Empirical sizes and powers ($\%$) for matrix time series when $T=100$. }
\label{add_matrix_100}
\footnotesize
\tabcolsep=0pt
\begin{tabular*}{\textwidth}{@{\extracolsep{\fill}}ll rrrr r rrrr r rrrr@{\extracolsep{\fill}}}

\toprule
         &  & \multicolumn{4}{c}{LK} && \multicolumn{4}{c}{GK} && \multicolumn{4}{c}{BDK} \\

         \cline{3-6} \cline{8-11} \cline{13-16}

         $c$ & $d$ & $V_{T,1}$ & $V_{T,3}$ & $P_{T,3}$ & $P_{T,6}$
         && $V_{T,1}$ & $V_{T,3}$ & $P_{T,3}$ & $P_{T,6}$ && $V_{T,1}$ & $V_{T,3}$ & $P_{T,3}$ & $P_{T,6}$ \\

        \cline{3-6} \cline{8-11} \cline{13-16}

        0 & 2 & 4.4 &	4.7 &	4.4 &	5.5 &&	5.0 &	5.0 &	5.1 &	5.1 &&	5.7 &	4.3 &	4.8 &	4.8  \\
        & 5 & 4.2 &	6.3 &	5.8 &	5.4 &&	4.5 &	6.9 &	5.7 &	4.9 &&	4.0 &	6.8 &	6.2 &	6.2  \\
        & 8 & 4.4 &	5.0 &	4.5 &	5.3 &&	4.6 &	4.9 &	4.7 &	5.1 &&	4.9 &	5.1 &	4.5 &	5.8  \\
        &\\
        
        0.2 & 2 & 9.5 &	6.8 &	11.7 &	11.8 &&	9.7 &	6.3 &	12.0 &	11.3 &&	6.7 &	4.5 &	6.3 &	6.8  \\
        & 5 &  13.4 &	9.8 &	18.6 &	18.2 &&	9.3 &	9.0 &	13.8 &	14.5 &&	5.6 &	7.4 &	8.0 &	8.2 \\
        & 8 & 34.8 &	18.6 &	45.7 &	45.6 &&	30.4 &	16.3 &	41.1 &	39.6 &&	9.1 &	6.8 &	14.0 &	15.1 \\
        &\\
        
        0.3 & 2 & 24.9 &	16.8 &	31.2 &	31.9 &&	20.9 &	14.9 &	28.2 &	29.0 &&	7.8 &	6.9 &	10.6 &	12.7  \\
        & 5 & 36.6 &	23.6 &	46.1 &	46.5 &&	23.1 &	16.8 &	31.5 &	33.8 &&	10.2 &	7.2 &	13.5 &	16.5  \\
        & 8 & 68.4 &	48.1 &	78.2 &	78.7 &&	51.4 &	36.8 &	65.5 &	66.0 &&	17.0 &	14.1 &	30.4 &	34.4  \\
\midrule
\end{tabular*}
\end{table*}

\subsection{Simulations of Model Diagnostic Checking}

In this subsection, we illustrate the use of our methodology for model diagnostic checking of the GARCH model. The null model is the GARCH(1,1) model with independent innovations, that is, the time series $\{X_t\}$ is generated as follows:
\begin{align}
    X_t &=\sigma_t \eta_t ,\nonumber\\
    \sigma_t^2 &= \omega +\alpha X_{t-1}^2+ \beta \sigma_{t-1}^2, \nonumber
\end{align}
where $\{\eta_t\}$ is a sequence of i.i.d. innovations.

To evaluate the empirical size of the model diagnostic checking test, we consider the following error-generating processes (EGPs):

EGP~1: $\eta_t \stackrel{i.i.d.}{\sim} N(0,1)$.

EGP~2: $\eta_t \stackrel{i.i.d.}{\sim} t_4$.

Moreover, to evaluate the empirical power of the tests, consider the following EGP:

EGP~3: $\eta_t=\varepsilon_t \varepsilon_{t-1} \varepsilon_{t-2}$, $\varepsilon_t \stackrel{i . i . d .}{\sim} N(0,1)$.

Let $\omega = 0.2$, $\alpha = 0.1$, and $\beta = 0.5$. We consider sample sizes $T = 200, 400$. We then apply the residual bootstrap procedure in Section~6.2 to approximate the critical values. The bootstrap procedure is replicated 500 times.

Based on 1000 replications, the simulated empirical sizes and powers are reported in Table \ref{table_check_garch_0.05}. At the $5\%$ significance level, all methods yield empirical sizes close to the nominal level. Both the AutoHSIC-based and ADCV-based tests are able to capture such nonlinear dependence, with the AutoHSIC-based test showing higher power.

\begin{table*}[t]
\caption{ Empirical sizes and powers ($\%$) with EGPs~1--3. }
\label{table_check_garch_0.05}
\footnotesize
\tabcolsep=0pt
\begin{tabular*}{\textwidth}{@{\extracolsep{\fill}}l rrrr r rrrr r rrrr@{\extracolsep{\fill}}}
\toprule%
& \multicolumn{4}{c}{LK} && \multicolumn{4}{c}{GK} && \multicolumn{4}{c}{BDK} \\

\cline{2-5} \cline{7-10} \cline{11-15} \\[-1em]

$T$ & $\widehat{V}_{T,1}$ & $\widehat{V}_{T,3}$ & $\widehat{P}_{T,3}$ & $\widehat{P}_{T,6}$
         && $\widehat{V}_{T,1}$ & $\widehat{V}_{T,3}$ & $\widehat{P}_{T,3}$ & $\widehat{P}_{T,6}$ && $\widehat{V}_{T,1}$ & $\widehat{V}_{T,3}$ & $\widehat{P}_{T,3}$ & $\widehat{P}_{T,6}$ \\

\cline{2-5} \cline{7-10} \cline{11-15}

& \multicolumn{14}{c}{EGP~1} \\
200 & 5.2  & 5.2 & 5.2 & 6.2 && 6.0& 5.4& 5.6& 5.9 && 5.9 & 4.1 & 6.0 & 5.9 \\
400 & 5.8 & 3.6 & 5.0 & 5.7 & & 5.6& 4.4& 5.5& 5.8 && 5.7 & 4.7 & 6.0 & 5.4 \\

&\\
         & \multicolumn{14}{c}{EGP~2} \\
200 & 5.9 & 5.7 & 6.3 & 5.8 & & 5.6& 5.9& 5.6& 6.0 && 4.9 & 4.4 & 4.4 & 5.1  \\
400 & 5.6 & 6.1 & 5.0 & 5.1 & & 5.5&5.5& 5.3& 6.1 && 6.0 & 5.6 & 5.8 & 5.9  \\

&\\

& \multicolumn{14}{c}{EGP~3} \\
200 & 97.2 & 21.5 & 98.8 & 97.2 && 73.5& 23.2 & 80.6& 78.4 && 43.6 & 7.2 & 46.9 & 35.4 \\
400 & 100.0 & 39.5 & 100.0 & 100.0 && 95.8& 42.5& 99.4 & 97.6 && 72.5 & 23.9 & 84.4 & 71.2  \\

\midrule
\end{tabular*}
\end{table*}

\section{Additional Empirical Results}\label{sm_application}

\subsection{High-dimensional Realised Variance Series}

This section presents an additional application to a 30-dimensional realised variance time series.

In recent years, high-dimensional time series analysis has attracted considerable attention. Consequently, assessing serial independence prior to model construction has become an important preliminary step.
\citet{Hecq2021Granger} developed a test for Granger causality in high-dimensional VAR models based on penalised least squares estimation.
As an illustrative example, we apply the proposed method to assess the serial independence of the daily realised variances of 30 U.S.\ assets, for which \citet{Hecq2021Granger} directly constructed a vector autoregressive (VAR) model.

Specifically, given the realised variance series, \citet{Hecq2021Granger} employed a multivariate heterogeneous autoregressive (VHAR) model, originally proposed by \citet{corsi2009rv}, to capture their joint dynamics. However, the preliminary step of testing for serial independence of the original series was not considered. To address this gap, we apply the proposed method.

Define the 10-minute realised variance of asset $i$ on day $t$ as
\begin{equation}\nonumber
\operatorname{RV}_{it} = \sum_{j=1}^J r_{it,j}^2, \quad r_{it,j}=\ln P_{it,j}-\ln P_{it,j-1},
\end{equation}
where $P_{it,j}$ denotes the intraday price of asset $i$ at the $j$-th 10-minute interval on day $t$, with $i=1,\ldots,30$ and $t=1,\ldots,T$. Let
\begin{equation}\nonumber
    \mathbf{RV}_t = \left(\operatorname{RV}_{1t}, \ldots, \operatorname{RV}_{30 t}\right)^\top,
\end{equation}
which forms a 30-dimensional time series. The list of stocks is provided in Table~\ref{add_table_rv_stock name}, and the dataset is available at \url{https://github.com/Marga8}.

Following \citet{Hecq2021Granger}, we consider two sample periods: a longer period from March 2008 to February 2017 ($T=2236$) and a shorter period covering the year 2013 ($T=250$). The latter is expected to exhibit more stable dependence structures owing to its shorter time span, as it excludes two major events associated with substantial market instability, namely the 2008 global financial crisis and the 2011 U.S.\ debt-ceiling crisis.

Applying the proposed test of serial independence to $\{\mathbf{RV}_t\}$, we reject the null hypothesis at the $5\%$ significance level in all cases. The empirical $p$-values for the full sample are all zero. 
Table~\ref{table_pvalue_rv2013} reports the results for 2013, which likewise show uniformly small $p$-values. 
These findings indicate strong serial dependence in the 30-dimensional realised variance series, providing empirical support for directly modelling such data, as is common in the literature.

\begin{sidewaystable}
\caption{Stocks used in the application to high-dimensional time series of realised variances.}
\label{add_table_rv_stock name}
\tabcolsep=0pt%
\begin{tabular*}{\textwidth}{@{\extracolsep{\fill}}ccc ccc@{\extracolsep{\fill}}}
\toprule%
N. & Symbol & Issue name & N. & Symbol & Issue name \\
\midrule
1  & AAPL & APPLE INC & 16 & KO  & COCA-COLA CO \\
2  & AXP  & AMERICAN EXPRESS CO & 17 & MCD & MCDONALD’S CORP \\
3  & BA   & BOEING CO & 18 & MMM & 3M \\
4  & CAT  & CATERPILLAR & 19 & MRK & MERCK \& CO \\
5  & CSCO & CISCO SYSTEMS & 20 & MSFT & MICROSOFT CORPORATION \\
6  & CVX  & CHEVRON CORP & 21 & NKE & NIKE INC \\
7  & DD   & DOW CHEMICAL COMPANY & 22 & PFE & PFIZER INC \\
8  & DIS  & WALT DISNEY CO & 23 & PG  & PROCTER \& GAMBLE CO \\
9  & GE   & GENERAL ELEC & 24 & TRV & TRAVELERS COMPANIES INC \\
10 & GS   & GOLDMAN SACHS GROUP INC & 25 & UNH & UNITEDHEALTH GROUP INC \\
11 & HD   & HOME DEPOT INC & 26 & UTX & UNITED TECHNOLOGIES CORPORATION \\
12 & IBM  & INTL BUS MACHINE & 27 & V   & VISA INC \\
13 & INTC & INTEL CORP & 28 & VZ  & VERIZON COMMUNICATIONS INC \\
14 & JNJ  & JOHNSON \& JOHNSON & 29 & WMT & WALMART INC \\
15 & JPM  & JPMORGAN CHASE \& CO & 30 & XOM & EXXON MOBIL CORPORATION \\
\midrule
\end{tabular*}
\end{sidewaystable}

\begin{table*}[!t]
\caption{Empirical $p$-values for serial independence tests of realised variances in 2013.}
\label{table_pvalue_rv2013}
\footnotesize
\tabcolsep=0pt
\begin{tabular*}{\textwidth}{@{\extracolsep{\fill}}l rrrr r rrrr@{\extracolsep{\fill}}}
\toprule%
         &  \multicolumn{4}{c}{GK} && \multicolumn{4}{c}{BDK} \\

         \cline{2-5} \cline{7-10} 

          & $V_{T,1}$ & $V_{T,2}$ & $P_{T,2}$ & $P_{T,4}$ 
         && $V_{T,1}$ & $V_{T,2}$ & $P_{T,2}$ & $P_{T,4}$ \\

         \cline{2-5} \cline{7-10}
         
         $p$-value & 0.001 &0.006 &0.001 &0.000 && 0.000 &0.005 &0.000 &0.000 \\
\midrule
\end{tabular*}
\end{table*}

\subsection{Additional Results for CDS Returns}

Table~\ref{table_pvalue_matrix_cds} reports additional empirical $p$-values for the serial independence tests of the CDS return matrices considered in Section~7.2 of the main text.
The results complement the main application by reporting the pre-modelling serial independence tests.

\begin{table*}
\caption{Empirical $p$-values for serial independence tests of CDS returns over 2015--2018.}
\label{table_pvalue_matrix_cds}
\footnotesize
\tabcolsep=0pt
\begin{tabular*}{\textwidth}{@{\extracolsep{\fill}}l cccc c cccc@{\extracolsep{\fill}}}
\toprule%
&  \multicolumn{4}{c}{GK} && \multicolumn{4}{c}{BDK} \\
\cline{2-5} \cline{7-10} 
 & $V_{T,1}$ & $V_{T,2}$ & $P_{T,2}$ & $P_{T,4}$ 
&& $V_{T,1}$ & $V_{T,2}$ & $P_{T,2}$ & $P_{T,4}$ \\
\cline{2-5} \cline{7-10} 

$p$-value & 0.000 & 0.000 & 0.000 & 0.000 && 0.000 & 0.002 & 0.000 & 0.000 \\
\midrule

\end{tabular*}
\end{table*}

\section{Proofs}\label{sm_proof}

This section contains the proofs of the main theoretical results.

By the definition of the AutoHSIC between $X_t$ and $X_{t-m}$, we have
\begin{align}
    V_m &= \operatorname{HSIC}(X_t , X_{t-m})\nonumber \\
    &= \mathbb{E} \left[ k(X_t, X_t^\prime) \times \{ l(X_{t-m},X_{t-m}^\prime ) - 2 l(X_{t-m},X_{t-m}^{\prime \prime}) + l(X_{t-m}^{\prime \prime},X_{t-m}^{\prime \prime \prime} ) \} \right] \nonumber \\
    &= \mathbb{E} \{ d_k (X_t, X_t^{\prime}) d_l (X_{t-m}, X_{t-m}^{\prime}) \} =: \mathbb{E}
    \{ \mathcal{K}(Z_{t,m}, Z_{t,m}^{\prime})\},\label{Vm_expectation}
\end{align}
where $Z_{t,m}^{\prime} = (X_t^{\prime}, X_{t-m}^{\prime}) $, $Z_{t,m}^{\prime \prime} = (X_t^{\prime \prime}, X_{t-m}^{\prime \prime}) $ and $Z_{t,m}^{\prime \prime \prime} = (X_t^{\prime \prime \prime}, X_{t-m}^{\prime \prime \prime})$ are i.i.d. copies of  $Z_{t,m} = (X_t, X_{t-m}) $, 
$\mathcal{K}(z, z^\prime) = d_k (x, x^{\prime}) d_l (y, y^{\prime})$, $z=(x,y) \in \mathcal{X}^2$, $z^\prime =(x^\prime,y^\prime) \in \mathcal{X}^2$, and 
\begin{align}
    d_k(x, x^\prime) &= k(x, x^\prime) - \mathbb{E}_{X^\prime}\{k( x, X^\prime )\} - \mathbb{E}_{X}\{k( X, x^\prime )\} + \mathbb{E}_{X,X^\prime}\{k( X, X^\prime )\},\label{d_k d_l} \\
    d_l(y, y^\prime) &= l(y, y^\prime) - \mathbb{E}_{Y^\prime}\{l( y, Y^\prime )\} - \mathbb{E}_{Y}\{l( Y, y^\prime )\} + \mathbb{E}_{Y,Y^\prime}\{l( Y, Y^\prime )\},\nonumber
\end{align}
$(X^\prime, Y^\prime)$ is an i.i.d. copy of $(X,Y)$.

Note that in \eqref{Vm_expectation}, $V_m$ is an expectation functional. Therefore, it can be naturally estimated by a U-statistic, which is given by
\begin{equation}\label{VTm_fourth order U}
V_{T,m}=\binom{T-m}{4}^{-1} \sum_{m+1 \leq i<j<q<r \leq T} h\left(Z_{i,m}, Z_{j,m}, Z_{q,m}, Z_{r,m}\right),
\end{equation}
where the kernel function corresponding to this fourth-order U-statistic is
\begin{equation}\label{kernel function in U}
h\left(z_i, z_j, z_q, z_r\right)=\frac{1}{4!} \sum_{\left(i_1, i_2, i_3, i_4\right)}^{(i, j, q, r)} k\left(x_{i_1}, x_{i_2}\right)\left\{ l\left(y_{i_1}, y_{i_2}\right)-2 l\left(y_{i_1}, y_{i_3}\right) + l\left(y_{i_3}, y_{i_4}\right) \right\},
\end{equation}
$z_t=(x_t,y_t) \in \mathcal{X}^2$ for $t=1, \ldots, T$.

\subsection{Proof of Theorem~1}

Our analysis of the asymptotic properties of HSIC relies on the Cauchy--Schwarz inequality in the RKHS, in contrast to the triangle inequality typically used for distance covariance in \citet{jiang2024testing}.

\begin{lemma}[Cauchy-Schwarz inequality in RKHS]\label{lemma1_CSinequality}
    Let $k:\mathcal{X}\times\mathcal{X}\to\mathbb{R}$ be a symmetric and positive-definite kernel. Then, for any $x,y \in \mathcal{X}$, 
    \begin{equation}
       \left| k(x,y) \right| \leq \left| k(x,x) k(y,y) \right|^{1/2}.
    \end{equation}
\end{lemma}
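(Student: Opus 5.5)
The plan is to reduce the claim to the Cauchy--Schwarz inequality for the $2\times 2$ Gram matrix of the kernel, which avoids any appeal to feature maps. Fix $x,y\in\mathcal{X}$ and consider the matrix
\[
G=\begin{pmatrix} k(x,x) & k(x,y)\\ k(y,x) & k(y,y)\end{pmatrix}.
\]
Since $k$ is symmetric, $k(x,y)=k(y,x)$, so $G$ is a symmetric matrix. Positive-definiteness of $k$, in the kernel sense, means that for all $c_1,c_2\in\mathbb{R}$,
\[
\sum_{i,j=1}^2 c_ic_j k(u_i,u_j)\ge 0, \qquad u_1=x,\ u_2=y .
\]
This says exactly that $G$ is positive semidefinite.

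Two consequences follow. First, the diagonal entries are nonnegative: taking $c=(1,0)$ and $c=(0,1)$ gives $k(x,x)\ge 0$ and $k(y,y)\ge 0$. Hence the absolute value on the right-hand side of the statement is harmless, since $|k(x,x)k(y,y)|=k(x,x)k(y,y)$. Second, a positive semidefinite $2\times2$ matrix has nonnegative determinant, because the determinant is the product of the two nonnegative eigenvalues. Therefore
\[
k(x,x)k(y,y)-k(x,y)^2\ge 0 .
\]
Taking square roots gives $|k(x,y)|\le \{k(x,x)k(y,y)\}^{1/2}$, which is the claim.

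Alternatively, one could invoke the reproducing property. Writing $k(x,y)=\langle k(\cdot,x),k(\cdot,y)\rangle_{\mathcal{H}_k}$ and $\|k(\cdot,x)\|^2_{\mathcal{H}_k}=k(x,x)$, the result is the ordinary Cauchy--Schwarz inequality in $\mathcal{H}_k$. I would mention this route only as a remark, since it presupposes the Moore--Aronszajn construction of the RKHS.

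There is no real obstacle here. The only point requiring care is the degenerate case in which one diagonal entry vanishes, say $k(x,x)=0$. The determinant argument covers this automatically: it gives $-k(x,y)^2\ge 0$, hence $k(x,y)=0$. If one instead proves it via the quadratic form $t\mapsto k(x,x)t^2+2tk(x,y)+k(y,y)\ge 0$, this case must be handled separately rather than by dividing by $k(x,x)$.
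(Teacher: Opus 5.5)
Your argument is correct, but your main route differs from the paper's. The paper uses the reproducing property: it writes $k(x,y)=\langle k(x,\cdot),k(\cdot,y)\rangle_{\mathcal{H}_k}$ and applies the Cauchy--Schwarz inequality in $\mathcal{H}_k$, with $\Vert k(x,\cdot)\Vert_{\mathcal{H}_k}^2=k(x,x)$. That is exactly the argument you relegate to a remark.

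You instead use only the $n=2$ case of the definition of positive-definiteness. The $2\times 2$ Gram matrix is positive semidefinite, so its diagonal entries are nonnegative and $k(x,x)k(y,y)-k(x,y)^2\ge 0$.

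Your route is self-contained. It needs neither the existence of the RKHS (Moore--Aronszajn) nor the Hilbert-space Cauchy--Schwarz inequality. It also explicitly justifies why the absolute value on the right-hand side is harmless, which the paper gets only implicitly through $k(x,x)=\Vert k(x,\cdot)\Vert^2_{\mathcal{H}_k}\ge 0$. The degenerate case $k(x,x)=0$ is absorbed by the determinant rather than handled separately.

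The paper's route is a one-liner, since $\mathcal{H}_k$ is already part of its setup. It also matches the feature-space viewpoint used later, for instance when $d_k(x,x')$ is written as an inner product of the centred features $k(x,\cdot)-\mathbb{E}k(X,\cdot)$ in the proof of the Mercer lemma.

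Underneath, both arguments are the Cauchy--Schwarz inequality for a positive semidefinite bilinear form. Yours simply restricts it to the two-dimensional span generated by the points $x$ and $y$.
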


\begin{proof}[Proof of Lemma~\ref{lemma1_CSinequality}.]

By the reproducing property of $k$, we have
$k(x,y) = \langle k(x,\cdot), k(\cdot,y) \rangle_{\mathcal{H}_k}$, so
$\left| k(x,y) \right| = \left| \langle k(x,\cdot),  k(\cdot,y) \rangle \right|$.
By the Cauchy--Schwarz inequality in the Hilbert space $\mathcal{H}$,
\begin{equation}\nonumber
\left|\langle u, v\rangle_{\mathcal{H}}\right| \leq \Vert u \Vert_{\mathcal{H}} \Vert v \Vert_{\mathcal{H}}, \quad \forall u, v \in \mathcal{H},
\end{equation}
we have $\left| k(x, y)\right| \leq \Vert k(x,\cdot) \Vert_{\mathcal{H}} \Vert k(\cdot, y) \Vert_{\mathcal{H}} = \left| k(x,x) k(y,y) \right|^{1/2}$.
\end{proof}

\begin{proof}[\bf Proof of Theorem~1.]

First, by Assumption~3, and using the $\beta$-mixing inequality in Lemma 1 of \citet{yoshihara1976limiting}, we obtain $\mathbb{E}(V_{T,m}) = V_m + o(1)$.

Since $V_{T,m}$ is a U-statistic based on the $\beta$-mixing sample $\{ Z_{t,m}\}_{t=m+1}^T$, we apply the law of large numbers for U-statistics under the $\beta$-mixing condition from \cite{arcones1998law}. In view of Theorem 1 (iii) in \cite{arcones1998law}, it suffices to show that for any $i,j,q,r$, for some $0 <\delta \leq 1$, 
\begin{equation}\nonumber
    \sup _{1 \leq i, j, q, r<\infty} \mathbb{E} \{ \left|h\left(Z_{i,m}, Z_{j,m}, Z_{q,m}, Z_{r,m}\right)\right|\left(\log ^{+}\left|h\left(Z_{i,m}, Z_{j,m}, Z_{q,m}, Z_{r,m}\right)\right|\right)^{(1+\delta)}\}<\infty ,
\end{equation}
where $\log ^{+}(x)=\max \{\log x, 0\}$.

By the same argument based on elementary inequalities as in the proof of Theorem~3.3 in \citet{jiang2024testing}, it suffices to show that, for some $\delta>0$,
\begin{equation}\label{moment condition in LLN}
    \sup _{m+1 \leq i <j <q <r< T} \mathbb{E}\{\left|h\left(Z_{i,m}, Z_{j,m}, Z_{q,m}, Z_{r,m}\right)\right|^{1+\delta}\}<\infty.
\end{equation}
We show that \eqref{moment condition in LLN} holds as follows.
\begin{align}
    & \sup _{m+1 \leq i <j <q <r< T} \mathbb{E}\{ \left|h\left(Z_{i,m}, Z_{j,m}, Z_{q,m}, Z_{r,m}\right)\right|^{1+\delta}\}\nonumber \\
    & \leq C \sup _{(i_1, i_2, i_3, i_4)}\left( \mathbb{E}\left|k_{i_1 i_2} l_{i_1 i_2}\right|^{1+\delta}+2 \mathbb{E}\left|k_{i_1 i_2} l_{i_1 i_3}\right|^{1+\delta} + \mathbb{E}\left|k_{i_1 i_2} l_{i_3 i_4}\right|^{1+\delta}\right) \nonumber \\
    & \leq C \sup _{(i_1, i_2, i_3, i_4)} \left( \mathbb{E} \left| k_{i_1 i_2} \right|^{2+2\delta} \mathbb{E} \left| l_{i_3 i_4} \right|^{2+2\delta} \right)^{1/2} \nonumber \\
    & \leq C \sup _{(i_1, i_2, i_3, i_4)} \left( \mathbb{E} \left| k_{i_1 i_1} k_{i_2 i_2} \right|^{1+\delta} \mathbb{E} \left| l_{i_3 i_3} l_{i_4 i_4} \right|^{1+\delta} \right)^{1/2} \nonumber \\
    & \leq C \sup _{(i_1, i_2, i_3, i_4)} \left(  \mathbb{E} \left| k_{i_1 i_1} \right|^{2+2\delta} \mathbb{E} \left| k_{i_2 i_2} \right|^{2+2\delta}   \mathbb{E} \left| l_{i_3 i_3} \right|^{2+2\delta} \mathbb{E} \left| l_{i_4 i_4} \right|^{2+2\delta}  \right)^{1/4} ,\nonumber
\end{align}
where $k_{i_1 i_2} = k(X_{i_1}, X_{i_2})$, $l_{i_1 i_2} = l(X_{i_1-m}, X_{i_2-m})$.
The first inequality follows from Minkowski's inequality, the second and fourth from H\"older's inequality, and the third from the Cauchy--Schwarz inequality in the RKHS (Lemma~\ref{lemma1_CSinequality}).
Then, by Assumption~2 (choosing any $0<\delta\leq r/2$), \eqref{moment condition in LLN} holds. 
\end{proof}

\subsection{Proof of Theorem~2 and 3}

Before proving Theorem~2, we present the following five lemmas.

\begin{lemma}[Hoeffding decomposition of a U-statistic] \label{lemma_Hoeffding}
For the kernel $h$ in \eqref{kernel function in U} of the U-statistic $V_{T,m}$, let $z_i=\left(x_i, y_i\right)$, $i=1,2,3,4$, and
    \begin{equation}\nonumber
         h_c \left( z_1, \ldots, z_c \right) = \mathbb{E} \left[h \left( z_1, \ldots, z_c, Z_{c+1}, \ldots, Z_4 \right) \right]
    \end{equation}
    for $c=1,2,3,4$, where $\{Z_t\}_{t =1}^4 $ are i.i.d. copies of $Z_{t,m} =(X_t,X_{t-m})$.
    Then,
\begin{align}
    h_1(z_1)= & \frac{1}{2}[\mathbb{E}\{d_k(x_1, X_t) d_l(y_1, X_{t-m})\}+ V_m ],\nonumber \\
h_2\left(z_1, z_2\right)= & \frac{1}{6}\left[d_k\left(x_1, x_2\right) d_l\left(y_1, y_2\right)+2 \mathbb{E}\{ d_k\left(x_1, X_t\right) d_l\left(y_1, X_{t-m}\right)\}\right. \nonumber\\ & \left. +2 \mathbb{E}\{ d_k\left(x_2, X_t\right) d_l\left(y_2, X_{t-m}\right)\} +V_m -\mathbb{E}\{ d_k\left(x_1, X_t\right) d_l\left(y_2, X_{t-m}\right)\}\right. \nonumber\\ &\left. -\mathbb{E}\{d_k \left(x_2, X_t\right) d_l\left(y_1, X_{t-m}\right) \}\right], \nonumber\\
h_3\left(z_1, z_2, z_3\right)= & \frac{1}{12} \left[2 \sum_{1 \leq i_1<i_2 \leq 3} d_k\left(x_{i_1}, x_{i_2}\right) d_l\left(y_{i_1}, y_{i_2}\right) \right. \nonumber\\
&\left. +2 \sum_{1 \leq i \leq 3} \mathbb{E}\{d_k\left(x_i, X_t\right) d_l\left(y_i, X_{t-m}\right)\} -\sum_{\left(i_1, i_2, i_3\right)}^{(1,2,3)} d_k\left(x_{i_1}, x_{i_2}\right) d_l\left(y_{i_1}, y_{i_3}\right) \right. \nonumber\\
& \left.-\sum_{1 \leq i_1 \neq i_2 \leq 3} \mathbb{E}\{d_k\left(x_{i_1}, X_t\right) d_l\left(y_{i_2}, X_{t-m} \right)\} \right], \nonumber\\
h_4\left(z_1, z_2, z_3, z_4\right)= & \frac{1}{4!} \sum_{\left(i_1, i_2, i_3, i_4\right)}^{(1,2,3,4)} d_k\left(x_{i_1}, x_{i_2}\right) \{d_l\left(y_{i_3}, y_{i_4}\right)+d_l\left(y_{i_1}, y_{i_2}\right)-2 d_l\left(y_{i_1}, y_{i_3}\right)\} .\nonumber
\end{align}

Moreover, define $h^{(1)}(z)=h_1(z)-V_m$ and $h^{(c)}=h_c\left(z_1, \ldots, z_c\right)-\sum_{j=1}^{c-1} \sum_{(c, j)} h^{(j)}\left(z_{i_1}, \ldots, z_{i_j}\right)- V_m$, where $\sum_{(c, j)}$ denotes the summation over all $j$-subsets $\left\{\left(i_1, \ldots, i_j\right)\right\}$ of $\{1, \ldots, c\}$. Then, 
\begin{equation}\label{decomposition VTm}
V_{T,m}=V_m +\sum_{c=1}^4\binom{4}{c} U_{T,m}\left(h^{(c)}\right),
\end{equation}
where $U_{T,m}\left(h^{(c)}\right)$ is the U-statistic based on kernel $h^{(c)}$.
\end{lemma}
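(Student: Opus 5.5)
The computation has two parts. First, I will obtain the projections $h_c$ by averaging the kernel $h$ over the free arguments. Second, the decomposition \eqref{decomposition VTm} is the standard Hoeffding identity, which is purely algebraic.

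\textbf{Computing the projections.} The preliminary step is to rewrite $h$ with $k,l$ replaced by their doubly centred versions $d_k,d_l$. In the symmetrised combination
\[
k(x_{i_1},x_{i_2})\{l(y_{i_3},y_{i_4})+l(y_{i_1},y_{i_2})-2l(y_{i_1},y_{i_3})\},
\]
averaged over all $4!$ permutations, substitute
\[
k(x,x')=d_k(x,x')+\mu_k(x)+\mu_k(x')-\mu_k^{0},
\]
where $\mu_k(x)=\mathbb{E}k(x,X')$ and $\mu_k^{0}=\mathbb{E}k(X,X')$. Substitute the analogous expansion for $l$. I will then check, permutation by permutation, that every term involving a one-argument mean $\mu_k$ or $\mu_l$ cancels in the symmetrised sum. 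The reason is that the bracket $l(y_{i_3},y_{i_4})+l(y_{i_1},y_{i_2})-2l(y_{i_1},y_{i_3})$ is annihilated by such shifts once one averages over relabellings of $(i_3,i_4)$ and $(i_1,i_2)$. This justifies the stated form of $h_4$. It is the same identity that gives \eqref{hsic_centered} from the embedding definition, so I may cite \citet{gretton2007} for it.

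With $h$ written through $d_k,d_l$, I use two facts:
\begin{itemize}
\item[(i)] $\mathbb{E}_{X'}d_k(x,X')=0$ and $\mathbb{E}_{Y'}d_l(y,Y')=0$ for fixed arguments, by double centring;
\item[(ii)] when both arguments of $d_k$ and $d_l$ belong to the same independent copy $Z=(X_t,X_{t-m})$, the expectation is $\mathbb{E}\{d_k(x,X_t)d_l(y,X_{t-m})\}$, and taking a further expectation gives $V_m$.
\end{itemize}

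For each $c$, I fix $z_1,\dots,z_c$, integrate out the remaining $Z$'s, and classify the $4!$ permutations by which positions $i_1,\dots,i_4$ are occupied by fixed versus random arguments.
\begin{itemize}
\item Any term in which a random argument appears alone inside a $d$ vanishes by (i). This happens, for example, in $d_k(x_{i_1},x_{i_2})d_l(y_{i_3},y_{i_4})$ whenever $\{i_1,i_2\}\ne\{i_3,i_4\}$ contains a random index.
\item The surviving terms are those where a random index appears in both the $d_k$ and the $d_l$ factor. These yield either $\mathbb{E}\{d_k(x_a,X_t)d_l(y_b,X_{t-m})\}$ or $V_m$.
\end{itemize}
Counting multiplicities gives the stated coefficients: $1/2$ for $h_1$, $1/6$ for $h_2$, and $1/12$ for $h_3$. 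This counting is the main obstacle. It is elementary but error-prone, especially for the $-2d_l(y_{i_1},y_{i_3})$ cross terms, which produce the negative mixed terms $\mathbb{E}\{d_k(x_a,X_t)d_l(y_b,X_{t-m})\}$ with $a\neq b$. I would organise the count by first computing $h_3$ and then obtaining $h_2$ and $h_1$ by integrating further, using $h_{c-1}(z_1,\dots,z_{c-1})=\mathbb{E}h_c(z_1,\dots,z_{c-1},Z)$. This doubles as a consistency check.

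\textbf{The decomposition.} The recursively defined $h^{(c)}$ are the canonical, completely degenerate components, and by Möbius inversion
\[
h(z_1,\dots,z_4)=V_m+\sum_{c=1}^4\sum_{(4,c)}h^{(c)}(z_{i_1},\dots,z_{i_c}).
\]
Substitute this into \eqref{VTm_fourth order U}. Each $c$-subset of a 4-tuple is counted so that averaging over $\binom{T-m}{4}$ tuples gives $\binom{4}{c}$ times the $c$-th order U-statistic average of $h^{(c)}$. This yields \eqref{decomposition VTm}. The dependence among the $Z_{t,m}$ is irrelevant here, because the identity is pointwise; the expectations defining $h_c$ use i.i.d. copies by definition.
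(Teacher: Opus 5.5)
Your proposal is correct and follows the same route as the paper. The paper invokes the Hoeffding decomposition of \citet{lee1990u} and defers the projection formulas to \citet{zhang2018conditional} and \citet{jiang2024testing}; your plan of computing $h_3$ first and integrating down does reproduce the stated coefficients $1/12$, $1/6$ and $1/2$. One small point: your bracket argument only shows that shifts of $l$ by $\mu_l(y)+\mu_l(y')-\mu_l^{0}$ cancel, so for the $\mu_k$ terms you should either run the analogous relabelling average or note that the symmetrised kernel, $\frac{1}{6}\sum_{\{a,b\}}k_{ab}l_{ab}+\frac{1}{6}\sum_{\{a,b\}}k_{ab}l_{cd}-\frac{1}{12}\sum_{(a,b,c)}k_{ab}l_{ac}$ (with $\{c,d\}$ the complementary pair), is symmetric under interchanging $k$ and $l$.
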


\begin{proof}[Proof of Lemma~\ref{lemma_Hoeffding}.]
    The decomposition of $V_{T,m}$ in \eqref{decomposition VTm} follows directly from the Hoeffding decomposition of a U-statistic (Section 1.6 of \citet{lee1990u}). Moreover, straightforward calculations yield the explicit form of $h_c$; see, for example, Section 1.2 of the supplement of \citet{zhang2018conditional} or Lemma B.1 of the supplement of \citet{jiang2024testing}.
\end{proof}

\begin{lemma}[Mercer theorem for RKHS]\label{lemma_mercer for psd}
    Suppose that Assumption~4 holds, then,
    \begin{equation}\nonumber
\mathcal{K}\left(z, z^{\prime}\right):=d_k\left(x, x^{\prime}\right) d_l\left(y, y^{\prime}\right)=\sum_{\ell=1}^{\infty} \lambda_{\ell} \Phi_{\ell}(z) \Phi_{\ell}\left(z^{\prime}\right),
\end{equation}
where the series converges absolutely and uniformly on $\left(z, z^{\prime}\right) \in \mathcal{X}^2 \times \mathcal{X}^2$.
$\left\{\lambda_{\ell}\right\}_{\ell=1}^{\infty}$ and $\left\{\Phi_{\ell}\right\}_{\ell=1}^{\infty}$ are the non-zero eigenvalues and orthonormal eigenfunctions corresponding to $\mathcal{K}(\cdot, \cdot)$, i.e.,
$\mathbb{E} [\mathcal{K} (z, Z_{t,m} ) \Phi_{\ell} (Z_{t,m} ) ]=\lambda_\ell \Phi_\ell (z) $.
\end{lemma}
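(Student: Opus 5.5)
The plan is to realise $\mathcal{K}$ as a reproducing kernel in its own right, so that every pointwise statement reduces to Bessel/Parseval identities in the RKHS $\mathcal{H}_{\mathcal K}$. I will also use the continuity of $k$ and $l$, which holds for all kernels used in the paper; this is an extra hypothesis beyond Assumption~4.

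\emph{Step 1: positive definiteness and square integrability.} Write $\phi_k(x)=k(x,\cdot)-\mu_k$ with $\mu_k=\mathbb{E}\,k(X,\cdot)\in\mathcal{H}_k$. This mean element is well defined because $\mathbb{E}\,k(X,X)^{1/2}<\infty$ under Assumption~4. Then $d_k(x,x')=\langle\phi_k(x),\phi_k(x')\rangle_{\mathcal H_k}$, and similarly for $d_l$. Hence
\[
\mathcal{K}(z,z')=\langle \phi_k(x)\otimes\phi_l(y),\,\phi_k(x')\otimes\phi_l(y')\rangle_{\mathcal H_k\otimes\mathcal H_l}
\]
is a symmetric positive-definite kernel on $\mathcal{X}^2$ with feature map $\Psi(z)=\phi_k(x)\otimes\phi_l(y)$. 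By Lemma~\ref{lemma1_CSinequality}, $|\mathcal{K}(z,z')|\le \mathcal{K}(z,z)^{1/2}\mathcal{K}(z',z')^{1/2}$. Assumption~4 gives $\mathbb{E}\,\mathcal{K}(Z,Z)^2<\infty$, so $\mathbb{E}\,\mathcal{K}(Z,Z')^2<\infty$. The integral operator $(L\Phi)(z)=\mathbb{E}\{\mathcal{K}(z,Z_{t,m})\Phi(Z_{t,m})\}$ on $L^2(P_Z)$ is therefore Hilbert--Schmidt, self-adjoint and positive. The spectral theorem then yields the nonzero eigenvalues $\lambda_\ell>0$ and $L^2(P_Z)$-orthonormal eigenfunctions $\Phi_\ell$.

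\emph{Step 2: everywhere-defined eigenfunctions and absolute convergence.} Each $\Phi_\ell$ is fixed pointwise on all of $\mathcal{X}^2$ through $\Phi_\ell(z):=\lambda_\ell^{-1}\mathbb{E}\{\mathcal{K}(z,Z)\Phi_\ell(Z)\}$, so $\Phi_\ell\in\mathcal{H}_{\mathcal K}$. The standard computation $\langle \sqrt{\lambda_\ell}\Phi_\ell,\sqrt{\lambda_{\ell'}}\Phi_{\ell'}\rangle_{\mathcal H_{\mathcal K}}=\lambda_{\ell'}^{-1/2}\lambda_\ell^{1/2}\langle \Phi_\ell, L\Phi_{\ell'}\rangle_{L^2}\cdot\lambda_{\ell'}^{-1}\!\cdot\lambda_{\ell'}=\delta_{\ell\ell'}$ shows that $e_\ell:=\sqrt{\lambda_\ell}\Phi_\ell$ is orthonormal in $\mathcal{H}_{\mathcal K}$. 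By the reproducing property and Bessel's inequality,
\[
\sum_\ell\lambda_\ell\Phi_\ell(z)^2=\sum_\ell\langle\mathcal K(z,\cdot),e_\ell\rangle^2\le\mathcal K(z,z)\quad\text{for every }z.
\]
Cauchy--Schwarz then gives absolute convergence of $\sum_\ell\lambda_\ell\Phi_\ell(z)\Phi_\ell(z')$ at every pair $(z,z')$.

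\emph{Step 3: equality and uniform convergence (the main obstacle).} Let $\Pi_L$ be the projection onto $\mathrm{span}\{e_1,\dots,e_L\}$ in $\mathcal{H}_{\mathcal K}$. Then the tail of the series equals $\langle (I-\Pi_L)\mathcal K(z,\cdot),(I-\Pi_L)\mathcal K(z',\cdot)\rangle$. Its modulus is at most $R_L(z)^{1/2}R_L(z')^{1/2}$, where $R_L(z)=\mathcal K(z,z)-\sum_{\ell\le L}\lambda_\ell\Phi_\ell(z)^2$. It therefore suffices to prove $\sup_z R_L(z)\to0$.

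First I need the limit of $R_L$ to be zero, i.e.\ that $\{e_\ell\}$ spans the part of $\mathcal{H}_{\mathcal K}$ containing every $\mathcal{K}(z,\cdot)$. The idea is this. If $f\in\mathcal H_{\mathcal K}$ is orthogonal to all $e_\ell$, then $Lf=0$. This gives $\mathbb{E}\{f(Z)^2\}=\langle f, L f\rangle_{\mathcal{H}_{\mathcal K}}=0$, so $f=0$ $P_Z$-a.s. By continuity of $\mathcal K$, $f$ then vanishes on $\mathrm{supp}(P_Z)$.

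To reach points outside the support, I would exploit the product structure. The map $\Psi$ factors through $\phi_k$ and $\phi_l$. The plan is to show that $\Psi(z)$ lies in the closed span of $\{\Psi(z'):z'\in\mathrm{supp}P_Z\}$ for every $z$. This is the point I expect to be genuinely delicate, and possibly to need that the kernels are characteristic or that the support is the whole space. Granting it, $R_L(z)\downarrow0$ pointwise. Monotonicity plus continuity of $R_L$ then gives uniform convergence on compacts by Dini's theorem.

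To upgrade to global uniformity I would try to show that $\{(I-\Pi_L)\mathcal K(z,\cdot)\}$ is uniformly small. I would do this by writing $\mathcal{K}(z,\cdot)$ as the image under $L^{1/2}$-type maps of a bounded set, which is plausible for bounded kernels. Combined with the bound of Step 2, this finally yields the expansion as stated.
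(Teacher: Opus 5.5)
\textbf{The gap is in Step~3.} It cannot be closed along the lines you sketch, because both claims you try to prove there are false under the stated hypotheses.

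\emph{Off the support.} $\Psi(z)$ need not lie in $\overline{\mathrm{span}}\{\Psi(z'):z'\in\mathrm{supp}\,P_Z\}$, and characteristic kernels do not help. Take $P$ uniform on $\{a,b\}$ and Gaussian $k=l$. Then $\phi_k(a)=v$ and $\phi_k(b)=-v$ with $v=\tfrac12\{k(a,\cdot)-k(b,\cdot)\}$, so the integral operator has rank one. The integral formula for the eigenfunction gives $\lambda_1\Phi_1(z)^2=\langle\phi_k(x),v\rangle^2\langle\phi_k(y),v\rangle^2/\|v\|^4$. Take $z=(c,c)$ with $c\notin\{a,b\}$. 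Kernel sections at three distinct points are linearly independent, so $\phi_k(c)$ is not a multiple of $v$. Hence $\lambda_1\Phi_1(z)^2$ is strictly below $\mathcal{K}(z,z)=\|\phi_k(c)\|^4$.

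\emph{Global uniformity.} This fails even with full support. Take a Gaussian kernel and $P=N(0,1)$; then $k(x,\cdot)\rightharpoonup 0$ in $\mathcal{H}_k$ as $|x|\to\infty$. Identify $\mathcal{H}_{\mathcal K}$ with the closed span of $\Psi$ and fix $y$. Then $\Pi_L\Psi(x,y)\to-\Pi_L(\mu_k\otimes\phi_l(y))$ in norm, because $\Pi_L$ has finite rank. Meanwhile $\|\Psi(x,y)\|^2\to(1+\|\mu_k\|^2)d_l(y,y)$. Hence $\liminf_{|x|\to\infty}R_L(x,y)\ge d_l(y,y)>0$ for every $L$, and the series is not uniformly convergent even on the diagonal. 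No $L^{1/2}$-type argument can deliver your last step.

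\textbf{Comparison with the paper.} The paper never meets this obstacle because it does not prove the expansion. Its proof is exactly your Step~1: positive definiteness from $d_k(x,x')=\langle k(x,\cdot)-\mathbb{E}k(X,\cdot),\,k(x',\cdot)-\mathbb{E}k(X,\cdot)\rangle_{\mathcal H_k}$ plus closure under products, and $\mathbb{E}\mathcal{K}^2(Z,Z')<\infty$ via H\"older and Lemma~\ref{lemma1_CSinequality}. It then appeals to Theorem~2 of \cite{sun2005mercer}. Noncompact Mercer theorems of that type assume a continuous kernel and a nondegenerate (full-support) measure, and give uniformity only on compact subsets. So the claim ``uniformly on $\mathcal{X}^2\times\mathcal{X}^2$'' in the statement over-claims, and the difficulty you hit lies in the statement, not in your method.

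\textbf{The repair is to restrict the conclusion.} Your Steps~1--2, together with the first half of Step~3 (orthogonality to every $e_\ell$ forces $f=0$ on $\mathrm{supp}\,P_Z$), already give the expansion pointwise and absolutely on $\mathrm{supp}\,P_Z\times\mathrm{supp}\,P_Z$. Dini's theorem and your tail bound $R_L(z)^{1/2}R_L(z')^{1/2}$ then make it uniform on compact subsets of that set.

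This restricted version is all that the proofs of Theorems~\ref{thm2_nulldis} and~\ref{thm_boot_null} use:
\begin{itemize}
\item Each $Z_{i,m}$ has law $P_Z$, so $(Z_{i,m},Z_{j,m})$ lies in the support almost surely, even when $|i-j|\le m$.
\item Integrating the diagonal identity gives $\sum_\ell\lambda_\ell=\mathbb{E}\mathcal{K}(Z,Z)$.
\item Your Bessel bound supplies the finite fourth moment of each $\Phi_\ell$.
\end{itemize}
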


\begin{proof}[Proof of Lemma~\ref{lemma_mercer for psd}.]
This follows from Theorem 2 of \cite{sun2005mercer}. To apply the theorem, we need to verify two conditions:
$(a)$ $\mathcal{K}\left(z, z^{\prime}\right)$ is positive-definite (i.e., a Mercer kernel as defined in \citealp{sun2005mercer}). 
$(b)$ $\mathbb{E}\mathcal{K}^2\left(Z, Z^{\prime}\right) < \infty$, where $Z^\prime = (X^\prime, Y^\prime)$ is an i.i.d. copy of $Z=(X,Y) \in \mathcal{X} \times \mathcal{X}$ with $X$ and $Y$ i.i.d. copies of $X_t$. 

For $(a)$, we first prove that $d_k(x,x^\prime)$ is positive-definite. In the RKHS $\mathcal{H}_k$, by the reproducing property, we have
\begin{equation}\nonumber
    d_k(x,x^\prime) = \langle k(x,\cdot)-\mathbb{E}k(X,\cdot), k(\cdot, x^\prime)-\mathbb{E}k(X,\cdot) \rangle_{\mathcal{H}_k}.
\end{equation}
Therefore, for $\forall n$, $\forall \{x_1,\ldots,x_n\} \in \mathcal{X}$, $\forall \{c_1,\ldots,c_n\} \in \mathbb{R}$, we have
\begin{equation}\nonumber
    \sum_{i, j=1}^n c_i c_j d_k\left(x_i, x_j\right)=\Vert  \sum_{i=1}^n c_i\left( k(x_i,\cdot) -\mathbb{E}k(X,\cdot) \right) \Vert _{\mathcal{H}_k}^2 \geq 0,
\end{equation}
which means $d_k(x,x^\prime)$ is a positive-definite kernel. 
Then, condition $(a)$ holds, since the product of two positive-definite kernels is also positive-definite.

For $(b)$, by H\"older's inequality,
\begin{equation}\nonumber
    \mathbb{E}\{ \mathcal{K}^2\left(Z, Z^{\prime}\right)\} =  \mathbb{E}\{ d_k^2(X,X^\prime) d_l^2(Y,Y^\prime) \} 
    \leq \sqrt{ \mathbb{E}\{ d_k^4(X,X^\prime) \} \mathbb{E}\{ d_l^4(Y,Y^\prime) \} },
\end{equation}
so we only need to show that $\mathbb{E}\left[ d_k^4(X,X^\prime) \right] < \infty$. 
\begin{align}
    \mathbb{E}\{ d_k^4(X,X^\prime) \} &\leq C \mathbb{E}k^4(X,X^\prime)
    \leq C \mathbb{E}[k^2(X,X) k^2(X^\prime,X^\prime)] \nonumber\\
    &= C \mathbb{E}k^2(X,X)  \mathbb{E}k^2(X^\prime,X^\prime) <\infty,\nonumber
\end{align}
where the second inequality holds by Lemma \ref{lemma1_CSinequality}. Then, condition $(b)$ holds. 
\end{proof}

\begin{lemma}\label{lemma_mds_proof}
    Define $\mathcal{F}_t=\sigma\left(X_t, X_{t-1}, \ldots\right)$. Under $H_0$, for any fixed $M$ and $L$, we have\\
$(a)$   The sequence
\( \left\{ \left(\Phi_{\ell}(Z_{t,m})\right)_{\ell=1,\ldots,L;\,m=1,\ldots,M}
\right\}_{t=1}^T
\)
forms a \(LM\)-dimensional martingale difference vectors with respect to
\(\{\mathcal{F}_t\}\).\\
    $(b)$ $\mathbb{E}\{ \Phi_{\ell_1}\left(Z_{t,m_1}\right) \Phi_{\ell_2}\left(Z_{t,m_2}\right) \}=\mathbf{I}\left(\ell_1=\ell_2, m_1=m_2\right) $.\\
    $(c)$ For $\forall \ell_1, \ell_2$, $i_1<j_1, i_2<j_2$, $\mathbb{E}\left\{\Phi_{\ell_1}\left(Z_{i_1,m}\right) \Phi_{\ell_1}\left(Z_{j_1,m}\right) \Phi_{\ell_2}\left(Z_{i_2,m}\right) \Phi_{\ell_2}\left(Z_{j_2,m}\right)\right\}=0$, except for the case when the index set $\{i_1, j_1\}$ is identical to $\{i_2, j_2\}$. 
\end{lemma}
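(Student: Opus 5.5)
The whole argument rests on one fact: each eigenfunction $\Phi_\ell$ of $\mathcal{K}$ is centred in each coordinate separately. All three parts then follow from independence of the $X_t$ under $H_0$. So the first step is to establish that fact.

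\textbf{Centring in each coordinate.} Under $H_0$ the pair $Z_{t,m}=(X_t,X_{t-m})$ has independent components, each with law $P$. By construction $\mathbb{E}\{d_k(x,X)\}=0$ for every $x$, and $\mathbb{E}\{d_l(y,Y)\}=0$ for every $y$. Take $\lambda_\ell\neq0$. The eigen-equation gives
\[
\lambda_\ell\Phi_\ell(x,y)=\mathbb{E}\{d_k(x,X)\,d_l(y,Y)\,\Phi_\ell(X,Y)\}.
\]
Integrate over $x\sim P$ with $y$ fixed. Fubini applies because of the moment bound in Lemma~\ref{lemma_mercer for psd} and $\Phi_\ell\in L^2$. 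This gives $\lambda_\ell\,\mathbb{E}_X\Phi_\ell(X,y)=0$, so $\mathbb{E}\{\Phi_\ell(X,y)\}=0$ for $P$-almost every $y$. The same argument in the other coordinate gives $\mathbb{E}\{\Phi_\ell(x,Y)\}=0$ for almost every $x$.

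\textbf{Part (a).} The vector $\Phi_\ell(Z_{t,m})$ is $\mathcal{F}_t$-measurable. Since $X_t$ is independent of $\mathcal{F}_{t-1}$ and $X_{t-m}\in\mathcal{F}_{t-1}$,
\[
\mathbb{E}\{\Phi_\ell(Z_{t,m})\mid\mathcal{F}_{t-1}\}=\mathbb{E}_X\{\Phi_\ell(X,X_{t-m})\}=0.
\]
Integrability comes from $\mathbb{E}\Phi_\ell^2=1$. Hence every coordinate of the $LM$-vector is a martingale difference, and so is the vector.

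\textbf{Part (b).} When $m_1=m_2$, the identity is exactly orthonormality of $\{\Phi_\ell\}$ in $L^2$ of the law of $Z_{t,m}$. That law is $P\otimes P$ for every $m$, so the same eigensystem serves all lags. When $m_1\neq m_2$, the variable $X_{t-m_1}$ appears only in the first factor. Condition on $(X_t,X_{t-m_2})$ and integrate out $X_{t-m_1}$; the second-coordinate centring makes the expectation zero.

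\textbf{Part (c).} This is the main bookkeeping step. Suppose $\{i_1,j_1\}\neq\{i_2,j_2\}$. Let $s$ be the largest index among $\{i_1,j_1,i_2,j_2\}$ that appears exactly once; if the two sets differ, such an index is expected to exist. For instance, take $\max(j_1,j_2)$. If $j_1=j_2$, compare $i_1\neq i_2$ and take the larger one, $s$. Then $\Phi(Z_{s,m})$ is the only factor involving $X_s$. It could still share $X_{s-m}$ with another factor, but $X_s$ itself appears only there unless some other index equals $s+m$.

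That exception is the delicate case: another index $u=s+m$ would bring $X_s$ in as its lagged coordinate. To handle it, I will instead integrate out the variable $X_v$, where $v$ is the maximal time index, and argue by cases. Let $u^\ast$ be the largest index appearing exactly once. Then $X_{u^\ast}$ appears as a first coordinate only in that factor. It can appear as a lagged coordinate only through an index $u^\ast+m$, and that index would have to appear twice. If this happens, integrate out $X_{u^\ast+m}$ instead, using the first-coordinate centring of both factors containing it. Any factor containing $X_{u^\ast+m}$ in its second coordinate would have index $u^\ast+2m$, which again is either absent or appears twice, so the argument iterates upward.

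The process terminates because indices are bounded, and at that top stage a factor is centred against a fresh independent variable. This gives zero. The remaining complication is when the same index appears twice in one product, which produces $\Phi_{\ell_1}\Phi_{\ell_2}$ of the same $Z$ and is not centred. I therefore expect the main obstacle to be organising this case analysis cleanly, possibly by choosing the variable to integrate out as the $X$ with the largest time subscript among those appearing in an unpaired factor.
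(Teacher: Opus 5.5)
Your centring fact and parts (a) and (b) are correct. The paper gives no argument of its own here and simply defers to Lemma~B.3 of \cite{jiang2024testing}.

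Part (c) has a genuine gap. Your search for a variable to integrate out always moves \emph{upward} in time. The key step, ``integrate out $X_{u^\ast+m}$ instead, using the first-coordinate centring of both factors containing it'', is false. When a variable enters two factors, centring each factor separately says nothing about their product. For example, $\int\Phi_{\ell_1}(x,y)\Phi_{\ell_2}(x,y)\,\mathrm{d}P(x)$ need not vanish; for $\ell_1=\ell_2$ its $P$-average over $y$ equals $1$.

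A concrete failure is $i_1=a$, $i_2=a+m$, $j_1=j_2=a+2m$. The factors are $\Phi_{\ell_1}(X_a,X_{a-m})$, $\Phi_{\ell_2}(X_{a+m},X_a)$ and $\Phi_{\ell_1}(X_{a+2m},X_{a+m})\Phi_{\ell_2}(X_{a+2m},X_{a+m})$. So $X_{a+2m}$ and $X_a$ each occur in two factors, and $X_{a+m}$ occurs in three. Every choice you propose lands on a shared variable: $u^\ast=a+m$, the upward iteration to $X_{a+2m}$, and your alternative (largest subscript in an unpaired factor, again $X_{a+m}$). The only variable occurring in a single factor is $X_{a-m}$, at the bottom.

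The missing idea is to look at the lower end as well. Because $i_1<j_1$ and $i_2<j_2$, the smallest index is $\min(i_1,i_2)$ and the largest is $\max(j_1,j_2)$. There are two cases.
\begin{itemize}
\item If $i_1\neq i_2$, say $i_1<i_2$, every other factor involves only times $\geq\min(i_2,j_1)-m>i_1-m$. So $X_{i_1-m}$ occurs only as the second coordinate of $\Phi_{\ell_1}(Z_{i_1,m})$. Condition on everything else and use second-coordinate centring.
\item If $i_1=i_2$, then $j_1\neq j_2$, say $j_1<j_2$. Then $X_{j_2}$ occurs only as the first coordinate of $\Phi_{\ell_2}(Z_{j_2,m})$, since it could otherwise enter only through the absent index $j_2+m$. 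Use first-coordinate centring.
\end{itemize}
These two cases cover every configuration with $\{i_1,j_1\}\neq\{i_2,j_2\}$.

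To make the conditioning legitimate, you should also record that the four-fold product is integrable. This follows from $\lambda_\ell^4\,\mathbb{E}\Phi_\ell^4(Z)\leq\mathbb{E}\mathcal{K}^4(Z,Z')<\infty$ under Assumption~\ref{assumption2_H0}, together with H\"older's inequality.
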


\begin{proof}[Proof of Lemma~\ref{lemma_mds_proof}.]
    This result is stated as Lemma~B.3 in the Supplement of \cite{jiang2024testing}.
\end{proof}

\begin{lemma}\label{lemma4_E}
    Under $H_0$, $P \in \mathcal{M}_k^2(\mathcal{X}) $ and $P \in \mathcal{M}_l^2(\mathcal{X}) $, then $\mathbb{E} V_{T,m} =O\left((T-m)^{-2}\right)$.
\end{lemma}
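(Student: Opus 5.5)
Under $H_0$ the marginal population value is $V_m=0$, so the claim amounts to showing that the bias of $V_{T,m}$ induced by the overlap of the lagged pairs is of order $(T-m)^{-2}$. I will use the U-statistic representation \eqref{VTm_Ustat}:
\[
\mathbb{E}V_{T,m}=\binom{T-m}{4}^{-1}\sum_{m+1\le i<j<q<r\le T}\mathbb{E}\,h(Z_{i,m},Z_{j,m},Z_{q,m},Z_{r,m}).
\]
The quadruples $(i,j,q,r)$ fall into two classes.

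\emph{Good quadruples.} Here all eight underlying indices $i,i-m,j,j-m,q,q-m,r,r-m$ are distinct. Under $H_0$ the four pairs $Z_{i,m},\dots,Z_{r,m}$ are then i.i.d. with the law of $Z_{t,m}$, whose two coordinates are independent copies of $X_t$. Hence $\mathbb{E}h=V_m$, and $V_m=\operatorname{HSIC}(X_t,X_{t-m})=0$ because $X_t$ and $X_{t-m}$ are independent.

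\emph{Bad quadruples.} Here at least one coincidence $a=b-m$ occurs for two of the indices $a<b$ in $\{i,j,q,r\}$. Each such coincidence fixes one index given another, so the number of bad quadruples is $O((T-m)^3)$. Quadruples with two or more coincidences number only $O((T-m)^2)$.

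\emph{Key step: exactly one coincidence.} I claim that $\mathbb{E}h=0$ for quadruples with exactly one coincidence. Expand $h$ through \eqref{h_kernel} into terms of the form $k(x_{i_1},x_{i_2})\,l(y_{i_3},y_{i_4})$ and the related terms. Centering is free: $h$ equals the same expression with $k,l$ replaced by $d_k,d_l$, as in the $h_4$ formula of Lemma~\ref{lemma_Hoeffding}, because the constant-shift contributions cancel in the combination $l(y_{i_3},y_{i_4})+l(y_{i_1},y_{i_2})-2l(y_{i_1},y_{i_3})$.

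Suppose the single coincidence is $x_a=y_b$ (that is, $X_a=X_{b-m}$), and all other underlying variables are i.i.d. Every summand has the form $d_k(x_\cdot,x_\cdot)\,d_l(y_\cdot,y_\cdot)$. In each summand some variable appears in only one centred factor and is independent of everything else. Integrating that variable out gives zero, since $\mathbb{E}_{X'}d_k(x,X')=0$. The reason such a variable always exists is that a product $d_k(x_{i_1},x_{i_2})d_l(y_{i_3},y_{i_4})$ involves two $x$-variables and two $y$-variables. With only one identification between them, at least one of $x_{i_1},x_{i_2}$ is not linked to any $y$. I will check this directly for each of the three term types.

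\emph{Remaining terms and conclusion.} The nonzero contributions therefore come only from quadruples with at least two coincidences, of which there are $O((T-m)^2)$. For each of these, $|\mathbb{E}h|\le C$ uniformly. This follows from Lemma~\ref{lemma1_CSinequality} and H\"older's inequality, giving $\mathbb{E}|k_{i_1i_2}l_{i_3i_4}|\le(\mathbb{E}k_{i_1i_2}^2\,\mathbb{E}l_{i_3i_4}^2)^{1/2}\le C$ under $P\in\mathcal{M}_k^2,\mathcal{M}_l^2$, exactly as in the proof of Theorem~\ref{thm1_VTm_consis}. Dividing by $\binom{T-m}{4}\asymp (T-m)^4$ yields $\mathbb{E}V_{T,m}=O((T-m)^{-2})$.

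The main obstacle is the single-coincidence case. It needs careful bookkeeping over all $4!$ permutations, including the cases where the linked $x$ and $y$ fall in the same factor pair, such as $d_k(x_a,x_c)\,d_l(y_b,y_a)$. Here I plan to exploit that $x_c$ (respectively the remaining free $y$) can always be integrated out. If this fails for some term, the fallback is to bound the single-coincidence contribution by a direct moment computation. That fallback would give only $O((T-m)^{-1})$, so making the vanishing argument work is essential.
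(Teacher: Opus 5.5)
Your proof is correct, but it is not the paper's route. The paper gives no argument of its own for this lemma. It attributes the bound to general properties of U-statistics of $m$-dependent sequences \cite{janson2023asymptotic}, via the corresponding Lemma~B.4 for auto distance covariance in the supplement of \cite{jiang2024testing}. You prove it directly from the overlap structure of the lagged pairs. Under $H_0$, $Z_{a,m}$ and $Z_{b,m}$ ($a<b$) share a variable only when $b-a=m$, and then only through $x_a=y_b$. Quadruples with no link contribute $V_m=0$, and quadruples with exactly one link contribute exactly zero. Only the $O((T-m)^2)$ quadruples with at least two links survive, each with uniformly bounded expectation.

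Your single-link argument already works uniformly, so the term-by-term check and the fallback you worry about are unnecessary. Take any summand $d_k(x_u,x_v)\,d_l(y_s,y_w)$ with $u\neq v$. The lone identification involves at most one of $x_u,x_v$, so the other is independent of the remaining three variables. Then $\mathbb{E}_{X'}d_k(x,X')=0$ kills the term, with Fubini justified by finiteness of $\mathbb{E}d_k^2$ and $\mathbb{E}d_l^2$ at independent arguments. Your example $d_k(x_a,x_c)\,d_l(y_b,y_a)$ is covered because $x_c$ is free.

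What your route buys is a self-contained explanation of why the bias is $O((T-m)^{-2})$ rather than $O((T-m)^{-1})$: first-order degeneracy of the double-centred kernel absorbs every single overlap. It also shows that the transfer from distance covariance to general kernels needs only three ingredients: the centring invariance of $h$, the identity $\mathbb{E}_{X'}d_k(x,X')=0$, and second moments of $k,l$ at independent arguments. Since $\mathbb{E}k^2(X,X')\le\{\mathbb{E}k(X,X)\}^2$ for independent $X,X'$ by Lemma~\ref{lemma1_CSinequality}, even $\mathcal{M}^1$ would suffice. The citation buys brevity and places the lemma in the same $m$-dependent U-statistic framework that the paper uses for the variance bounds in Lemma~\ref{lemma5_Var}.

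One small correction to your justification of centring invariance. Constants cancel inside the bracket $l(y_{i_3},y_{i_4})+l(y_{i_1},y_{i_2})-2l(y_{i_1},y_{i_3})$, but that is not the whole story. The one-argument shifts $\psi(x)+\psi(x')$ in $k$ and $\phi(y)+\phi(y')$ in $l$ cancel only after summing over all $4!$ permutations. The identity itself is exactly the $h_4$ formula of Lemma~\ref{lemma_Hoeffding} (where $h_4=h$), so citing that is enough.
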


\begin{lemma}\label{lemma5_Var}
    Under $H_0$, suppose that Assumption~4 holds. Then, for $c=3,4$, 
    \begin{equation*}
\operatorname{Var}\left(U_{T,m}\left(h^{(c)}\right)\right) \leq C(T-m)^{-3},
\end{equation*}
for some uniform constant $C > 0$.
\end{lemma}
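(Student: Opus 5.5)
We need to bound $\operatorname{Var}(U_{T,m}(h^{(c)}))$ for $c=3,4$ under $H_0$, where the sample $\{Z_{t,m}\}_{t=m+1}^T$ is $m$-dependent but not independent. Write $n=T-m$. The plan is to expand the variance as a double sum over index quadruples,
\[
\operatorname{Var}\bigl(U_{T,m}(h^{(c)})\bigr)=\binom{n}{c}^{-2}\sum_{I}\sum_{J}\operatorname{Cov}\bigl(h^{(c)}(Z_I),h^{(c)}(Z_J)\bigr),
\]
with $I,J$ ranging over increasing $c$-tuples in $\{m+1,\ldots,T\}$. We then classify each pair $(I,J)$ by how the indices interact. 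Since $\mathbb{E}h^{(c)}(Z_I)=0$ whenever the $c$ arguments are mutually independent, each covariance reduces to the expectation $\mathbb{E}\{h^{(c)}(Z_I)h^{(c)}(Z_J)\}$.

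\textbf{Step 1 (uniform moment bound).} Show that $\sup_{I,J}\mathbb{E}|h^{(c)}(Z_I)h^{(c)}(Z_J)|\le C$. By Cauchy--Schwarz it suffices to bound $\mathbb{E}\{h^{(c)}(Z_I)\}^2$ uniformly. Since $h^{(c)}$ is a finite linear combination of products $k(\cdot,\cdot)l(\cdot,\cdot)$ and their conditional expectations, the chain used in the proof of Theorem~1 applies: Hölder's inequality together with Lemma~\ref{lemma1_CSinequality} bounds each term by products of $\mathbb{E}k^2(X,X)$ and $\mathbb{E}l^2(X,X)$, up to fourth kernel moments. This holds whatever the dependence among the arguments, because after applying Lemma~\ref{lemma1_CSinequality} only marginal moments appear. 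Assumption~4 makes these moments finite. Only fourth moments $k^4$ arise because we take a square of a product of two kernels.

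\textbf{Step 2 (vanishing terms via degeneracy).} Under $H_0$, $Z_{i,m}$ and $Z_{j,m}$ are independent whenever $|i-j|\ne 0,m$. Call an index $i\in I\cup J$ \emph{isolated} if no other index of $I\cup J$ lies within distance $m$ of it, i.e. $Z_i$ is independent of all the other arguments. If any index is isolated, it appears in only one of $h^{(c)}(Z_I)$ or $h^{(c)}(Z_J)$, say $h^{(c)}(Z_I)$. Conditioning on the other variables and using the complete degeneracy $\mathbb{E}\{h^{(c)}(z_1,\ldots,Z,\ldots,z_c)\}=0$, which follows from the Hoeffding construction in Lemma~\ref{lemma_Hoeffding}, shows the expectation is zero. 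This requires that the isolated $Z_i$ be independent of the joint vector of the other arguments, and not just pairwise independent. Under $H_0$ this holds because $Z_i$ is a function of $(X_i,X_{i-m})$ only, and those $X$'s appear in no other argument once $i$ is at distance different from $0,m$ from all other indices.

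\textbf{Step 3 (counting).} Count the pairs $(I,J)$ of $c$-subsets in which every index of $I\cup J$ has a partner in $I\cup J$ at distance $0$ or $m$. The union has $|I\cup J|\le 2c$ indices, and they must group into clusters of size at least $2$ under the relation ``distance $0$ or $m$''. There are therefore at most $c$ free locations, each cluster contributing $O(1)$ choices for its remaining members, which gives $O(n^{c})$ nonzero pairs. Combined with Step 1,
\[
\operatorname{Var}\bigl(U_{T,m}(h^{(c)})\bigr)\le C\,n^{c}\binom{n}{c}^{-2}=O(n^{-c})\le Cn^{-3}
\]
for $c=3,4$. The constant $C$ depends only on $m$, $c$ and the kernel moments, which gives the required uniformity.

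The main obstacle is Step 2: verifying the conditional-independence form of degeneracy for overlapping lagged pairs. An index at distance exactly $m$ from another shares an $X$, and a careful check is needed so that clusters are defined through shared underlying $X_s$ rather than adjacency of $t$-indices. The count in Step 3 should then be done on the graph whose edges are shared $X$'s, and each connected component of that graph must contain at least two $Z$-arguments. The counting is the same as in Lemma~B.3 of \citet{jiang2024testing}, and I would follow that argument.
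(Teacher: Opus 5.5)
Your argument is correct, but it takes a different route from the paper. The paper gives no argument of its own for this lemma: it appeals to general properties of U-statistics of $m$-dependent samples \citep{janson2023asymptotic} and to Lemma~B.5 in the supplement of \citet{jiang2024testing}.

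You instead prove the bound directly, in three moves:
\begin{itemize}
\item expand the second moment;
\item kill every pair $(I,J)$ containing an index whose $Z_{i,m}$ shares no $X_s$ with the other arguments, using complete degeneracy of $h^{(c)}$ with respect to $P\otimes P$ and the joint independence available under $H_0$;
\item count the surviving pairs, which number $O(n^{c})$ because every cluster has at least two members and bounded diameter.
\end{itemize}
The citation route is shorter and covers arbitrary $m$-dependent sequences. Yours is self-contained and uses the exact dependence structure under $H_0$. It also gives the sharper rate $O(n^{-c})$, so $O(n^{-4})$ for $c=4$. Finally, it bounds $\mathbb{E}\{U_{T,m}(h^{(c)})\}^2$ directly, which is the quantity actually used in the proof of Theorem~2.

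Two small repairs are needed.
\begin{itemize}
\item Your claim that each covariance reduces to $\mathbb{E}\{h^{(c)}(Z_I)h^{(c)}(Z_J)\}$ fails when $I$ itself contains two indices at lag $m$, since then $\mathbb{E}h^{(c)}(Z_I)$ need not vanish. This is harmless: use $\operatorname{Var}\le\mathbb{E}U^2$, or note that whenever an isolated index exists, the mean of the tuple containing it is also zero.
\item The concern in your last paragraph is unnecessary. Isolation defined either by ``no other index within distance $m$'' or by ``no shared $X_s$'' works, since both imply joint independence under $H_0$ and both give clusters of bounded diameter.
\end{itemize}
Also, the relevant lemma in \citet{jiang2024testing} is B.5; B.3 is the martingale-difference lemma.
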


Lemmas~\ref{lemma4_E} and~\ref{lemma5_Var} are derived from properties of $U$-statistics for $m$-dependent samples under $H_0$, as established by \cite{janson2023asymptotic}. They correspond to Lemmas~B.4 and~B.5 in the Supplement of \cite{jiang2024testing}.

\begin{proof}[\bf Proof of Theorem~2.]
The proof proceeds in three steps: $(a)$ the leading term; $(b)$ approximation of $\mathcal{K}(z,z^\prime)$; and $(c)$ joint convergence.

$(a)$ Leading term.
Under $H_0$, we have $V_m=0$, $h^{(1)}(z)=0$, and $h^{(2)}\left(z, z^{\prime}\right)=h_2\left(z, z^{\prime}\right)=d_k\left(x, x^{\prime}\right) d_l\left(y, y^{\prime}\right) / 6 = \mathcal{K}(z, z^\prime)/6$.
Therefore, by Hoeffding decomposition in Lemma \ref{lemma_Hoeffding},
\begin{equation}\nonumber
V_{T,m}=\binom{T-m}{2}^{-1} \sum_{i=m+1}^T \sum_{j=i+1}^T \mathcal{K}\left(Z_{i,m}, Z_{j,m}\right)+\sum_{c=3}^4\binom{4}{c} U_{T,m}\left(h^{(c)}\right) .
\end{equation}
From Lemma \ref{lemma4_E} and Lemma \ref{lemma5_Var}, we know that $\mathbb{E}\{U_{T-m}\left(h^{(c)}\right)\}^2=O\left((T-m)^{-3}\right)$ for $c=3,4$, hence by Markov inequality, 
\begin{equation}\nonumber
V_{T,m}=\binom{T-m}{2}^{-1} \sum_{i=m+1}^T \sum_{j=i+1}^T \mathcal{K}\left(Z_{i,m}, Z_{j,m}\right)+o_p\left(T^{-1}\right):=\tilde{V}_{T,m}+o_p\left(T^{-1}\right) ,
\end{equation}
where
\begin{equation}\nonumber
\tilde{V}_{T,m} :=\binom{T-m}{2}^{-1} \sum_{i=m+1}^T \sum_{j=i+1}^T \mathcal{K}\left(Z_{i,m}, Z_{j,m}\right),
\end{equation}
denotes the leading term that determines the asymptotic distribution of $V_{T,m}$.

$(b)$ Approximation of $\mathcal{K}\left(z, z^{\prime}\right)$.
By Lemma \ref{lemma_mercer for psd}, let
\begin{equation}\nonumber
\mathcal{K}^{(L)}\left(z, z^{\prime}\right)=\sum_{\ell=1}^L \lambda_{\ell} \Phi_{\ell}(z) \Phi_{\ell}\left(z^{\prime}\right) \to \mathcal{K}\left(z, z^{\prime}\right), \quad \text { as } L \to \infty .
\end{equation}
Furthermore, let
\begin{align}
\tilde{V}_{T,m}^{(L)} & =\binom{T-m}{2}^{-1} \sum_{i=m+1}^T \sum_{j=i+1}^T \mathcal{K}^{(L)}\left(Z_{i,m}, Z_{j,m}\right) \nonumber \\
& =\binom{T-m}{2}^{-1} \sum_{i=m+1}^T \sum_{j=i+1}^T \sum_{\ell=1}^L \lambda_{\ell} \Phi_{\ell}\left(Z_{i,m}\right) \Phi_{\ell}\left(Z_{j,m}\right) .\label{VTm_trunction}
\end{align}

Then, we have 
\begin{align}
    & T^2 \mathbb{E}\{\tilde{V}_{T,m}-\tilde{V}_{T,m}^{(L)}\}^2 \nonumber \\
= & T^2\binom{T-m}{2}^{-2} \mathbb{E}\{ \sum_{i=m+1}^T \sum_{j=i+1}^T \sum_{\ell=L+1}^{\infty} \lambda_{\ell} \Phi_{\ell}\left(Z_{i,m}\right) \Phi_{\ell}\left(Z_{j,m}\right)\}^2 \nonumber\\
= & T^2\binom{T-m}{2}^{-2} \mathbb{E}\{\sum_{i=m+1}^T \sum_{j=i+1}^T \sum_{\ell_1, \ell_2=L+1}^{\infty} \lambda_{\ell_1} \lambda_{\ell_2} \Phi_{\ell_1}\left(Z_{i,m}\right) \Phi_{\ell_2}\left(Z_{i,m}\right) \Phi_{\ell_1}\left(Z_{j,m}\right) \Phi_{\ell_2}\left(Z_{j,m}\right) \} \nonumber\\
\leq & C \sum_{\ell_1, \ell_2=L+1}^{\infty} \lambda_{\ell_1} \lambda_{\ell_2} \to 0, \quad \text { as } L \to \infty,\label{trucation_0}
\end{align}
where the second equality holds by Lemma \ref{lemma_mds_proof} $(c)$, 
and the inequality holds by H\"older's inequality 
\begin{align}
&\mathbb{E}\{ \Phi_{\ell_1}\left(Z_{i,m}\right) \Phi_{\ell_2}\left(Z_{i,m}\right) \Phi_{\ell_1}\left(Z_{j,m}\right) \Phi_{\ell_2}\left(Z_{j,m}\right) \} \nonumber \\ 
    & \leq \left( \mathbb{E} \Phi_{\ell_1}^4\left(Z_{i,m}\right) \right)^{1/4}
   \left( \mathbb{E} \Phi_{\ell_2}^4\left(Z_{i,m}\right) \right)^{1/4}
    \left( \mathbb{E} \Phi_{\ell_1}^4\left(Z_{j,m}\right) \right)^{1/4}
    \left( \mathbb{E} \Phi_{\ell_2}^4\left(Z_{j,m}\right) \right)^{1/4} \nonumber\\
    & = \left( \mathbb{E} \Phi_{\ell_1}^4\left(Z\right) \right)^{1/2}
    \left( \mathbb{E} \Phi_{\ell_2}^4\left(Z\right) \right)^{1/2} < \infty ,\nonumber
\end{align}
which is implied by $\mathbb{E}\left(\mathcal{K}^2(Z, Z)\right)<\infty$.
The final convergence holds since $\sum_{\ell=1}^{\infty} \lambda_{\ell}=\mathbb{E} \mathcal{K}(Z, Z)<\infty$ and
$\sum_{\ell_1, \ell_2 = L+1}^{\infty} \lambda_{\ell_1} \lambda_{\ell_2}=\left(\sum_{\ell =L+1}^{\infty} \lambda_\ell \right)^2$.

$(c)$ Joint convergence of $T ( \tilde{V}_{T,1}^{(L)}, \tilde{V}_{T,2}^{(L)}, \ldots, \tilde{V}_{T,M}^{(L)})^\top$.
By (\ref{VTm_trunction}), we have 
\begin{equation}\label{VTm_L_expansion}
\begin{aligned}
    T \tilde{V}_{T,m}^{(L)}=&\frac{T}{T-m-1} \sum_{\ell=1}^L \lambda_{\ell}\left[ \{\frac{1}{\sqrt{T-m}} \sum_{i=m+1}^T \Phi_{\ell}\left(Z_{i,m}\right)\}^2 \right. \\ &\left. -\frac{1}{T-m} \sum_{i=m+1}^T \Phi_{\ell}^2\left(Z_{i,m}\right)\right].
\end{aligned}
\end{equation}

According to Lemma \ref{lemma_mds_proof} $(a)$, we know that for any fixed $M$ and $L$, 
\begin{equation*}
     \left\{ \left(\Phi_{\ell}(Z_{t,m})\right)_{\ell=1,\ldots,L;\,m=1,\ldots,M}
\right\}_{t=1}^T
\end{equation*}
forms a sequence of martingale difference vectors.
Hence, for any fixed $M$, $L$, and $1 \leq m \leq M $, $1 \leq \ell \leq L $, we have
\begin{equation}\nonumber
    \frac{1}{T-m} \sum_{i=m+1}^T \Phi_{\ell}^2\left(Z_{i,m}\right) \xrightarrow{\mathrm{p}} \mathbb{E} \Phi_{\ell}^2(Z)=1
\end{equation}
by the weak law of large numbers, and
\begin{equation}\label{joint_clt}
\left( \frac{1}{\sqrt{T-m}} \sum_{i=m+1}^T \Phi_{\ell}\left(Z_{i,m}\right) \right)_{\ell = 1, \ldots,L; m=1, \ldots,M} \xrightarrow{\mathrm{d}} 
\left( G_{\ell,m} \right)_{\ell = 1, \ldots,L; m=1, \ldots,M} 
\end{equation}
by the central limit theorem for martingale difference sequences and Cram\'e--Wold device.
Here, $\left( G_{\ell,m} \right)_{\ell = 1, \ldots, L;\, m = 1, \ldots, M}$ is a random vector of dimension $LM$, whose components are i.i.d. standard normal random variables, in view of Lemma~\ref{lemma_mds_proof} $(b)$.

The continuous mapping theorem implies that
\begin{equation}\nonumber
\begin{aligned}
T & \left( \tilde{V}_{T,1}^{(L)}, \tilde{V}_{T,2}^{(L)}, \ldots, \tilde{V}_{T,M}^{(L)} \right )^\top \xrightarrow{\mathrm{d}} \\& \left(\sum_{\ell=1}^L \lambda_{\ell}(G_{\ell,1}^2-1), \sum_{\ell=1}^L \lambda_{\ell}(G_{\ell,2}^2-1), \ldots,\sum_{\ell=1}^L \lambda_{\ell}(G_{\ell,M}^2-1) \right)^\top.
\end{aligned}
\end{equation}

By (\ref{trucation_0}), letting $L \to \infty$ yields the final result.  
\end{proof}

\begin{proof}[\bf Proof of Theorem~3.] 
Since $\sigma^2 > 0$, the conclusion follows by applying the central limit theorem for non-degenerate U-statistics of stationary $\beta$-mixing processes (Theorem~1 of \cite{yoshihara1976limiting}).
\end{proof}

\subsection{Proof of Theorem~4}

For notational simplicity, we omit the index $m$ from $a_{ij,m}$ and $b_{ij,m}$.
To describe the asymptotic properties of the bootstrap test statistics, let $\operatorname{Pr}^*$ and $\mathbb{E}^*$ denote the conditional probability and expectation given $\{X_t\}_{t=1}^T$, respectively. We write $o_p^*(1)$ and $O_p^*(1)$ for sequences of random variables that converge to zero and are bounded in probability, respectively, conditional on $\{X_t\}_{t=1}^T$.

\begin{proof}[\bf Proof of Theorem~4.]
$(a)$.
(i) Leading term by 
\begin{equation}\nonumber
  \tilde{V}_{T,m}^*=\frac{1}{(T-m) (T-m-3)} \sum_{m+1 \leq i \neq j \leq T} \mathcal{K}\left(Z_{i,m}, Z_{j,m} \right) w_{i,m} w_{j,m}.  
\end{equation}

We need to show that $ (T-m) V_{T,m}^* - (T-m) \tilde{V}_{T,m}^* \xrightarrow{\mathrm{d}^*} 0$ in probability.
By Chebyshev's inequality, it suffices to verify that
\begin{equation}\label{equation_Estar}
\mathbb{E}^*\left[\frac{1}{T-m-3} \sum_{m+1 \leq i \neq j \leq T}\left\{a_{i j} b_{i j}-\mathcal{K}\left(Z_{i,m}, Z_{j,m} \right)\right\} w_{i,m} w_{j,m}\right]=0,
\end{equation}
and
\begin{align}
    & \operatorname{Var}^*\left[\frac{1}{T-m-3} \sum_{m+1 \leq i \neq j \leq T}\left\{a_{i j} b_{i j}-\mathcal{K}\left(Z_{i,m}, Z_{j,m} \right)\right\} w_{i,m} w_{j,m}\right] \nonumber\\
&=  \frac{1}{(T-m-3)^2} \sum_{m+1 \leq i \neq j \leq T}\left\{a_{i j} b_{i j}-\mathcal{K}\left(Z_{i,m}, Z_{j,m} \right)\right\}^2 \xrightarrow{\mathrm{p}} 0,\label{equation_Varstar}
\end{align}
where the equality follows from the independence and unit variance of the bootstrap weights.

Recall that $\mathcal{K}\left(Z_{i,m}, Z_{j,m} \right)=d_k \left(X_i, X_j\right) d_l \left(X_{i-m}, X_{j-m}\right)$. 
Equation (\ref{equation_Estar}) holds immediately. We now proceed to establish (\ref{equation_Varstar}).

By H\"older's inequality, we have
\begin{equation*}
\begin{aligned}
& \quad \sum_{m+1 \leq i \neq j \leq T}\left\{a_{i j} b_{i j}-\mathcal{K}\left(Z_{i,m}, Z_{j,m}\right)\right\}^2 \\
& \leq 2 \sum_{m+1 \leq i \neq j \leq T}\{a_{i j}-d_k\left(X_i, X_j\right)\}^2 b_{i j}^2+2 \sum_{m+1 \leq i \neq j \leq T}\{b_{i j}-d_l \left(X_{i-m}, X_{j-m}\right)\}^2 d_k^2\left(X_i, X_j\right) \\
& \leq 4 \sum_{m+1 \leq i \neq j \leq T}\{a_{i j}-d_k\left(X_i, X_j\right)\}^2\{b_{i j}-d_l\left(X_{i-m}, X_{j-m}\right)\}^2 \\
& \quad+4 \sum_{m+1 \leq i \neq j \leq T}\{a_{i j}-d_k \left(X_i, X_j\right)\}^2 d_l^2\left(X_{i-m}, X_{j-m}\right) \\
& \quad+2 \sum_{m+1 \leq i \neq j \leq T}\{b_{i j}-d_l\left(X_{i-m}, X_{j-m}\right)\}^2 d_k^2\left(X_i, X_j\right) \\
& \leq 4[\sum_{m+1 \leq i \neq j \leq T}\{a_{i j}-d_k \left(X_i, X_j\right)\}^4]^{1 / 2}[\sum_{m+1 \leq i \neq j \leq T}\left[b_{i j}-d_l \left(X_{i-m}, X_{j-m}\right)\right]^4]^{1 / 2} \\
& \quad+4[\sum_{m+1 \leq i \neq j \leq T}\left[a_{i j}-d_k \left(X_i, X_j\right)\right]^4]^{1 / 2}[\sum_{m+1 \leq i \neq j \leq T} d_l^4\left(X_{i-m}, X_{j-m}\right)]^{1 / 2}   \\
& \quad+2[\sum_{m+1 \leq i \neq j \leq T}\left[b_{i j}-d_l \left(X_{i-m}, X_{j-m}\right)\right]^4]^{1 / 2}
[ \sum_{m+1 \leq i \neq j \leq T} d_k^4\left(X_i, X_j\right) ]^{1 / 2} .
\end{aligned}
\end{equation*}

Under Assumption~4, it is clear that 
\begin{equation}\nonumber
    \frac{1}{(T-m-3)^2} \sum_{m+1 \leq i \neq j \leq T} d_k^4\left(X_i, X_j\right)
\end{equation}
and 
\begin{equation}\nonumber
    \frac{1}{(T-m-3)^2} \sum_{m+1 \leq i \neq j \leq T} d_l^4\left(X_{i-m}, X_{j-m}\right)
\end{equation}
are bounded in probability. So we only need to prove that 
\begin{equation}\label{equation_inboot1_null}
\frac{1}{(T-m-3)^2} \sum_{m+1 \leq i \neq j \leq T}\{ a_{i j}-d_k \left(X_i, X_j\right)\}^4 \xrightarrow{\mathrm{p}} 0,
\end{equation}
as
\begin{equation}\nonumber
\frac{1}{(T-m-3)^2} \sum_{m+1 \leq i \neq j \leq T}\left[b_{i j}-d_l \left(X_{i-m}, X_{j-m}\right)\right]^4 \xrightarrow{\mathrm{p}} 0, \; \text{is similar}.
\end{equation}

By the definitions of $a_{ij}$ and $d_k(X_i,X_j)$, and the inequality that $\mid a+b\mid^p \leq 2^{p-1} ( \mid a\mid^p + \mid b\mid^p ) $, for $a, b \in \mathbb{R}$ and $p>1$, we have
\begin{align}
    \mathbb{E}\{a_{i j}-d_k \left(X_i, X_j\right)\}^4
\leq& C \left[ \mathbb{E}\{\frac{\sum_{t=m+1}^T k\left(X_t, X_i\right)}{T-m-2}- \mathbb{E} \left( k \left( X_t, X_i  \right) \mid X_i \right) \}^4 \right. \nonumber\\
&+ \left. \mathbb{E}\{\frac{\sum_{m+1 \leq t \neq t^{\prime} \leq T} k\left(X_t, X_{t^{\prime}}\right)}{(T-m-2)^2}- \mathbb{E}\left( k\left( X_i, X_j \right) \right) \}^4 \right] .\nonumber
\end{align}

For the first term, under $H_0$, we note that
\begin{align}
    &\mathbb{E} \prod_{s=1}^4\{k \left(X_{i_s}, X_i\right)- \mathbb{E} \left( k \left( X_t, X_i  \right) \mid X_i \right) \} \nonumber\\
    =&\mathbb{E}\left[\mathbb{E}\left[\prod_{s=1}^4\{k\left(X_{i_s}, X_i\right)- \mathbb{E} \left( k \left( X_t, X_i  \right) \mid X_i \right) \} \mid X_i\right]\right]=0 \nonumber
\end{align}
for distinct 4-tuples $\left(i_1, i_2, i_3, i_4\right)$.
It follows that the first term is at most of order $O\left(T^{-1}\right)$.
Similarly, the second term is at most of order $O\left(T^{-1}\right)$. Therefore, 
$\mathbb{E}\{ a_{i j}-d_k \left(X_i, X_j\right)\}^4 \to 0$, and by Markov's inequality, \eqref{equation_inboot1_null} holds. Thus, \eqref{equation_Varstar} holds.

(ii) Approximation for $\mathcal{K}(z, z^\prime)$.

Similar to the proof of Theorem~2, let
\begin{equation}\nonumber
\mathcal{K}^{(L)}\left(z, z^{\prime}\right)=\sum_{\ell=1}^L \lambda_{\ell} \Phi_{\ell}(z) \Phi_{\ell}\left(z^{\prime}\right) \to \mathcal{K}\left(z, z^{\prime}\right), \quad \text { as } L \to \infty
\end{equation}
Furthermore, denote
\(
\tilde{V}_{T,m}^{(L)*}=\binom{T-m}{2}^{-1} \sum_{i=m+1}^T \sum_{j=i+1}^T \mathcal{K}^{(L)}\left(Z_{i,m}, Z_{j,m}\right)w_{i, m} w_{j, m}
\).

We want to show that as $L \to \infty$,
\begin{equation}\label{proof_boot_H0_appro}
\mathbb{E}^*\left(T\left|\tilde{V}_{T,m}^{(L) *}-\tilde{V}_{T,m}^*\right|\right)^2 \xrightarrow{\mathrm{p}} 0.
\end{equation}
Recall that $\mathcal{K} \left(z, z^{\prime}\right)=\sum_{\ell=1}^\infty \lambda_{\ell} \Phi_{\ell}(z) \Phi_{\ell}\left(z^{\prime}\right) $, we have
{\small
\begin{align}
& \mathbb{E}^*\left[\frac{T}{(T-m)(T-m-3)} \sum_{m+1 \leq i\neq j \leq T}\left\{\mathcal{K}^{(L)}\left(Z_{i,m}, Z_{j,m} \right)-\mathcal{K}\left(Z_{i,m}, Z_{j,m} \right)\right\} w_{i,m} w_{j,m}\right]^2 \nonumber \\
= & \frac{T^2}{(T-m)^2(T-m-3)^2} \sum_{m+1 \leq i\neq j \leq T}\{\sum_{\ell=L+1}^{\infty} \lambda_{\ell} \Phi_{\ell}\left(Z_{i,m}\right) \Phi_{\ell}\left(Z_{j,m}\right)\}^2 \nonumber \\
= & \frac{T^2}{(T-m)^2(T-m-3)^2} \sum_{\substack{m+1 \leq i\neq j \leq T \nonumber \\
\left| j-i \right| >m}}\{\sum_{\ell=L+1}^{\infty} \lambda_{\ell} \Phi_{\ell}\left(Z_{i,m} \right) \Phi_{\ell}\left(Z_{j,m} \right)\}^2 \nonumber \\
& +\frac{T^2}{(T-m)^2(T-m-3)^2}  \sum_{\substack{m+1 \leq i \neq j \leq T \nonumber \\
\left| j-i \right| \leq m}}\{\sum_{\ell=L+1}^{\infty} \lambda_{\ell} \Phi_{\ell}\left(Z_{i,m} \right) \Phi_{\ell}\left(Z_{j,m} \right)\}^2 \nonumber \\
\xrightarrow{\mathrm{p}} & \mathbb{E}\{\sum_{\ell=L+1}^{\infty} \lambda_{\ell} \Phi_{\ell}(Z) \Phi_{\ell}\left(Z^{\prime}\right)\}^2 =\sum_{\ell=L+1}^{\infty} \lambda_{\ell}^2 \nonumber
\end{align}}
where there are at most $2m(T-m)$ terms in the summation of $\sum_{\substack{m+1 \leq i \neq j \leq T \\
\left| j-i \right| \leq m}}$,
and $Z_{i,m}$ and $Z_{j,m}$ are independent for $\left| j-i \right| > m$.
By Assumption~4, 
$ \sum_{\ell=1}^\infty \lambda_\ell^2 = \mathbb{E} \mathcal{K}^2(Z, Z^\prime) < \infty $. 
Hence, \eqref{proof_boot_H0_appro} follows by $\sum_{\ell=L+1}^{\infty} \lambda_{\ell}^2 \to 0$, as $L \to \infty$.

(iii) Joint convergence of
\begin{equation}\nonumber
   T  \left( V_{T,1}^{(L)*}, V_{T,2}^{(L)*}, \ldots, V_{T,M}^{(L)*} \right)^\top  \xrightarrow{\mathrm{d}^*} \left(\xi_1, \xi_2, \ldots, \xi_M \right)^\top , \quad \text{in probability}.
\end{equation}

Similar to \eqref{VTm_L_expansion}, we have
\begin{align}
    T \tilde{V}_{T,m}^{(L)*}= & \frac{T}{T-m-3} \sum_{\ell=1}^L \lambda_{\ell}\left[\{\frac{1}{\sqrt{T-m}} \sum_{i=m+1}^T w_{i,m} \Phi_{\ell}\left(Z_{i,m}\right)\}^2\right. \nonumber \\
& \left.-\frac{1}{T-m} \sum_{i=m+1}^T w_{i,m}^2  \Phi_{\ell}^2\left(Z_{i,m}\right)\right] .\nonumber
\end{align}
By continuous mapping theorem, it suffices to show that, in probability,
\begin{align}
    \frac{1}{T-m} \sum_{i=m+1}^T w_{i,m}^2 \Phi_{\ell}^2\left(Z_{i,m} \right) &\xrightarrow{\mathrm{p}^*} 1, \nonumber \\
    \left( \frac{1}{\sqrt{T-m}} \sum_{i=m+1}^T w_{i,m} \Phi_{\ell}\left(Z_{i,m} \right)\right)_{\ell=1, \ldots, L ; m=1, \ldots, M} &\xrightarrow{\mathrm{d}^*}\left(G_{\ell,m}\right)_{\ell=1, \ldots, L ; m=1, \ldots, M}.\nonumber
\end{align}

Note that 
$\mathbb{E}^*\{\frac{1}{T-m} \sum_{i=m+1}^T w_{i,m}^2 \Phi_{\ell}^2\left(Z_{i,m} \right)\}=\frac{1}{T-m} \sum_{i=m+1}^T \Phi_{\ell}^2\left(Z_{i,m} \right) \xrightarrow{\mathrm{p}} 1$,
and
\begin{equation}\nonumber
    \operatorname{Var}^*\{ \frac{1}{T-m} \sum_{i=m+1}^T w_{i,m}^2 \Phi_{\ell}^2\left(Z_{i,m} \right)\} =\frac{1}{(T-m)^2} \sum_{i=m+1}^T \Phi_{\ell}^4\left(Z_{i,m} \right) \operatorname{Var}\left(w_{i,m}^2\right) \xrightarrow{\mathrm{p}} 0
\end{equation}
by weak law of large numbers for $m$-dependent sequences and Slutsky's theorem. Then, by Chebyshev's inequality,
\begin{equation}\nonumber
\frac{1}{T-m} \sum_{i=m+1}^T w_{i,m}^2 \Phi_{\ell}^2\left(Z_{i,m} \right) \xrightarrow{\mathrm{p}^*} 1, \quad \text { in probability. }
\end{equation}

Next, by the weak law of large numbers for $m$-dependent sequences,  for any fixed $\ell$ and $m$,
$\sum_{i=m+1}^T \{(T-m)^{-1 / 2} \Phi_{\ell}\left(Z_{i,m} \right)\}^2 \xrightarrow{\mathrm{p}} 1$. Then, 
\begin{equation}\nonumber
\frac{\sum_{i=m+1}^T \{(T-m)^{-1 / 2} \Phi_{\ell}\left(Z_{i,m} \right)\}^4 \mathbb{E}\left| w_{i,m} \right|^4 }{\left[\sum_{i=m+1}^T \{(T-m)^{-1 / 2} \Phi_{\ell}\left(Z_{i,m} \right)\}^2\right]^2} \xrightarrow{\mathrm{p}} 0,
\end{equation}
which implies that Lyapunov central limit theorem holds in probability.

Moreover, for any fixed $\ell_1$, $\ell_2$ and $m_1 \leq m_2$, using the independence between $\mathbf{w}(m_1)$ and $\mathbf{w}(m_2)$ for $m_1 \neq m_2$, we have
\begin{align}
    & \operatorname{Cov}^*\left[\frac{1}{\sqrt{T-m_1}} \sum_{i=m_1+1}^T w_i\left(m_1\right) \Phi_{\ell_1}\left(Z_i(m_1) \right), \frac{1}{\sqrt{T-m_2}} \sum_{i=m_2+1}^T w_i\left(m_2\right) \Phi_{\ell_2}\left(Z_i(m_2) \right)\right] \nonumber\\
= & \frac{1}{T-m_2} \sum_{i=m_2+1}^T \Phi_{\ell_1}\left(Z_i(m_1) \right) \Phi_{\ell_2}\left(Z_i(m_2) \right) \mathbf{I}\left(m_1=m_2\right) \xrightarrow{\mathrm{p}} \mathbf{I}\left(\ell_1=\ell_2, m_1=m_2\right),\nonumber
\end{align}
in view of Lemma \ref{lemma_mds_proof} $(b)$.

Therefore, by Cram\'er--Wold device, we obtain the joint convergence. Finally, the result follows from \eqref{equation_Varstar} and \eqref{proof_boot_H0_appro}.

$(b)$.
It is straightforward to verify that $\mathbb{E}^*\left( T V_{T,m}^*\right)=0$.
By Chebyshev's inequality, 
$ \operatorname{Pr}^*\left( \left| TV_{T,m}^* \right| >C \right) < \frac{\mathbb{E}^*( TV_{T,m}^* )^2 }{C^2} $.
Hence, we need to show that $\mathbb{E}^*( TV_{T,m}^* )^2 = O_p(1)$, and by Markov's inequality, it remains to show that $\mathbb{E}\left\{ \mathbb{E}^*( TV_{T,m}^* )^2 \right\} = O(1)$. By direct calculation, we have
\begin{align}
    \mathbb{E}\left\{ \mathbb{E}^*( TV_{T,m}^* )^2 \right\} 
    &=\mathbb{E}\left[ \mathbb{E}^*\{ \frac{T^2}{(T-m)^2 (T-m-3)^2}\left( \sum_{m+1 \leq i \neq j \leq T}w_{i,m} a_{ij} b_{ij} w_{j,m} \right)^2 \} \right] \nonumber\\
       & = \frac{T^2}{(T-m)^2 (T-m-3)^2} \sum_{m+1 \leq i \neq j \leq T} \mathbb{E} \left( a_{ij}^2 b_{ij}^2 \right) \nonumber\\
       & \leq \frac{T^2}{(T-m)^2 (T-m-3)^2} \sum_{m+1 \leq i \neq j \leq T} \{\mathbb{E} \left( a_{ij}^4 \right)\}^{1/2} \{\mathbb{E} \left( b_{ij}^4 \right)\}^{1/2} < \infty,\nonumber
\end{align}
where the second-to-last inequality follows from H\"older's inequality and the last inequality holds by Assumption~4. Hence we have $\mathbb{E}\left\{ \mathbb{E}^*( TV_{T,m}^* )^2 \right\} = O(1)$.
Similarly, we can prove that $T P_{T,M}^* = O_p^*(1) $ in probability.
\end{proof}

\subsection{Proof of Theorem~5--7}

Now, we study the asymptotic behaviour of $\widehat{V}_{T,m}$ in Section~6.
In Section~6, $V_{T,m}$ denotes the sample AutoHSIC between $\eta_t$ and $\eta_{t-m}$, where $X_t$ in Sections~2 and~3 is replaced by $\eta_t$.
Also, we denote $\eta_{t,m} := (\eta_t, \eta_{t-m})$.
Before proving Theorem~5, we introduce some notation and lemmas, following \cite{wang2021new}.
Let $k_{ij}=k(\eta_i, \eta_j) $, $l_{qr} = l(\eta_{q-m}, \eta_{r-m}) $, and 
$$
\bar{k}_{i j}=\frac{\partial g_i\left(\theta_0\right)}{\partial \theta} k_x\left(\eta_i, \eta_j\right)+\frac{\partial g_j\left(\theta_0\right)}{\partial \theta} k_y\left(\eta_i, \eta_j\right) ,
$$
$$
\bar{l}_{q r}=\frac{\partial g_{q-m}\left(\theta_0\right)}{\partial \theta} l_x\left(\eta_{q-m}, \eta_{r-m}\right)+\frac{\partial g_{r-m}\left(\theta_0\right)}{\partial \theta} l_y\left(\eta_{q-m}, \eta_{r-m}\right),
$$
$$
\breve{k}_{i j}=  \left(\frac{\partial g_i\left(\theta_0\right)}{\partial \theta}, \frac{\partial g_j\left(\theta_0\right)}{\partial \theta}\right)\left(\begin{array}{cc}
k_{x x}\left(\eta_i, \eta_j\right) & k_{x y}\left(\eta_i, \eta_j\right) \\
k_{x y}\left(\eta_i, \eta_j\right) & k_{y y}\left(\eta_i, \eta_j\right)
\end{array}\right) 
 \left(\frac{\partial g_i\left(\theta_0\right)}{\partial \theta}, \frac{\partial g_j\left(\theta_0\right)}{\partial \theta}\right)^{\top} ,
$$
$$
\begin{aligned}
\breve{l}_{q r}= \left(\frac{\partial g_{q-m}\left(\theta_0\right)}{\partial \theta}, \frac{\partial g_{r-m}\left(\theta_0\right)}{\partial \theta}\right)
\left(\begin{array}{cc}
l_{x x}\left(\eta_{q-m}, \eta_{r-m}\right) & l_{x y}\left(\eta_{q-m}, \eta_{r-m}\right) \\
l_{x y}\left(\eta_{q-m}, \eta_{r-m}\right) & l_{y y}\left(\eta_{q-m}, \eta_{r-m}\right)
\end{array}\right) \\ \times
 \left(\frac{\partial g_{q-m}\left(\theta_0\right)}{\partial \theta}, \frac{\partial g_{r-m}\left(\theta_0\right)}{\partial \theta}\right)^{\top},
\end{aligned}
$$
for $i, j, q, r \in\{m+1, m+2, \ldots, T\}$.

With this notation, we define
\begin{equation}\label{V_Tm_ab}
\begin{aligned}
    V_{T,m}^{(a b)} =& \binom{T-m}{2}^{-1} \sum_{m+1 \leq i<j \leq T} k_{i j}^{(a b)} l_{i j}^{(a b)}+\binom{T-m}{4}^{-1} \sum_{m+1 \leq i<j<q<r \leq T} k_{i j}^{(a b)} l_{q r}^{(a b)} \\
 &-2\binom{T-m}{3}^{-1} \sum_{m+1 \leq i<j<q \leq T} k_{i j}^{(a b)} l_{i q}^{(a b)},
\end{aligned}
\end{equation}
for $a \in\{1,2\}$ and $b \in\{1, \ldots, a+1\}$, where 
$k_{i j}^{(11)}=\bar{k}_{i j}, l_{i j}^{(11)}=l_{i j}, k_{i j}^{(12)}=k_{i j}, l_{i j}^{(12)}=\bar{l}_{i j}$,
$k_{i j}^{(21)}=\breve{k}_{i j}, l_{i j}^{(21)}=l_{i j}, k_{i j}^{(22)}=k_{i j}, l_{i j}^{(22)}=\breve{l}_{i j}, k_{i j}^{(23)}=\bar{k}_{i j}$, and $l_{i j}^{(23)}=\bar{l}_{i j}^{\top}$.

Furthermore, they can be rewritten as U-statistics:
\begin{equation}\nonumber
V_{T,m}^{(ab)}=\binom{T-m}{4}^{-1} \sum_{m+1 \leq i<j<q<r \leq T} h^{(ab)} (\varsigma_{i,m}, \varsigma_{j,m}, \varsigma_{q,m}, \varsigma_{r,m} ) ,
\end{equation}
where
$ \varsigma_{t,m} =\left(\eta_t, \frac{\partial g_t\left(\theta_0\right)}{\partial \theta}, \eta_{t-m}, \frac{\partial g_{t-m}\left(\theta_0\right)}{\partial \theta}\right) \in \mathbb{R}^d \times \mathbb{R}^{p \times d} \times \mathbb{R}^d \times \mathbb{R}^{p \times d} $,
$ \varsigma_{t} =\left(\eta_t, \frac{\partial g_t\left(\theta_0\right)}{\partial \theta} \right) \in \mathbb{R}^d \times \mathbb{R}^{p \times d} $, and the kernel function $h^{(ab)}$ is
\begin{equation}\nonumber
    h^{(a b)}\left(\varsigma_{i,m}, \varsigma_{j,m}, \varsigma_{q,m}, \varsigma_{r,m}\right)=\frac{1}{4!} \sum_{(t, u, v, w)}^{(i, j, q, r)} k_{t u}^{(a b)} ( l_{t u}^{(a b)}+ l_{v w}^{(a b)}-2 l_{t v}^{(a b)} ).
\end{equation}

Lemma \ref{lemma_expansion} offers an important expansion of $\widehat{V}_{T,m}$.

\begin{lemma}\label{lemma_expansion}
    $\widehat{V}_{T,m}$ allows for the following expansion:
\begin{equation}\label{VTm_hat_expansion}
    \begin{aligned}
\widehat{V}_{T,m}
=& V_{T,m} + \zeta_T^{\top} V_{T,m}^{(11)} +\zeta_T^{\top} V_{T,m}^{(12)} +\frac{1}{2} \zeta_T^{\top} V_{T,m}^{(21)} \zeta_T \\
 &+\frac{1}{2} \zeta_T^{\top} V_{T,m}^{(22)} \zeta_T+\zeta_T^{\top} V_{T,m}^{(23)} \zeta_T +R_{T,m} ,
\end{aligned}
\end{equation}
where $V_{T,m}$ is defined in the same way as (9) with $X_t$ replaced by $\eta_t$, and $V_{T,m}^{(ab)}$ is defined in \eqref{V_Tm_ab}, $\zeta_T =\widehat{\theta}_T-\theta_0$, and $R_{T,m}$ is the remainder term.
\end{lemma}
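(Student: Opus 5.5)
The expansion in Lemma~\ref{lemma_expansion} comes from a second-order Taylor expansion of each kernel evaluation $k(\widehat\eta_i,\widehat\eta_j)$ and $l(\widehat\eta_{q-m},\widehat\eta_{r-m})$ in the parameter. The remainder terms are then collected into $R_{T,m}$. The expansion is carried out in the U-statistic representation of $\widehat V_{T,m}$, namely the $\widehat\eta$-analogue of \eqref{V_Tm_ab}: a linear combination of the three averages of $\widehat k_{ij}\widehat l_{ij}$, $\widehat k_{ij}\widehat l_{qr}$ and $\widehat k_{ij}\widehat l_{iq}$, each normalised by the appropriate binomial coefficient.

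\textbf{Step 1: remove initialisation.} First I replace $\widehat\eta_t=\widehat g_t(\widehat\theta_T)$ by $g_t(\widehat\theta_T)$. By the Lipschitz property in Assumption~\ref{ass5.5_kernel},
\[
|k(\widehat g_i,\widehat g_j)-k(g_i,g_j)|\le C\bigl(\Vert \widehat R_i\Vert+\Vert \widehat R_j\Vert\bigr),
\]
and similarly for $l$, using the boundedness of the kernels. Averaging over pairs, triples and quadruples gives a bound of order $T^{-1}\sum_t\sup_\theta\Vert\widehat R_t(\theta)\Vert$. I then need this to be $o_p(T^{-1})$. Assumption~\ref{ass5.4_trucation} yields $\sum_t\sup_\theta\Vert\widehat R_t\Vert^3=O_p(1)$, which by itself only bounds the sum of first powers after using H\"older's inequality. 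So I would absorb this piece into $R_{T,m}$ and defer the question of its order to the analysis of the remainder. At the level of the lemma, I only need to record that it is part of $R_{T,m}$.

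\textbf{Step 2: Taylor expansion.} Next, for each pair, I expand $\theta\mapsto k(g_i(\theta),g_j(\theta))$ around $\theta_0$ to second order with a third-order (integral-form) remainder. The first derivative is
\[
\frac{\partial g_i}{\partial\theta}k_x+\frac{\partial g_j}{\partial\theta}k_y=\bar k_{ij}.
\]
The second derivative equals $\breve k_{ij}$ plus terms involving $\partial^2 g/\partial\theta^2$ multiplied by $k_x$ or $k_y$. I would put these latter terms into $R_{T,m}$, unless the paper's convention absorbs them. The same expansion is carried out for $l$ at indices $q-m$ and $r-m$.

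\textbf{Step 3: multiply and sort.} I then form the products $k^{\wedge}_{ij}\,l^{\wedge}_{qr}$, where $^{\wedge}$ denotes the expanded quantities, and collect terms by order in $\zeta_T$:
\begin{itemize}
\item the zeroth order gives $V_{T,m}$;
\item the first order gives $\zeta_T^\top(\bar k\, l)$ and $\zeta_T^\top(k\,\bar l)$, i.e.\ $V^{(11)}_{T,m}$ and $V^{(12)}_{T,m}$;
\item the second order gives $\tfrac12\zeta^\top\breve k\,l\,\zeta$, $\tfrac12\zeta^\top k\,\breve l\,\zeta$ and the cross term $\zeta^\top\bar k\,\bar l^\top\zeta$, which are exactly $V^{(21)}_{T,m}$, $V^{(22)}_{T,m}$ and $V^{(23)}_{T,m}$.
\end{itemize}
Since the three averages in \eqref{V_Tm_ab} are linear in the kernel products, the same bookkeeping applies termwise. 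Everything else goes into $R_{T,m}$: the third-order Taylor remainders, the products of first- and second-order terms, and the terms involving second derivatives of $g$.

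\textbf{Main obstacle.} The algebra itself is routine. The substantive part is showing that $R_{T,m}=o_p(T^{-1})$, which is what the later theorems need, even though the lemma as stated only defines $R_{T,m}$. I would bound each piece of the remainder by
\[
C\,\Vert\zeta_T\Vert^3\,T^{-1}\sum_t\Bigl(\sup_\theta\Vert\partial g_t\Vert+\sup_\theta\Vert\partial^2 g_t\Vert+\sup_\theta\Vert\partial^3 g_t\Vert\Bigr)^3.
\]
For this I would use the bounded kernel derivatives from Assumption~\ref{ass5.5_kernel}, the moment conditions of Assumption~\ref{ass5.2_g}, and $\zeta_T=O_p(T^{-1/2})$ from Assumption~\ref{ass5.3_esti}.

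One gap remains. The terms involving $\partial^2 g$ multiplied by $k_x$ are only of order $\Vert\zeta_T\Vert^2$, so they must be handled through their mean-zero, degenerate structure under $H_0$, in the same way as the $V^{(ab)}$ terms, and not by crude bounds. I would try treating them in that way as further U-statistics.
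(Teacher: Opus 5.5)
Your proposal is correct and follows essentially the paper's route. The paper (via Lemma~3.1 of \cite{wang2021new}) Taylor-expands the kernels in their arguments about $(\eta_i,\eta_j)$ and then $\widehat\eta_t-\eta_t$ about $\theta_0$; by the chain rule this gives exactly your first- and second-order terms $\bar k_{ij}$, $\breve k_{ij}$ and their $l$-analogues, with the initialisation error, the $\partial^2 g_t/\partial\theta\partial\theta^\top$ terms and all higher-order pieces absorbed into $R_{T,m}$, so the size issues you flag belong to Lemmas~\ref{lemma_a.1}--\ref{lemma_a.2} rather than to this lemma, which only defines $R_{T,m}$.
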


\begin{proof}[Proof of Lemma~\ref{lemma_expansion}.]
   By applying Taylor expansions twice, once for the true innovation $\eta_t$ and once for the true parameter $\theta_0$, we can obtain this result after some algebraic manipulations.
For details, see the proof of Lemma~3.1 in the Appendix of \cite{wang2021new}. 
\end{proof}

Lemma \ref{lemma_degenerate} below investigates the degeneracy and non-degeneracy conditions of the corresponding U-statistic, which aids in examining the asymptotic distribution of $V_{T,m}$ and $V_{T,m}^{(ab)}$ under $H_0$.

\begin{lemma}\label{lemma_degenerate}
    Suppose that Assumptions~1, 6, and 9 hold. Then, under $H_0$,
    
    $(a)$ $ \mathbb{E}\{h (z_1, \eta_{t,m}, \eta_{t,m}^{\prime}, \eta_{t,m}^{\prime \prime}) \}=0$, 
 $\forall z_1 \in \mathbb{R}^d \times \mathbb{R}^d$;
 
$(b)$ $\mathbb{E}\{h^{(a b)}\left(s_1, \varsigma_{t,m}, \varsigma_{t,m}^{\prime}, \varsigma_{t,m}^{\prime \prime} \right)\}=0$, $\forall s_1 \in \mathbb{R}^d \times \mathbb{R}^{p \times d} \times \mathbb{R}^d \times \mathbb{R}^{p \times d}$, and each $a,b = 1,2$;

$(c)$ $\mathbb{E}\{h^{(23)}\left(s_1, \varsigma_{t,m}, \varsigma_{t,m}^{\prime}, \varsigma_{t,m}^{\prime \prime} \right)\}=\Upsilon$, 
$\forall s_1 \in \mathbb{R}^d \times \mathbb{R}^{p \times d} \times \mathbb{R}^d \times \mathbb{R}^{p \times d}$, where
\small{
$$
\begin{aligned}
\Upsilon= & 4 \mathbb{E}\{\frac{\partial g_2 \left(\theta_0 \right)}{\partial \theta} k_x\left(\eta_2, \eta_1 \right)\} \mathbb{E}\{\frac{\partial g_{2-m}\left(\theta_0 \right)}{\partial \theta} l_x\left(\eta_{2-m}, \eta_{1-m}\right)-\frac{\partial g_{3-m}\left(\theta_0\right)}{\partial \theta} l_x\left(\eta_{3-m}, \eta_{1-m}\right)\} \\
& +4 \mathbb{E}\{ \frac{\partial g_3\left(\theta_0\right)}{\partial \theta} k_x\left(\eta_3, \eta_1\right)\} \mathbb{E}\{\frac{\partial g_{3-m}\left(\theta_0\right)}{\partial \theta} l_x\left(\eta_{3-m}, \eta_{1-m}\right)-\frac{\partial g_{2-m}\left(\theta_0\right)}{\partial \theta} l_x\left(\eta_{2-m}, \eta_{1-m}\right)\} .
\end{aligned}
$$ }
Here, $\eta_{t,m}^{\prime}$ and $\eta_{t,m}^{\prime \prime}$ are i.i.d. copies of $\eta_{t,m}$, and $\varsigma_{t,m}^{\prime}$ and $\varsigma_{t,m}^{\prime \prime}$  are i.i.d. copies of $\varsigma_{t,m}$.
\end{lemma}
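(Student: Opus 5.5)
The three parts all come from the same computation. We write each kernel as a symmetrised average over permutations of products of centred or derivative-type pair terms. We then integrate out three of the four arguments, using two facts: under $H_0$ the innovations $\eta_t$ are i.i.d., and $\partial g_t(\theta_0)/\partial\theta$ is $\mathcal F_{t-1}$-measurable, so it is independent of $\eta_t$.

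For part (a), the kernel $h$ is the HSIC kernel evaluated at $\eta_{t,m}=(\eta_t,\eta_{t-m})$. Under $H_0$ the two coordinates of $\eta_{t,m}$ are independent with the same marginal law. So $h_1(z_1)=\tfrac12[\mathbb E\{d_k(x_1,\eta_t)d_l(y_1,\eta_{t-m})\}+V_m]$ by Lemma~\ref{lemma_Hoeffding}. The expectation factorises as $\mathbb E\, d_k(x_1,\eta_t)\cdot\mathbb E\, d_l(y_1,\eta_{t-m})$, and each factor vanishes by the double centring in \eqref{d_k d_l}. Since also $V_m=0$, we get $h_1\equiv0$. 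This is exactly the degeneracy already used in step (a) of the proof of Theorem~2.

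For parts (b) and (c), expand $h^{(ab)}(s_1,\cdot,\cdot,\cdot)$ into the terms $k^{(ab)}_{tu}l^{(ab)}_{tu}$, $k^{(ab)}_{tu}l^{(ab)}_{vw}$ and $-2k^{(ab)}_{tu}l^{(ab)}_{tv}$ over all permutations. Then sort the terms by which positions the fixed argument $s_1$ occupies. We treat the three i.i.d. copies as independent draws of $\varsigma_{t,m}$.

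Consider first the cases $(a,b)\in\{(1,1),(2,1)\}$. Here the $l$-factor is the plain kernel $l$, while the $k$-factor involves derivatives of $k$ multiplied by $\partial g/\partial\theta$. Integrating out the copies that appear only in the $l$ part gives combinations of the form $l(y,y')-\mathbb E l(y,Y)-\mathbb E l(Y,y')+\mathbb E l(Y,Y')$ and their partial expectations. Across the $l_{tu}+l_{vw}-2l_{tv}$ structure these cancel exactly as in the ordinary HSIC kernel, because the $y$-coordinates of distinct copies are i.i.d. and independent of the $x$-coordinates. The key point is that the derivative factor $\partial g_t/\partial\theta$ sits on the $x$ side only through $\eta_t$ and its own past. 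Under $H_0$ this past is independent of the lagged copy's $y$-coordinate integrated in the other factors, so the $l$-centring still annihilates every term. The cases $(1,2)$ and $(2,2)$ follow by the symmetric argument with the roles of $k$ and $l$ swapped. Here we need $\partial g_{t-m}/\partial\theta$ to be independent of $\eta_t$, which holds since it is $\mathcal F_{t-m-1}$-measurable.

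Part (c) is the main obstacle. For $(2,3)$ both factors carry derivatives, $\bar k_{tu}$ and $\bar l_{tu}^{\top}$, so neither side is centred and the cancellation fails. I expect a nonzero constant that does not depend on $s_1$. To compute it, I will evaluate each permutation class. In the terms where $s_1$ enters, it appears in only one factor. For that factor, independence lets us integrate over the companion argument, and by stationarity of the i.i.d. copies the result is unchanged when $s_1$ is replaced by a random copy. This shows the expectation is constant in $s_1$.

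The constant then reduces to products of the form $\mathbb E\{\partial g_2/\partial\theta\,k_x(\eta_2,\eta_1)\}$ times differences of $\mathbb E\{\partial g_{\cdot-m}/\partial\theta\,l_x(\cdot,\eta_{1-m})\}$. Here I use $k_y(x,y)=k_x(y,x)$ by symmetry to merge the $x$- and $y$-derivative terms, which yields the factor $4$. The index labels $2,3$ in $\Upsilon$ record whether the two derivative terms share an observation (the $l_{tu}$ pairing) or not (the $l_{tv}$ pairing). The $l_{vw}$ terms contribute nothing. The delicate part is bookkeeping the combinatorial weights from the $1/4!$ symmetrisation and the $-2$ factor so that they produce exactly the coefficient $4$ and the stated differences.
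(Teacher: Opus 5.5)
Your part (a) is correct, and so is part (b) for $(a,b)\in\{(1,2),(2,2)\}$. For any HSIC-type kernel built from symmetric kernels on a first block $U$ and a second block $Y$ of one observation, the first-order projection is $\tfrac12[\mathbb{E}\{d_\kappa(u_1,U)d_\lambda(y_1,Y)\}+\mathrm{HSIC}_{\kappa,\lambda}(U,Y)]$, whatever the joint law of $(U,Y)$. It vanishes once $U$ and $Y$ are independent. In those two cases $U=\eta_t$ and $Y=(\eta_{t-m},\partial g_{t-m}(\theta_0)/\partial\theta)$. The block $Y$ is $\mathcal F_{t-m}$-measurable (in general not $\mathcal F_{t-m-1}$-measurable, since $g_{t-m}$ involves $X_{t-m}$), so it is independent of $\eta_t$. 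Your premise that $\partial g_t(\theta_0)/\partial\theta$ is $\mathcal F_{t-1}$-measurable is also false in general: for $g_t=X_t/\sigma_t(\theta)$ it equals $-\eta_t\,\partial\log\sigma_t/\partial\theta$. Those two cases do not need it, however.

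The genuine gap is in the cases $(1,1)$ and $(2,1)$, where $U=(\eta_t,\partial g_t(\theta_0)/\partial\theta)$ and $Y=\eta_{t-m}$. You claim that the past entering $\partial g_t/\partial\theta$ is independent of the $y$-coordinate being integrated. This fails for the term in which both come from the same copy: $\partial g_t/\partial\theta$ is a function of $I_{t-1}\ni X_{t-m}$, so it depends on $\eta_{t-m}$. After all centring cancellations (the HSIC part is $0$ under $H_0$) one is left with
\[
\mathbb{E}\{h^{(11)}(s_1,\varsigma_{t,m},\varsigma_{t,m}',\varsigma_{t,m}'')\}
=\tfrac12\,\mathbb{E}\Bigl[\tfrac{\partial g_t(\theta_0)}{\partial\theta}\bigl\{k_y(x_1,\eta_t)-\mathbb{E}_{\eta'}k_y(\eta',\eta_t)\bigr\}\,d_l(y_1,\eta_{t-m})\Bigr].
\]
This need not vanish. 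Fit an AR(1) to i.i.d.\ $N(0,1)$ data, so $\theta_0=0$ and $\partial g_t/\partial\theta=-\eta_{t-1}$, and take $m=1$ with unit-bandwidth Gaussian kernels. Then the display equals $-\tfrac{1}{16}x_1y_1e^{-(x_1^2+y_1^2)/4}$.

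So no bookkeeping rescues this step: the degeneracy asserted in (b) fails for $(1,1)$ in this setting. The paper's proof only defers to Wang et al.\ (2021). Their argument uses that the two blocks come from series that are independent under the null, and that independence is lost when the second block is a lag of the same series. Compare the extra estimation term in the residual auto-distance-covariance limit for AR models in Davis et al.\ (2018).

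The same within-copy dependence defeats your sketch of (c). The argument $s_1$ enters both factors in the $k_{tu}l_{tu}$ and $k_{tu}l_{tv}$ terms. Replacing it by a random copy says nothing about constancy in $s_1$. In the same example, $\mathbb{E}\{h^{(23)}(s_1,\varsigma_{t,m},\varsigma_{t,m}',\varsigma_{t,m}'')\}=\tfrac12 B_1\,\mathbb{E}\{k_x(x_1,\eta)\}\,\mathbb{E}\{\epsilon\, l_x(y_1,\epsilon)\}$, with $B_1$ the last coordinate of $s_1$ and $\eta,\epsilon\sim N(0,1)$. This is not constant in $s_1$. Finally, if the indices $2,3$ in $\Upsilon$ label exchangeable copies, the displayed $\Upsilon$ is identically $0$, so your expected nonzero constant cannot match it.
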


\begin{proof}[Proof of Lemma~\ref{lemma_degenerate}.]
    The result follows from straightforward calculations; see the proof of Lemma~3.2 in the Appendix of \cite{wang2021new} for details.
\end{proof}

\begin{remark}
From Lemma~\ref{lemma_expansion}, $\widehat{V}_{T,m}$ admits a decomposition into three components. 
The first is $V_{T,m}$, analysed in detail in Theorem~2. 
The second,
$\zeta_T^{\top} V_{T,m}^{(11)}+\zeta_T^{\top} V_{T,m}^{(12)} +\frac{1}{2} \zeta_T^{\top} V_{T,m}^{(21)} \zeta_T +\frac{1}{2} \zeta_T^{\top} V_{T,m}^{(22)} \zeta_T+\zeta_T^{\top} V_{T,m}^{(23)} \zeta_T$
captures the estimation effect. 
The third is a remainder term $R_{T,m}$, whose asymptotic negligibility is established in subsequent lemmas.

We next consider $V_{T,m}^{(ab)}$, which contributes to the estimation effect. 
Under $H_0$, for $a,b=1,2$, $V_{T,m}^{(ab)}$ is a degenerate U-statistic of order 1 by Lemma~\ref{lemma_degenerate}$(b)$. 
Using standard results for U-statistics of $m$-dependent stationary sequences (see Chapter~2.4.1 of \cite{lee1990u}), we have $T V_{T,m}^{(ab)} = O_p(1)$. 
By Assumption~7, the corresponding estimation effect is asymptotically negligible, that is, $\zeta_T^{\top} T V_{T,m}^{(ab)} = o_p(1)$, for $a,b=1,2$.

However, under $H_0$, the contribution associated with $V_{T,m}^{(23)}$ is not negligible when $\Upsilon \neq 0$. 
In this case, by the law of large numbers for U-statistics, $V_{T,m}^{(23)} = O_p(1)$, and hence $T \zeta_T^{\top} V_{T,m}^{(23)} \zeta_T = O_p(1)$, implying that the associated estimation effect is not negligible.
\end{remark}

A central limit theorem for martingale difference sequences is given in Lemma~\ref{lemma_clt}.

\begin{lemma}\label{lemma_clt}
    Suppose that Assumptions~1 and 6--9 hold. Then, under $H_0$, 
\begin{equation}\nonumber
\begin{aligned}
    \mathcal{G}_T:= & (\frac{1}{\sqrt{T-m}} \sum_{t=m+1}^{T} \mathcal{G}_{1 t}^{\top}, \frac{1}{\sqrt{T-m}} \sum_{t=m+1}^{T} \mathcal{G}_{2 t}^{\top} )^{\top}  \\ &\xrightarrow{\mathrm{d}} 
\mathcal{G}:= ( (G_{\ell,m} )^\top_{\ell=1,\ldots,L; m=1,\ldots,M}, \mathcal{W}^{\top} )^{\top},
\end{aligned}
\end{equation}
as $T \to \infty$, 
where 
$\mathcal{G}_{1 t}= (\Phi_\ell (\eta_{t,m}))_{\ell=1,\ldots,L; m=1,\ldots,M} \in \mathbb{R}^{LM}$,
$\mathcal{G}_{2 t}= \pi_t \in \mathbb{R}^p$,
$\mathcal{G}$ follows a multivariate normal distribution with mean zero and covariance matrix $\overline{\mathcal{G}}=E\left(\mathcal{G}_t \mathcal{G}_t^{\top} \right)$ with
$\mathcal{G}_t=\left(\mathcal{G}_{1 t}^{\top}, \mathcal{G}_{2 t}^{\top} \right)^{\top} $,
$\{ G_{\ell,m} \}_{\ell=1,\ldots,L; m=1,\ldots,M}$ is a sequence of i.i.d. standard normal random variables, 
$\mathcal{W}$ is defined in Assumption~7.
\end{lemma}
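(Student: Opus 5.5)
The plan is to use the martingale central limit theorem for vector martingale difference sequences together with the Cramér--Wold device. Fix a vector $c=(c_1^\top,c_2^\top)^\top$ with $c_1\in\mathbb{R}^{LM}$ and $c_2\in\mathbb{R}^p$. Set $\mathcal{G}_t=(\mathcal{G}_{1t}^\top,\mathcal{G}_{2t}^\top)^\top$ and define the scalar sequence $\varepsilon_t=c^\top\mathcal{G}_t$ for $t=M+1,\dots,T$. Components indexed by lag $m<M$ have extra summands at $t=m+1,\dots,M$, and there are only finitely many of them. Each is $O_p(1)$, so after dividing by $\sqrt{T-m}$ these contribute $o_p(1)$. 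The difference between the normalisations $\sqrt{T-m}$ and $\sqrt{T}$ is likewise asymptotically negligible. It therefore suffices to study $(T)^{-1/2}\sum_t\varepsilon_t$.

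\textbf{Step 1: martingale difference property.} With $\mathcal{F}_t=\sigma(X_s,s\le t)$, Lemma~\ref{lemma_mds_proof}(a) (applied to $\eta_t$) gives $\mathbb{E}\{\Phi_\ell(\eta_{t,m})\mid\mathcal{F}_{t-1}\}=0$. To see this, note that $\eta_t$ is independent of $\mathcal{F}_{t-1}$ while $\eta_{t-m}$ is $\mathcal{F}_{t-1}$-measurable. Degeneracy (Lemma~\ref{lemma_degenerate}(a)) then yields $\mathbb{E}_{x}\mathcal{K}((x,y),\cdot)=0$, hence $\mathbb{E}_x\Phi_\ell(x,y)=0$ for every $y$, because $\lambda_\ell\neq0$. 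Assumption~\ref{ass5.3_esti} gives $\mathbb{E}(\pi_t\mid\mathcal{F}_{t-1})=0$. Consequently $\{\varepsilon_t,\mathcal{F}_t\}$ is a martingale difference sequence.

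\textbf{Step 2: conditional variance.} I need $T^{-1}\sum_t\mathbb{E}(\varepsilon_t^2\mid\mathcal{F}_{t-1})\xrightarrow{\mathrm p}c^\top\overline{\mathcal{G}}c$. Under $H_0$ the sequence $\mathcal{G}_t$ is strictly stationary and ergodic, being a measurable function of the i.i.d. innovations and the past. It is ergodic because $X_t$ is a causal function of $\{\eta_s\}$ and $\pi_t$, $\partial g_t$ are functions of $I_{t-1}$. The ergodic theorem therefore applies to $\mathbb{E}(\mathcal{G}_t\mathcal{G}_t^\top\mid\mathcal{F}_{t-1})$, which is integrable since $\mathbb{E}\Phi_\ell^2=1$ and $\mathbb{E}\Vert\pi_t\Vert^2<\infty$. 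The block of $\overline{\mathcal{G}}$ for $\Phi$ is the identity, by Lemma~\ref{lemma_mds_proof}(b). This identifies the $G_{\ell,m}$ as i.i.d.\ standard normal. The cross block $\mathbb{E}\{\Phi_\ell(\eta_{t,m})\pi_t\}$ is simply carried along in $\overline{\mathcal{G}}$. The marginal limit $\mathcal{W}$ of the last block is consistent with Assumption~\ref{ass5.3_esti}.

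\textbf{Step 3: Lindeberg condition.} Stationarity plus finite second moments give $T^{-1}\sum_t\mathbb{E}\{\varepsilon_t^2\mathbf{1}(|\varepsilon_t|>\epsilon\sqrt T)\}=\mathbb{E}\{\varepsilon_1^2\mathbf{1}(|\varepsilon_1|>\epsilon\sqrt T)\}\to0$ by dominated convergence.

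The martingale CLT (for example Hall and Heyde, or Billingsley's stationary ergodic martingale CLT, which needs only Steps 1--2) then gives $c^\top\mathcal{G}_T\xrightarrow{\mathrm d}N(0,c^\top\overline{\mathcal{G}}c)$. Cramér--Wold concludes the proof.

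The main obstacle is Step 2: justifying ergodicity/stationarity of the joint process $(\Phi_\ell(\eta_{t,m}),\pi_t)$ under $H_0$, and verifying that the eigenfunctions have the required conditional mean zero given $\mathcal{F}_{t-1}$. I would first try the stationary-ergodic martingale CLT, since it avoids computing conditional variances explicitly.
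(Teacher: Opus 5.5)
Your proposal is correct and takes essentially the same route as the paper. The paper notes that $\{\mathcal{G}_{1t}\}$ (by analogy with Lemma~\ref{lemma_mds_proof}) and $\{\pi_t\}$ (by Assumption~\ref{ass5.3_esti}) are martingale differences with finite second moments, then invokes a martingale-difference CLT; your Steps~2--3 (ergodicity via the causal representation, the Lindeberg check) and the lag-range bookkeeping spell out what the paper leaves implicit. One small correction: $\mathbb{E}_x\mathcal{K}((x,y),\cdot)=0$ follows from the product form $\mathcal{K}=d_k d_l$ and the centring $\mathbb{E}\,d_k(X,x')=0$ for every $x'$, not from Lemma~\ref{lemma_degenerate}(a), which only gives vanishing of the first-order projection (integration over the whole pair $Z=(X,Y)$).
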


\begin{proof}[Proof of Lemma~\ref{lemma_clt}]
Under $H_0$, similar to Lemma~\ref{lemma_mds_proof}, we know that $\{ \mathcal{G}_{1 t}\}$ is a martingale difference sequence. 
From Assumption~7, we have that $\{ \mathcal{G}_{2 t}\}$ is a martingale difference sequence. It follows that $E(\mathcal{G}_{t} \mid \mathcal{F}_{t-1}) = 0$. Moreover, it is straightforward to know that $\mathbb{E} \Vert \mathcal{G}_t \Vert| ^2<\infty$ by Assumption~7 and 9.
By the central limit theorem for martingale difference sequences (see Chapter 5 in \citealp{white2001asymptotic}), we have that $\mathcal{G}_T \xrightarrow{\mathrm{d}} \mathcal{G}$, as $T \to \infty$.
Additionally, by Lemma~\ref{lemma_mds_proof} $(b)$, we obtain $\{G_{\ell,m} \}_{\ell=1,\ldots,L; m=1,\ldots,M}$ is a sequence of i.i.d. standard normal random variables. 
\end{proof}

The following two lemmas characterise the asymptotic behaviour of the remainder term $R_{T,m}$ in Lemma~\ref{lemma_expansion} under $H_0$ and $H_{a,M}$, respectively. These results are adapted from Lemmas~A.1 and~A.2 of \cite{wang2021new}. The proofs follow similar arguments, with the modification that the samples are $m$-dependent rather than i.i.d. under $H_0$.

\begin{lemma}\label{lemma_a.1}
    Suppose that Assumptions~1 and 6--9 hold. Then, under $H_0$,
    $T \Vert R_{T,m} \Vert =o_p(1)$.
\end{lemma}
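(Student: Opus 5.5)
To prove Lemma~\ref{lemma_a.1}, I will first make the remainder $R_{T,m}$ of Lemma~\ref{lemma_expansion} explicit. It has three sources. (i) The truncation error from replacing $g_t(\widehat\theta_T)$ by $\widehat g_t(\widehat\theta_T)$, controlled by $\widehat R_t$. (ii) The third-order Taylor remainder in $\theta$ of $g_t(\theta)$ about $\theta_0$, bounded by $\Vert\zeta_T\Vert^3\sup_\theta\Vert\partial^3 g_t\Vert$. (iii) The second-order Taylor remainder of the kernels $k,l$ in their arguments. In (iii) the kernels are evaluated at intermediate points between $\eta_t$ and $\widehat\eta_t$, and the remainder involves third-order products of the perturbations $\widehat\eta_t-\eta_t$ multiplied by differences of second derivatives of $k$ or $l$.

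Since $V_{T,m}$ is a finite linear combination of the three averages appearing in \eqref{V_Tm_ab}, it suffices to bound each type of average separately. The generic term is of the form
\[
\binom{T-m}{s}^{-1}\sum \{\widehat k_{ij}\widehat l_{qr}-(\text{second-order expansion})\},
\]
with $s\in\{2,3,4\}$.

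For source (i), I will use Assumption~\ref{ass5.5_kernel}: bounded kernels and bounded, Lipschitz derivatives. This gives $|\widehat k_{ij}-k(\widetilde\eta_i,\widetilde\eta_j)|\le C(\Vert\widehat R_i\Vert+\Vert\widehat R_j\Vert)$, where $\widetilde\eta_t=g_t(\widehat\theta_T)$. Averaging over pairs contributes at most $C T^{-1}\sum_t\sup_\theta\Vert\widehat R_t(\theta)\Vert$. Hölder's inequality combined with Assumption~\ref{ass5.4_trucation} gives $\sum_t\sup_\theta\Vert\widehat R_t\Vert \le T^{2/3}(\sum_t\sup_\theta\Vert\widehat R_t\Vert^3)^{1/3}=O_p(T^{2/3})$, so this crude bound is only $O_p(T^{-1/3})$, which is not enough. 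I will therefore exploit the centred structure instead. Expanding $\widehat k_{ij}$ around $k(\widetilde\eta_i,\widetilde\eta_j)$ to first order, the linear term in $\widehat R_i$ is multiplied by a factor like $k_x(\eta_i,\eta_j)\{l_{ij}-\ldots\}$. After averaging over $j$ (and $q,r$), this factor is a degenerate-type average that is $O_p(T^{-1/2})$, by the degeneracy in Lemma~\ref{lemma_degenerate}(a) together with $m$-dependence under $H_0$. The linear piece is then $T^{-1}\sum_i\Vert\widehat R_i\Vert\cdot O_p(T^{-1/2})$. The quadratic piece is bounded by $CT^{-1}\sum_i\Vert\widehat R_i\Vert^2\le CT^{-1}T^{1/3}O_p(1)$. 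Multiplying the total by $T$ shows that it is $o_p(1)$, provided the first-order piece is handled with a bound uniform in $i$. Where uniformity is delicate, I will use the Lipschitz bounds to replace $\widetilde\eta$ by $\eta$ at the cost of $\Vert\zeta_T\Vert=O_p(T^{-1/2})$.

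For source (ii), Assumption~\ref{ass5.3_esti} gives $\Vert\zeta_T\Vert=O_p(T^{-1/2})$. The third-derivative moment in Assumption~\ref{ass5.2_g} combined with bounded $k_x,l_x$ then gives a contribution of order $\Vert\zeta_T\Vert^3\,T^{-1}\sum_t\sup_\theta\Vert\partial^3g_t\Vert=O_p(T^{-3/2})$, which is $o_p(T^{-1})$.

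For source (iii), the Lipschitz continuity of the second derivatives gives a bound by $C\Vert\widehat\eta_i-\eta_i\Vert^3$-type cross terms, and $\widehat\eta_t-\eta_t=\partial g_t(\bar\theta)\zeta_T+\widehat R_t$. The average of the products is then $O_p(\Vert\zeta_T\Vert^3)+O_p(T^{-1}\sum\Vert\widehat R_t\Vert^3)=O_p(T^{-3/2})+O_p(T^{-1})\cdot$. To obtain strict $o_p(T^{-1})$ for the $\widehat R$ part, I will combine it with the boundedness of the kernel derivatives and the fact that $\sup_t\Vert\widehat R_t\Vert\to0$ in probability, which follows from summability.

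The main obstacle is the first-order truncation term in source (i), because the crude bound fails. I expect that the degeneracy of the centred kernel under $H_0$, together with $m$-dependence, reduces it to an $O_p(T^{-1/2})$ factor, mirroring Lemma~A.1 of \cite{wang2021new}. The only change from that proof is that the variance computations must count the $O(mT)$ extra dependent index pairs, and these do not affect the order.
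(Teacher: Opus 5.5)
\textbf{Gap in the truncation step.} You correctly identify the truncation error as the crux, but your bounds for it do not close.

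From Assumption~\ref{ass5.4_trucation} you get, as you note, $\sum_t\Vert\widehat R_t\Vert=O_p(T^{2/3})$ and $\sum_t\Vert\widehat R_t\Vert^2=O_p(T^{1/3})$. After multiplying by $T$, your linear piece is then $O_p(T^{-1/2})\sum_t\Vert\widehat R_t\Vert=O_p(T^{1/6})$ and your quadratic piece is $O_p(T^{1/3})$. So the claim that ``multiplying the total by $T$ shows that it is $o_p(1)$'' is false.

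The patch in source (iii) is also invalid. Summability gives $\widehat R_t\to 0$ as $t\to\infty$, not $\max_{t\le T}\Vert\widehat R_t\Vert\to 0$ in probability. The initial-value errors at $t=1,2,\ldots$ are fixed $O_p(1)$ variables that do not shrink with $T$. Hence your bound on the cubic term is only $O_p(1)$ after scaling.

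The idea you need, which you use only for the linear piece, is that under $H_0$ the first projection vanishes for \emph{every} $z$ (Lemma~\ref{lemma_degenerate}(a)). So the average of $h(z,\cdot,\cdot,\cdot)$ over the remaining indices, together with its $z$-gradient, is $O_p(T^{-1/2})$. Consequently, changing the argument at a single time point $t$ by any amount moves $\widehat V_{T,m}$ by roughly $T^{-1}\min\{1,\Vert\widehat R_t\Vert\}\,O_p(T^{-1/2})$, and this factor must be applied to the quadratic and cubic pieces as well.

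Even with this factor, norm bounds do not suffice. $T$ times the single-index effect is $O_p(T^{-1/2})\sum_t\min\{1,\Vert\widehat R_t\Vert\}$, which Assumption~\ref{ass5.4_trucation} only bounds by $O_p(T^{1/6})$. $T$ times the two-index interactions is $T^{-1}\{\sum_t\min(1,\Vert\widehat R_t\Vert)\}^2=O_p(T^{1/3})$; there the relevant kernel is $h_2=\mathcal K/6$, which does not vanish pointwise, so no $T^{-1/2}$ factor is available. You therefore need one of two things:
\begin{itemize}
\item a second-moment argument showing cancellation across $t$ in sums such as $\sum_t\widehat R_t^\top c_t$, which is delicate because $\widehat R_t$ is evaluated at the data-dependent $\widehat\theta_T$; or
\item a stronger truncation condition such as $\sum_t\sup_\theta\Vert\widehat R_t(\theta)\Vert=O_p(1)$. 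This holds under the geometric forgetting of initial values typical of ARMA/GARCH-type recursions, and it makes both pieces negligible at once.
\end{itemize}

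\textbf{Missing term in your decomposition.} Your list of sources is incomplete. In Lemma~\ref{lemma_expansion}, $\breve k_{ij}$ and $\breve l_{qr}$ contain only second derivatives of the kernels. Hence $R_{T,m}$ also contains the second-order parameter term $\tfrac12\zeta_T^\top\{\partial^2 g_t(\theta_0)/\partial\theta\partial\theta^\top\}\zeta_T$, propagated through $k_x$ and $l_x$. Source (ii) starts at third order, and in (iii) this term is hidden in $\partial g_t(\bar\theta)/\partial\theta-\partial g_t(\theta_0)/\partial\theta$.

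Its crude bound is $\Vert\zeta_T\Vert^2O_p(1)=O_p(T^{-1})$, not $o_p(T^{-1})$. It is negligible only through the same identity $h_1\equiv0$: a perturbation attached to a single index has conditional mean zero, even when it depends on $\eta_{t-m}$ through $I_{t-1}$, so the associated U-statistic is $O_p(T^{-1/2})$.

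A minor point: the bound $O_p(\Vert\zeta_T\Vert^3)$ in (iii) tacitly uses a third moment of $\sup_\theta\Vert\partial g_t/\partial\theta\Vert$, which Assumption~\ref{ass5.2_g} does not provide. It can be recovered from $\max_{t\le T}\sup_\theta\Vert\partial g_t/\partial\theta\Vert=o_p(T^{1/2})$.

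For comparison, the paper gives no self-contained argument here. It only points to Lemma~A.1 of \cite{wang2021new} with the $m$-dependence modification, so these are exactly the details a direct proof has to supply.
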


\begin{lemma}\label{lemma_a.2}
    Suppose that Assumptions~1, 3 and 6--9 hold. Then, $\sqrt{T} \Vert R_{T,m} \Vert =o_p(1)$.
\end{lemma}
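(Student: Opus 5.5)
The plan is to make the remainder in Lemma~\ref{lemma_expansion} explicit and bound each piece at the rate $o_p(T^{-1/2})$, without using $H_0$. Write
\[
\widehat\eta_t-\eta_t=\{\widehat g_t(\widehat\theta_T)-g_t(\widehat\theta_T)\}+\{g_t(\widehat\theta_T)-g_t(\theta_0)\}=\widehat R_t(\widehat\theta_T)+\Delta_t .
\]
The remainder $R_{T,m}$ then splits into three parts:
\begin{itemize}
\item[(i)] the truncation part $R^{(1)}_{T,m}$, which collects every term containing $\widehat R_t(\widehat\theta_T)$;
\item[(ii)] the parametric part $R^{(2)}_{T,m}$, the third-order Taylor remainders of $\Delta_t$ in $\zeta_T$, i.e.\ the third derivatives of $g_t$ allowed by Assumption~\ref{ass5.2_g};
\item[(iii)] the kernel part $R^{(3)}_{T,m}$, the second-order Taylor remainders of $k$ and $l$ in their arguments, evaluated between $\eta$ and $g(\theta_0)+\Delta$.
\end{itemize}
Since $\widehat V_{T,m}$ is a finite linear combination of the three averages in \eqref{V_Tm_ab}, computed from residual-based $k$ and $l$ values, it suffices to bound each part inside each average of products $k_{ij}l_{qr}$.

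For (ii) and (iii), Assumption~\ref{ass5.5_kernel} makes $k$, $l$ and their first and second partial derivatives bounded and Lipschitz. Hence every kernel remainder is dominated by $C$ times a product of at least three factors taken from $\Vert\zeta_T\Vert$ and the terms $\sup_\theta\Vert\partial g_t/\partial\theta\Vert$, $\sup_\theta\Vert\partial^2 g_t/\partial\theta^2\Vert$, $\sup_\theta\Vert\partial^3 g_t/\partial\theta^3\Vert$ at the (at most four) indices $i,j,q,r$. Each such term is bounded by $\Vert\zeta_T\Vert^3$ times a U-statistic average of these derivative sups. By Assumption~\ref{ass5.2_g}, these averages have kernels in $L^{1+\delta}$ after H\"older. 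Under the $\beta$-mixing Assumption~\ref{assumption1_beta}, the U-statistic law of large numbers of \cite{arcones1998law}, used exactly as in the proof of Theorem~\ref{thm1_VTm_consis}, shows they are $O_p(1)$. Assumption~\ref{ass5.3_esti} gives $\zeta_T=O_p(T^{-1/2})$, so $R^{(2)}_{T,m}+R^{(3)}_{T,m}=O_p(T^{-3/2})=o_p(T^{-1/2})$. The quadratic terms in $\zeta_T$ do not enter $R_{T,m}$; they were already extracted as $V^{(21)}_{T,m},V^{(22)}_{T,m},V^{(23)}_{T,m}$. Cross terms with one factor of $\zeta_T$ and a second-order kernel derivative are handled in the same way, because they always carry at least $\Vert\zeta_T\Vert^3$ or $\Vert\zeta_T\Vert^2\cdot o_p(1)$.

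For (i), which I expect to be the main obstacle, the Lipschitz property of $k$ and $l$ gives
\[
|k(\widehat\eta_i,\widehat\eta_j)-k(\check\eta_i,\check\eta_j)|\le C(r_i+r_j),
\qquad r_t=\sup_\theta\Vert\widehat R_t(\theta)\Vert ,
\]
where $\check\eta_t=g_t(\widehat\theta_T)$. Together with boundedness of the other factor, this gives $|R^{(1)}_{T,m}|\le C\,T^{-1}\sum_t r_t$. Applying H\"older directly to Assumption~\ref{ass5.4_trucation} only yields $T^{-1}\sum_t r_t\le T^{-1/3}(\sum_t r_t^3)^{1/3}$, which is too weak. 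I would therefore refine the bound as follows.
\begin{enumerate}
\item Expand the kernel difference to first order: $k_x(\check\eta_i,\check\eta_j)\widehat R_i+k_y(\check\eta_i,\check\eta_j)\widehat R_j$ plus a quadratic Lipschitz remainder bounded by $C(r_i^2+r_j^2)$.
\item Bound the quadratic part by $T^{-1}\sum_t r_t^2\le T^{-1}\max_t r_t\sum_t r_t$, and note $\max_t r_t\le(\sum_t r_t^3)^{1/3}=O_p(1)$.
\item Control the linear part by splitting time into $t\le K_T$ and $t>K_T$, with $K_T\to\infty$ slowly. On $t\le K_T$, H\"older with the cubic summability gives $T^{-1}\sum_{t\le K_T}r_t\le T^{-1}K_T^{2/3}O_p(1)=o_p(T^{-1/2})$. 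On $t>K_T$, use that $\sum_{t>K_T}r_t^3=o_p(1)$, which follows from $\sum_t r_t^3=O_p(1)$ together with a tail argument.
\end{enumerate}
If this splitting proves insufficient, the fallback is to follow Lemma~A.2 of \cite{wang2021new}, replacing their i.i.d.\ LLN with the mixing LLN above. In practice one also exploits that the initial-value effect decays (typically geometrically), which makes $\sum_t r_t=O_p(1)$ and hence $R^{(1)}_{T,m}=O_p(T^{-1})$. Combining the three parts gives $\sqrt T\Vert R_{T,m}\Vert=o_p(1)$.
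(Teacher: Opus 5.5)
Your fallback, which adapts Lemma~A.2 of \cite{wang2021new} and replaces the i.i.d.\ laws of large numbers with ones valid under Assumption~\ref{assumption1_beta}, is exactly the paper's route; the paper gives nothing beyond that citation. Your parametric and kernel parts also reach the right conclusion for the $\sqrt T$ rate, because any term of order $\Vert\zeta_T\Vert^2$ already suffices. There is one inaccuracy there. Given how $\breve k_{ij}$ and $\breve l_{qr}$ are defined, the terms pairing $k_x,l_x$ with $\partial^2 g_t/\partial\theta_i\partial\theta_j$ are quadratic in $\zeta_T$ but are not part of $V^{(21)}_{T,m},V^{(22)}_{T,m}$. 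They therefore stay in $R_{T,m}$, so your $O_p(T^{-3/2})$ is really $O_p(T^{-1})$, which is still enough.

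The genuine gap is your own treatment of part (i), specifically the term linear in $\widehat R_t(\widehat\theta_T)$. Splitting at $K_T$ does not give the rate. On $\{t>K_T\}$, H\"older yields only
\[
T^{-1}\sum_{K_T<t\le T} r_t\le T^{-1/3}\Bigl(\sum_{t>K_T} r_t^3\Bigr)^{1/3}=o_p(T^{-1/3}),
\]
and a cubic tail that vanishes at no specified rate cannot turn $T^{-1/3}$ into $o(T^{-1/2})$. Your step~2 has the same defect as written, since $T^{-1}\max_t r_t\sum_t r_t$ is again only $O_p(T^{-1/3})$. That step can be repaired: $\sum_{t\le T} r_t^2\le T^{1/3}(\sum_{t\le T} r_t^3)^{2/3}$ gives $O_p(T^{-2/3})$.

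The linear term cannot be repaired from Assumption~\ref{ass5.4_trucation} by any bound that uses only the sizes $r_t$.
\begin{itemize}
\item The assumption allows $r_t\asymp t^{-1/3-\epsilon}$ with $0<\epsilon<1/6$. For this sequence, $T^{-1}\sum_{t\le T} r_t\asymp T^{-1/3-\epsilon}\gg T^{-1/2}$.
\item You cannot rely on cancellation either. Lemma~\ref{lemma_a.2} is stated without $H_0$ and is needed under fixed alternatives, where the first-order projection $h_1$ is non-degenerate. Up to constants and lower-order terms, the linear truncation term then behaves like $T^{-1}\sum_t\widehat R_t(\widehat\theta_T)^{\top}\{\partial_x h_1(\eta_{t,m})+\partial_y h_1(\eta_{t+m,m})\}$, which need not be smaller than $T^{-1}\sum_t r_t$.
\end{itemize}

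What closes the argument is an additional condition such as $\sum_{t\le T}\sup_{\theta\in\Theta}\Vert\widehat R_t(\theta)\Vert=o_p(T^{1/2})$; for instance, the sum is $O_p(1)$ under geometric decay of the initial-value effect. With it, your Lipschitz bound $|R^{(1)}_{T,m}|\le C\,T^{-1}\sum_t r_t$ settles part (i) in one line, with no expansion needed. You invoke such decay only ``in practice'', and it is not among the hypotheses of the lemma.

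So, as a self-contained argument under Assumptions~\ref{ass5.1_stationary}, \ref{assumption1_beta} and \ref{ass5.2_g}--\ref{ass5.5_kernel}, your proposal does not close. You would need either to add this summability condition or to show explicitly how the argument of \cite{wang2021new}, which the paper cites, controls this term under the cubic condition alone.
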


\begin{proof}[\bf Proof of Theorem~5]
By Lemmas~\ref{lemma_expansion} and \ref{lemma_a.1},
$T \widehat{V}_{T,m} = Z_{T,m}+o_p(1)$,
where
\begin{equation}\nonumber
    \begin{aligned}
Z_{T,m} := & T V_{T,m}+\zeta_T ^{\top} T  V_{T,m}^{(11)}+\zeta_T^{\top} T  V_{T,m}^{(12)} \\
& +\frac{1}{2} \zeta_T ^{\top}T  V_{T,m}^{(21)}  \zeta_T+\frac{1}{2} \zeta_T^{\top} T  V_{T,m}^{(22)} \zeta_T \\
& + \sqrt{T} \zeta_T ^{\top} V_{T,m}^{(23)} \sqrt{T} \zeta_T  .
\end{aligned}
\end{equation}

For $a,b = 1,2$, $V_{T,m}^{(ab)}$ is a first-order degenerate U-statistic by Lemma~\ref{lemma_degenerate}$(b)$. Consequently, by the theory of U-statistics with $m$-dependent samples (see Chapter 2.4.1 of \cite{lee1990u}), it follows that $T V_{T,m}^{(a b)}=O_p(1)$.
By Assumption~7, we have
\begin{equation}\nonumber
    \begin{aligned}
Z_{T,m} & =T  V_{T,m}+(\sqrt{T} \zeta_T)^{\top} V_{T,m}^{(23)} (\sqrt{T} \zeta_T ) + o_p(1) \\
& =T  V_{T,m} + (\sqrt{T} \zeta_T)^{\top} \Lambda^{(23)} (\sqrt{T} \zeta_T )+o_p(1),
\end{aligned}
\end{equation}
where
\begin{equation}\label{Lambda23}
    \Lambda^{(23)} = \mathbb{E}\{ h^{(23)} (\varsigma_{t,m}, \varsigma_{t,m}^{\prime}, \varsigma_{t,m}^{\prime \prime}, \varsigma_{t,m}^{\prime \prime \prime} ) \},
\end{equation}
and $\varsigma_{t,m}^{\prime}$, $\varsigma_{t,m}^{\prime \prime}$ and $\varsigma_{t,m}^{\prime \prime \prime}$ are i.i.d. copies of $\varsigma_{t,m}$.
The last equality holds by the law of large numbers for U-statistics with $m$-dependent samples (e.g., see \citealp{janson2023asymptotic}).

Following the same steps in the proof of Theorem~2, we can define the truncation version
\begin{equation}\nonumber
    Z_{T,m}^{(L)} := T  \tilde{V}_{T,m}^{(L)} + (\sqrt{T} \zeta_T)^{\top} \Lambda^{(23)} (\sqrt{T} \zeta_T ),
\end{equation}
where $\tilde{V}_{T,m}^{(L)}$ is defined in the same way as \eqref{VTm_trunction} with $Z_{t,m}$ replaced by $\eta_{t,m}$.
By \eqref{VTm_L_expansion}, we have 
\begin{equation}\nonumber
\begin{aligned}
    Z_{T,m}^{(L)} =& \frac{T}{T-m-1} \sum_{\ell=1}^L \lambda_{\ell} [ \{\frac{1}{\sqrt{T-m}} \sum_{i=m+1}^T \Phi_{\ell} (\eta_{i,m} ) \}^2-\frac{1}{T-m} \sum_{i=m+1}^T \Phi_{\ell}^2 (\eta_{i,m} ) ] \\
    &+ (\frac{1}{\sqrt{T-m}} \sum_{i=m+1}^T \pi_i )^{\top} \Lambda^{(23)} (\frac{1}{\sqrt{T-m}} \sum_{i=m+1}^T \pi_i ).
\end{aligned}
\end{equation}

According to Lemma~\ref{lemma_clt} and Assumption~7, the continuous mapping theorem implies that
\begin{equation}\nonumber
\left( Z_{T,m}^{(L)} \right)_{m=1,\ldots, M} \xrightarrow{\mathrm{d}} \left( \sum_{\ell=1}^L \lambda_{\ell} ( G_{\ell,m}^2-1 ) + \mathcal{W}^\top \Lambda^{(23)} \mathcal{W} \right)_{m=1,\ldots, M},
\end{equation}
as $T \to \infty$.
Let $L \to \infty$, similar to the part $(b)$ in the proof of Theorem~2, we obtain
\begin{equation}\nonumber
    \left( Z_{T,m} \right)_{m=1,\ldots, M} \xrightarrow{\mathrm{d}} \left( \sum_{\ell=1}^\infty \lambda_{\ell} ( G_{\ell,m}^2-1 ) + \mathcal{W}^\top \Lambda^{(23)} \mathcal{W} \right)_{m=1,\ldots ,M}.
\end{equation}

Therefore,
$ T  \left( \widehat{V}_{T,1}, \ldots, \widehat{V}_{T,M} \right)^\top \xrightarrow{\mathrm{d}} \left( \xi^{res}_1, \ldots, \xi^{res}_M \right)^\top $, where $\xi^{res}_m = \sum_{\ell=1}^\infty \lambda_{\ell} ( G_{\ell,m}^2-1 ) + \mathcal{W}^\top \Lambda^{(23)} \mathcal{W}$. 
\end{proof}

\begin{proof}[\bf Proof of Theorem~6.]
    The proof follows the same argument as the proof of Theorem~3.2 in \cite{wang2021new}.
\end{proof}

\begin{remark}
The main difference from \cite{wang2021new} is that, under $H_0$, the sample $\{\eta_{t,m}\}_{t=1}^T$ is $m$-dependent rather than i.i.d..
\end{remark}

To verify the validity of the residual bootstrap procedure, we first introduce some notation.
Let
\begin{equation}\label{h2^star}
h_2^* \left(z_1, z_2\right) =\mathbb{E}^*\{ h \left(z_1, z_2, \widehat{\eta}_{t,m}^{*}, \widehat{\eta}_{t,m}^{*\prime} \right)\}
\end{equation}
\begin{equation}\label{Lambda^23star}
    \Lambda^{(23 *)} =\mathbb{E}^*\{ h^{(23)}\left(\widehat{\varsigma}_{t,m}^{*}, \widehat{\varsigma}_{t,m}^{*\prime}, \widehat{\varsigma}_{t,m}^{*\prime \prime}, \widehat{\varsigma}_{t,m}^{*\prime \prime \prime} \right)\},
\end{equation}
where $\widehat{\eta}_{t,m}^{*\prime}$ is an i.i.d. copy of $\widehat{\eta}_{t,m}^{*}=\left( \widehat{\eta}_t^{*}, \widehat{\eta}_{t-m}^{*} \right)$,
and $\widehat{\varsigma}_{t,m}^{*\prime}$, $\widehat{\varsigma}_{t,m}^{*\prime \prime}$ and $\widehat{\varsigma}_{t,m}^{*\prime \prime \prime}$ are i.i.d. copies of $\widehat{\varsigma}_{t,m}^{*}$.
Moreover, let $\zeta_T^*=\widehat{\theta}_T^*-\widehat{\theta}_T$.

To prove Theorem~7, we use two lemmas, stated without proof, from Lemmas~A.4 and~A.5 in the Appendix of \cite{wang2021new}.

\begin{lemma}\label{lemma a.3 in wang}
    Suppose that Assumptions~1, 6--9 hold. Then, under $H_0$, for $\forall K_0>0$,

    $(a)$
    $ \sup _{\Omega_1} |  \frac{1}{(T-m)^4} \sum_{i,j,q,r} h \left(z_1, z_2,\left(\eta_i, \eta_{j-m}\right),\left(\eta_q, \eta_{r-m}\right)\right) -h_2 \left(z_1, z_2\right) |=o_p(1)$,
    where $\Omega_1=\left\{\left(z_1, z_2\right): \Vert z_s \Vert \leq K_0\right.$ for $\left.s=1,2\right\}$, and $h_2 \left(z_1, z_2\right)$ is defined as in Lemma~\ref{lemma_Hoeffding}.

    $\begin{aligned} (b)  \sup _{\Omega_2} \mid & \frac{1}{(T-m)^4} \sum_{i, j, q, r} h^{(23)}\left(\left(x_1, \varsigma_{i-m}\right),\left(x_2, \varsigma_{j-m}\right),\left(x_3, \varsigma_{q-m}\right),\left(x_4, \varsigma_{r-m}\right)\right) \\ & -\mathbb{E}\{h^{(23)}\left(\left(x_1, \varsigma_{t-m}\right),\left(x_2, \varsigma_{t-m}^{\prime} \right),\left(x_3, \varsigma_{t-m}^{\prime \prime} \right),\left(x_4, \varsigma_{t-m}^{\prime \prime \prime} \right)\right)\} \mid=o_p(1), \end{aligned}$ \\
    where $\Omega_2=\left\{\left(x_1, x_2, x_3, x_4 \right): \Vert x_{s} \Vert  \leq K_0\right.$ for $\left.s=1,2,3,4\right\}$.

    $\begin{aligned} (c) \sup _{\Omega_3} & \mid  \frac{1}{(T-m)^4} \sum_{i, j, q, r} h^{(23)}\left(\left(\varsigma_{i}, y_1 \right),\left(\varsigma_{j}, y_2 \right),\left(\varsigma_{ q}, y_3 \right),\left(\varsigma_{r}, y_4 \right)\right) \\ & -E\{ h^{(23)}\left(\left(\varsigma_t, y_1 \right),\left(\varsigma_t^{\prime}, y_2 \right),\left(\varsigma_t^{\prime \prime}, y_3 \right),\left(\varsigma_t^{\prime \prime \prime}, y_4 \right)\right)\} \mid =o_p(1),\end{aligned}$ \\
    where $\Omega_3=\left\{\left(y_1, y_2, y_3, y_4\right):\Vert  y_{s}\Vert  \leq K_0\right.$ for $\left.s=1,2,3,4\right\}$.
\end{lemma}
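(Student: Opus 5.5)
The plan is to prove all three parts the same way: first pointwise convergence in probability at each fixed argument, then an upgrade to uniform convergence over the compact sets $\Omega_1,\Omega_2,\Omega_3$ by a finite-net (stochastic equicontinuity) argument. Throughout we work under $H_0$, so $\{\eta_t\}$ is i.i.d. and $\{\varsigma_t\}=\{(\eta_t,\partial g_t(\theta_0)/\partial\theta)\}$ is strictly stationary. The key structural fact is that the quadruple sums in $(a)$--$(c)$ are V-statistic-type averages over $(T-m)^4$ index tuples. In these tuples the random arguments $(\eta_i,\eta_{j-m})$ and $(\eta_q,\eta_{r-m})$ (respectively $\varsigma_{i-m},\ldots$ or $\varsigma_i,\ldots$) are built from mutually independent innovations, except on tuples where two of the time indices $i,j-m,q,r-m$ coincide. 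There are only $O((T-m)^3)$ such tuples, a fraction $O(T^{-1})$ of the total.

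\textbf{Step 1: pointwise limit.} For fixed $(z_1,z_2)\in\Omega_1$ we split the sum in $(a)$ into tuples with all time indices distinct and the remaining ones.
\begin{itemize}
\item On the distinct tuples, the summand has expectation exactly $\mathbb{E}\{h(z_1,z_2,Z,Z')\}=h_2(z_1,z_2)$ with $Z,Z'$ independent copies of $\eta_{t,m}$, by the computation behind Lemma~\ref{lemma_Hoeffding}.
\item For the variance, two tuples whose time indices are disjoint give uncorrelated summands, because the $\eta$'s are independent. Hence at most $O((T-m)^7)$ of the $(T-m)^8$ pairs of tuples contribute, and the variance is $O(T^{-1})$, using only boundedness of $k$ and $l$ (Assumption~\ref{ass5.5_kernel}).
\item The non-distinct tuples contribute $O(T^{-1})$ in $L^1$, again by boundedness.
\end{itemize}
Chebyshev's inequality then gives the pointwise result. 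For $(b)$ and $(c)$ we argue identically. The difference is that $h^{(23)}$ contains products of at most two factors $\partial g_s(\theta_0)/\partial\theta$ multiplied by bounded kernel derivatives $k_x,l_x$. Here I would use the law of large numbers for U-statistics of $m$-dependent samples \citep{janson2023asymptotic}, whose first-moment requirement is guaranteed by Assumption~\ref{ass5.2_g} via Cauchy--Schwarz. This replaces the second-moment Chebyshev bound, which would need fourth moments of the derivatives.

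\textbf{Step 2: uniformity.} By Assumption~\ref{ass5.5_kernel}, $k,l$ and their first and second derivatives are bounded and Lipschitz. Consequently:
\begin{itemize}
\item in $(a)$, $|h(z_1,z_2,\cdot,\cdot)-h(z_1',z_2',\cdot,\cdot)|\le C(\Vert z_1-z_1'\Vert+\Vert z_2-z_2'\Vert)$ with a deterministic constant $C$, and the same bound holds for $h_2$;
\item in $(b)$ and $(c)$, the analogous Lipschitz bound holds with a random constant of the form $C\,\Vert\partial g_s/\partial\theta\Vert\,\Vert\partial g_{s'}/\partial\theta\Vert$.
\end{itemize}
The average of these random constants over the index tuples is $O_p(1)$ by the same law of large numbers as in Step~1. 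Its expectation is finite, which also gives Lipschitz continuity of the limit functions. We then cover the compact set by a finite $\epsilon$-net and apply the pointwise result at the finitely many net points. The oscillation between a net point and any point of its cell is bounded by $\epsilon\cdot O_p(1)$. Letting $\epsilon\downarrow0$ yields the $o_p(1)$ supremum.

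\textbf{Main obstacle.} I expect the delicate point to be the moment bookkeeping in parts $(b)$ and $(c)$. Assumption~\ref{ass5.2_g} only provides second moments of $\sup_\theta\Vert\partial g_t/\partial\theta\Vert$. Moreover, $\partial g_t(\theta_0)/\partial\theta$ depends on the whole past $I_{t-1}$, so $\varsigma_t$ is not independent across $t$ and the ``disjoint indices $\Rightarrow$ uncorrelated'' argument from $(a)$ no longer applies verbatim. My first attempt would be to exploit stationarity and ergodicity of $\{\varsigma_t\}$ together with a truncation of $\partial g/\partial\theta$ at a level $K$. The truncated part is bounded and is handled as in $(a)$, with asymptotic independence coming from Assumption~\ref{assumption1_beta}-type mixing or directly from the ergodic theorem. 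The tail part is controlled in $L^1$ uniformly in the arguments, using the bounded kernel derivatives and $\mathbb{E}\{\Vert\partial g_t/\partial\theta\Vert\,\mathbf{I}(\Vert\partial g_t/\partial\theta\Vert>K)\}\to0$. Finally, the time-index coincidences introduced by the lag shift $j-m$ again cost only $O(T^{-1})$.
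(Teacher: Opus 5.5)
\textbf{Part (a).} Your argument is sound. Under $H_0$ the two random pairs are independent $P\otimes P$ draws except on an $O(T^{-1})$ fraction of index tuples. Boundedness of $k,l$ then gives an $O(T^{-1})$ variance. The deterministic Lipschitz constant from Assumption~\ref{ass5.5_kernel}, combined with a finite net, upgrades this to uniformity on $\Omega_1$. The paper gives no argument of its own here; it imports the lemma from Lemmas~A.4--A.5 of \cite{wang2021new}. Your self-contained treatment of (a) is therefore a legitimate substitute.

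\textbf{Parts (b) and (c).} These are not established by what you wrote. Step~1 appeals to the law of large numbers for U-statistics of $m$-dependent samples, but $\{\varsigma_t\}$ is not $m$-dependent. The factor $\partial g_t(\theta_0)/\partial\theta$ is a function of the whole past $I_{t-1}$; in an AR(1) it is $-X_{t-1}$. You notice this at the end, but the proposed repair does not close the gap:
\begin{itemize}
\item the truncated part cannot be ``handled as in (a)'', because (a) rests on independence of summands with disjoint index sets;
\item Assumption~\ref{assumption1_beta} is not a hypothesis of the lemma;
\item ergodicity is not among the assumptions, and a single-index ergodic theorem does not by itself give convergence of a fourfold V-average to a product-measure expectation.
\end{itemize}

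\textbf{The missing idea.} Parts (b) and (c) need no limit theory at all, because both sides vanish identically. In $h^{(23)}$, $\bar k_{tu}$ depends only on the current-time parts ($\eta$ and $\partial g/\partial\theta$) of the arguments in positions $t,u$. Likewise $\bar l_{vw}$ depends only on the lagged parts of positions $v,w$.

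In (b) the current-time parts are the fixed $x_s$, so $h^{(23)}$ is a sum of terms $\bar k(x_t,x_u)\{\bar l_{tu}+\bar l_{vw}-2\bar l_{tv}\}^{\top}$. Because $i,j,q,r$ all run over the same index set, the average of $\bar l$ over any two distinct positions is the same quantity $S_T=(T-m)^{-2}\sum_{a,b=m+1}^{T}\bar l(\varsigma_{a-m},\varsigma_{b-m})$. Hence the bracket averages to $S_T+S_T-2S_T=0$ exactly, for every $T$ and every $(x_1,\ldots,x_4)$. The expectation with i.i.d.\ copies is $\mu+\mu-2\mu=0$ for the same reason.

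In (c) the random factor $\bar k_{tu}$ averages to a common $R_T$ for every pair of distinct positions. Moreover, over the $4!$ permutations, $\sum_{(t,u,v,w)}\{\bar l(y_t,y_u)+\bar l(y_v,y_w)-2\bar l(y_t,y_v)\}=0$, since each ordered pair of distinct positions occurs exactly twice in each of the three sums. The population version vanishes identically for the same reason.

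So the suprema in (b) and (c) are $0$, and your ``main obstacle'' disappears. Even without this cancellation, your fourth-moment concern was misplaced. With one side frozen, each term of $h^{(23)}$ carries only one random derivative factor, so the second moments in Assumption~\ref{ass5.2_g} would have sufficed.
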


\begin{lemma}\label{lemma a.4 in wang}
    Suppose that Assumptions~1, 6--9 hold. Then, under $H_0$,

    $(a)$ $\sup _{\Omega_1}\left|h_2^* \left(z_1, z_2\right)-h_2 \left(z_1, z_2\right)\right|=o_p(1)$,
    where $\Omega_1, h_2 \left(z_1, z_2\right)$ and $h_2^* \left(z_1, z_2\right)$ are defined as in Lemma~\ref{lemma a.3 in wang} $(a)$, Lemma~\ref{lemma_Hoeffding} and \eqref{h2^star}, respectively;

    $(b)$ $\left|\Lambda^{(23 *)}-\Lambda^{(23)}\right|=o_p(1)$, where $\Lambda^{(23)}$ and $\Lambda^{(23 *)}$ are defined as in \eqref{Lambda23} and \eqref{Lambda^23star}, respectively.
\end{lemma}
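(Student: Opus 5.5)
The statement to prove is Lemma~\ref{lemma a.4 in wang}: the bootstrap analogues $h_2^*$ and $\Lambda^{(23*)}$ converge to their population counterparts. Each starred quantity is an expectation under the bootstrap law. The bootstrap innovations $\widehat\eta_t^*$ are drawn i.i.d. (after standardisation) from the empirical distribution of the residuals $\{\widehat\eta_t\}_{t=1}^T$. So each starred quantity is exactly a V-statistic average over residuals.

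\textbf{Part $(a)$.} Write
\[
h_2^*(z_1,z_2)=\frac{1}{(T-m)^4}\sum_{i,j,q,r} h\bigl(z_1,z_2,(\widehat\eta_i,\widehat\eta_{j-m}),(\widehat\eta_q,\widehat\eta_{r-m})\bigr),
\]
up to the negligible effect of standardisation and boundary indices. Since $\widehat\eta_t^*$ and $\widehat\eta_{t-m}^*$ are independent draws, the pairs are formed from independent indices. The proof then splits into two comparisons.

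\emph{Residuals to innovations.} First compare this average with the same average evaluated at the true innovations $\eta_t$. By Assumption~\ref{ass5.5_kernel}, $k$ and $l$ are bounded and Lipschitz, so
\[
|h(\cdots\widehat\eta\cdots)-h(\cdots\eta\cdots)|\le C\sum_{s}\Vert\widehat\eta_{t_s}-\eta_{t_s}\Vert,
\]
uniformly in $(z_1,z_2)$. Next, $\widehat\eta_t-\eta_t=\{\widehat g_t(\widehat\theta_T)-g_t(\widehat\theta_T)\}+\{g_t(\widehat\theta_T)-g_t(\theta_0)\}$. A mean-value expansion with Assumption~\ref{ass5.2_g} and $\sqrt T$-consistency (Assumption~\ref{ass5.3_esti}) makes $T^{-1}\sum_t\Vert g_t(\widehat\theta_T)-g_t(\theta_0)\Vert=O_p(T^{-1/2})$. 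Assumption~\ref{ass5.4_trucation} with H\"older's inequality makes $T^{-1}\sum_t\Vert\widehat R_t(\widehat\theta_T)\Vert=o_p(1)$. Standardisation changes location and scale by $o_p(1)$, which is handled by the same Lipschitz bound. Hence the residual-based average differs from the innovation-based one by $o_p(1)$ uniformly in $(z_1,z_2)$.

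\emph{Innovations to the population quantity.} The innovation-based average converges to $h_2(z_1,z_2)$ uniformly on the compact set $\Omega_1$ by Lemma~\ref{lemma a.3 in wang}$(a)$.

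\textbf{Part $(b)$.} The quantity $\Lambda^{(23*)}$ is the analogous V-statistic for $h^{(23)}$. This kernel involves $\partial g_t/\partial\theta$ and the first derivatives $k_x,k_y,l_x,l_y$, evaluated at bootstrap residuals and at the bootstrap derivative objects $\widehat\varsigma^*_{t,m}$. The plan has three steps.
\begin{enumerate}
\item Replace $\widehat\theta_T$ by $\theta_0$ inside the derivatives $\partial\widehat g_t/\partial\theta$. This uses the second-derivative moment bound in Assumption~\ref{ass5.2_g}, a mean-value argument, and Cauchy--Schwarz.
\item Replace the residuals by the innovations inside $k_x,l_x$ and related terms, using their Lipschitz property.
\item Invoke Lemma~\ref{lemma a.3 in wang}$(b)$,$(c)$ to reach $\Lambda^{(23)}$.
\end{enumerate}

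\textbf{Main obstacle.} The main difficulty is that $h^{(23)}$ is unbounded, because it is multiplied by the derivative factors $\partial g/\partial\theta$. The uniform statements in Lemma~\ref{lemma a.3 in wang} hold only on compact sets $\Omega_2,\Omega_3$. The first attempt is a truncation argument.
\begin{enumerate}
\item Split each sum according to whether $\Vert\eta\Vert\le K_0$.
\item On the bounded part, use the uniform convergence supplied by Lemma~\ref{lemma a.3 in wang}.
\item On the complement, bound the contribution by
\[
C\,\mathbb{E}\bigl\{\sup_\theta\Vert\partial g_t/\partial\theta\Vert^2\bigr\}^{1/2}\Pr(\Vert\eta_t\Vert>K_0)^{1/2},
\]
using boundedness of the kernel derivatives. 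This can be made small by choosing $K_0$ large, and a law of large numbers for the $m$-dependent sequence transfers the bound to the empirical averages.
\end{enumerate}

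A further complication is that Lemma~\ref{lemma a.3 in wang} fixes some coordinates while averaging over others. Decoupling the pair $(\varsigma_i,\varsigma_{j-m})$ into the mixed form used in $(b)$ and $(c)$ requires conditioning on one coordinate and integrating over the other. This is done via the dominated convergence theorem together with that truncation.
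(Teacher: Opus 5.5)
\textbf{Part $(a)$.} This part is fine. The bootstrap innovations are i.i.d.\ draws from the standardised residuals, so $h_2^*$ is exactly a V-statistic in the residuals. The bounded Lipschitz kernels of Assumption~\ref{ass5.5_kernel} bound its distance to the innovation-based average, uniformly in $(z_1,z_2)$, by $C\,T^{-1}\sum_t\Vert\widehat\eta_t-\eta_t\Vert$, plus a standardisation effect. That quantity is $o_p(1)$ by Assumptions~\ref{ass5.2_g}--\ref{ass5.4_trucation}, and Lemma~\ref{lemma a.3 in wang}$(a)$ finishes the argument. The paper itself gives only a citation to Lemma~A.5 of Wang et al., so this is a reasonable self-contained version.

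\textbf{Part $(b)$: the gap.} Your argument rests on a representation that is not available. You treat $\Lambda^{(23*)}$ as a V-statistic over the observed pairs $(\widehat\eta_t,\partial\widehat g_t(\widehat\theta_T)/\partial\theta)$, apply Step~1 to these original-sample derivatives, and then invoke Lemma~\ref{lemma a.3 in wang}$(b)$,$(c)$. In Algorithm~\ref{algo2_res}, however, only the innovations are resampled. The derivative components of $\widehat\varsigma^*_{t,m}$ are $\partial g(X_t^*,\widehat I^*_{t-1},\widehat\theta_T)/\partial\theta$, recomputed along the regenerated series; these are what enter $V^{(23*)}_{T,m}$ when $\widehat\eta^{**}_t$ is expanded around $\widehat\theta_T$. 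Their $\Pr^*$-law is the push-forward of the resampled innovations through the model recursion, not the empirical law of $\{\partial\widehat g_t(\widehat\theta_T)/\partial\theta\}$. Lemma~\ref{lemma a.3 in wang}$(b)$,$(c)$ concern averages over the observed $\{\varsigma_t\}$, so they do not apply. To make your route work you would need:
\begin{itemize}
\item a continuity/stability property of the model in the parameter and in the innovation law;
\item uniform integrability of $\Vert\partial g(X_t^*,\widehat I^*_{t-1},\widehat\theta_T)/\partial\theta\Vert^2$ under $\Pr^*$.
\end{itemize}
Neither follows from Assumptions~\ref{ass5.2_g}--\ref{ass5.5_kernel}. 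Even on your reading, Step~1 would also need control of $\partial\widehat R_t/\partial\theta$, which Assumption~\ref{ass5.4_trucation} does not give.

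\textbf{A direct route for $(b)$.} No limit argument is needed, because both sides vanish identically.
\begin{itemize}
\item For the $s$th i.i.d.\ copy of $\varsigma_{t,m}$, write $A_s=(e_s,D_s)$ for its time-$t$ half and $B_s=(e_s',D_s')$ for its time-$(t-m)$ half.
\item By symmetry of $k$ and $l$, $\bar k_{12}=f(A_1,e_2)+f(A_2,e_1)$ with $f(A_s,e)=D_s\,k_x(e_s,e)$. Likewise $\bar l_{12}=u(B_1,e_2')+u(B_2,e_1')$ with $u(B_s,e)=D_s'\,l_x(e_s',e)$.
\item Since $h^{(23)}$ has the HSIC form $k_{12}(l_{12}+l_{34}-2l_{13})$, the algebra behind \eqref{hsic_centered} gives $\Lambda^{(23)}=\mathbb{E}\{\widetilde k_{12}\widetilde l_{12}^{\top}\}$. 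Here $\widetilde k_{12}=\phi(A_1,e_2)+\phi(A_2,e_1)$ and $\widetilde l_{12}=\psi(B_1,e_2')+\psi(B_2,e_1')$, with $\phi(A,e)=f(A,e)-\mathbb{E}_e f(A,e)-\mathbb{E}_A f(A,e)+\mathbb{E}f$, integrating one argument at a time. Thus $\mathbb{E}_e\phi(A,e)=0$ for every $A$, and similarly for $\psi$.
\item In each of the four products, one of $e_1,e_2$ enters only as the second argument of $\phi$. It is independent of everything else in that product: of the other copy, and of its own copy's $B_s$. The latter holds because $\eta_t$ is independent of $\mathcal F_{t-1}$, while $(\eta_{t-m},\partial g_{t-m}(\theta_0)/\partial\theta)$ is $\mathcal F_{t-m}$-measurable.
\item Integrating that variable out kills every term, so $\Lambda^{(23)}=0$ under $H_0$. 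Assumption~\ref{ass5.2_g} supplies the integrability.
\item The same argument under $\Pr^*$, where $\widehat\eta^*_t$ is independent of the bootstrap past, gives $\Lambda^{(23*)}=0$ under either reading of $\widehat\varsigma^*_{t,m}$.
\end{itemize}
So $(b)$ holds trivially, and your truncation and decoupling steps are unnecessary.

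This also shows that $\Lambda^{(23)}$, as literally defined in \eqref{Lambda23}, is zero. That sits uneasily with the AR illustration after Theorem~\ref{thm5.1_nulldis_esti}, which appears to evaluate the expectations along consecutive time points rather than over independent copies.
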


\begin{proof}[\bf Proof of Theorem~7.]

We first prove the part $(b)$.
By Assumption~10 and 11, $\sqrt{T} \zeta_T^*=O_p^*(1)$ in probability.
Since $\left\{\widehat{\eta}_t^*\right\}_{t=1}^T$ is an i.i.d. sequence conditional on $\{X_t\}_{t=1}^T$, a similar argument as for Lemma~\ref{lemma_expansion} implies that
\begin{equation}\label{expansion boot}
    \begin{aligned}
\widehat{V}_{T,m}^*
=& V_{T,m}^{(0*)}+ \zeta_T^{*\top} V_{T,m}^{(11*)} +\zeta_T^{*\top} V_{T,m}^{(12*)} +\frac{1}{2} \zeta_T^{*\top} V_{T,m}^{(21*)} \zeta_T^* \\
 &+\frac{1}{2} \zeta_T^{*\top} V_{T,m}^{(22*)} \zeta_T^*+\zeta_T^{*\top} V_{T,m}^{(23*)} \zeta_T^* +R_{T,m}^* ,
\end{aligned}
\end{equation}
where 
$V_{T,m}^{(0*)}$, $V_{T,m}^{(ab*)}$ and $R_{T,m}^*$ are defined analogously to $V_{T,m}$, $V_{T,m}^{(ab)}$ and $R_{T,m}$, respectively, with $\eta_{t,m}$ and $\varsigma_{t,m}$ replaced by $\widehat{\eta}_{t,m}^*$ and $\widehat{\varsigma}_{t,m}^*$.

Moreover, by arguments similar to those in Lemma~\ref{lemma_degenerate} $(a)$ and the proof of Theorem~2, we obtain
\begin{equation}\label{VTm_L_expansion boot}
\begin{aligned}
    T V_{T,m}^{(0*)}=&\frac{T}{T-m-1} \sum_{\ell=1}^\infty \lambda_{\ell}^* \left[\{\frac{1}{\sqrt{T-m}} \sum_{i=m+1}^T \Phi_{\ell}^* \left(\widehat{\eta}_{i,m}^* \right)\}^2 \right. \\ 
    &\left. -\frac{1}{T-m} \sum_{i=m+1}^T \Phi_{\ell}^{*2}\left(\widehat{\eta}_{i,m}^* \right)\right] + o_p^*(1),
\end{aligned}
\end{equation}
where $\{\lambda_{\ell}^* \}_{\ell=1}^{\infty}$ and $\{\Phi_{\ell}^*(\cdot)\}_{\ell=1}^{\infty}$ are sequences of non-zero eigenvalues and orthonormal eigenfunctions of $h_2^*(z_1, z_2)$ in \eqref{h2^star}.
Furthermore, by \eqref{expansion boot}, \eqref{VTm_L_expansion boot}, Assumption~11, and similar arguments as for Lemmas~\ref{lemma_degenerate} $(b)$-$(c)$ and \ref{lemma_a.1}, we can show that
\begin{equation}\label{VT_hat star}
\begin{aligned}
T  \widehat{V}_{T,m}^* = &\sum_{\ell=1}^{\infty} \lambda_\ell^* [\{\frac{1}{\sqrt{T-m}} \sum_{i=m=1}^{T} \Phi_\ell ^*\left(\widehat{\eta}_{i,m}^* \right)\}^2 
-\frac{1}{T-m} \sum_{i=m+1}^T \Phi_{\ell}^{*2}\left(\widehat{\eta}_{i,m}^* \right) ]\\
&+(\sqrt{T-m} \zeta_T^{* \top}) \Lambda^{(23 *)}(\sqrt{T-m} \zeta_T^*) +o_p^*(1)\\
=& O_p^*(1), \text{ in probability}.
\end{aligned}
\end{equation}

It follows by a similar argument that $T  P_{T,M}^* = O_p^*(1)$ in probability.

For part $(a)$, let $\mathcal{G}^*_{1 t}= \left( \Phi_\ell^* (\widehat{\eta}_{t,m}^*)\right)_{\ell=1,\ldots,L; m=1,\ldots,M} \in \mathbb{R}^{LM}$,  
$\mathcal{G}_{2 t}^* = \pi_t^*$, and
\begin{equation}\nonumber
    \mathcal{G}_T^* = (\frac{1}{\sqrt{T-m}} \sum_{t=m+1}^{T} \mathcal{G}_{1 t}^{*\top}, \frac{1}{\sqrt{T-m}} \sum_{t=m+1}^{T} \mathcal{G}_{2 t}^{*\top} )^{\top},
\end{equation}
where $\pi_t^*$ is defined in Assumption~10.
Then, let $\mathcal{G}_t^*=\left(\mathcal{G}_{1 t}^{* \top}, \mathcal{G}_{2 t}^{* \top}\right)^\top$. 
As for Lemma \ref{lemma_clt}, by Assumption~10 and 11, we can show the following joint convergence result 
\begin{equation}\nonumber
\mathcal{G}_T^* \xrightarrow{\mathrm{d}^*} \mathcal{G} \;\; \text{in probability},
\end{equation}
as $T \to \infty$, where $\mathcal{G}$ is defined as in Lemma~\ref{lemma_clt}.

Next, by Lemma \ref{lemma a.4 in wang} $(a)$ and Corollary XI.9.4 $(a)$ in \cite{dunford1963}, we obtain
\begin{equation}\label{lambda approximation}
    \left|\lambda_\ell^*-\lambda_\ell \right|=o(1).
\end{equation}
Therefore, the conclusion holds by \eqref{VT_hat star}--\eqref{lambda approximation}, Lemma \ref{lemma a.4 in wang} $(b)$, and the continuous mapping theorem. 
\end{proof}

\newpage
\bibliographystyle{apalike}
\bibliography{ref}

\end{document}